\theoremstyle{definition}
\newtheorem{definition}{Definition}
\newtheorem{example}[definition]{Example}
\newtheorem{lemma}[definition]{Lemma}
\newtheorem{proposition}[definition]{Proposition}
\newtheorem{theorem}[definition]{Theorem}
\newtheorem{corollary}[definition]{Corollary}
\newtheorem{conjecture}[definition]{Conjecture}
 \crefname{equation}{Eq.}{Eqs.}
\newcommand\Odd[1]{\text{Odd}\left( #1  \right) }
\DeclareMathOperator{\supp}{supp}
\DeclareMathOperator{\id}{Id}
\DeclareMathOperator{\LC}{LC}
\DeclareMathOperator{\LU}{LU}
\DeclareMathOperator{\Id}{Id}
\DeclareMathOperator{\rk}{rank}
\DeclareMathOperator{\CutRank}{CutRank}
\DeclareMathOperator{\1}{\textbf{1}}
\DeclareMathOperator{\0}{\textbf{0}}
\newcommand{\defeq}{\mathrel{\mathop:}=} 
\newcommand{\Pn}[0]{\mathcal{P}_n}
\newcommand{\St}[0]{\mathcal{S}}
\newcommand{\Sg}[0]{\mathcal{G}}
\newcommand{\X}[0]{\mathcal{X}}
\newcommand{\Z}[0]{\mathcal{Z}}
     \renewcommand\@make@capt@title[2]{%
      \@ifx@empty\float@link{\@firstofone}{\expandafter\href\expandafter{\float@link}}%
       {\textbf{#1}}\@caption@fignum@sep#2\quad}%
\renewcommand{\fnum@figure}{\textbf{Figure~\thefigure}}
\definecolor{violet}{HTML}{53257F} 
\definecolor{green}{HTML}{257a7f}
\definecolor{green2}{HTML}{527f27}
\definecolor{brown}{HTML}{852e29}
\colorlet{nodecolor}{black}
\colorlet{fadeoutnodecolor}{nodecolor!50}
\colorlet{edgecolor}{black}
\colorlet{fadeoutedgecolor}{edgecolor!50}
\colorlet{edgecolorU}{edgecolor}
\colorlet{edgecolorV}{fadeoutedgecolor}
\colorlet{edgecolorW}{fadeoutedgecolor}
\colorlet{circlecolor}{brown}
\newcommand{\xor}{\mathrel{\stackinset{c}{}{c}{-0.15ex}{$ \vee $}{$ \bigcirc $}}}
\newcommand\restr[2]{{
  \left.\kern-\nulldelimiterspace 
  #1 
  \littletaller 
  \right|_{#2} 
  }}
 \newcommand{\littletaller}{\mathchoice{\vphantom{\big|}}{}{}{}}
\tikzset{mycolor/.style = {line width=1bp,color=#1}}%
\tikzset{myfillcolor/.style = {draw,fill=#1}}%
\def\blfootnote{\xdef\@thefnmark{}\@footnotetext}
\newlength{\dhatheight}
\begin{document}
\title{Algorithm to Verify Local Equivalence of Stabilizer States}
\author{\orcidlink{0000-0003-0418-257X}~Adam Burchardt}
\email[]{adam.burchardt.uam@gmail.com}
\address{QuSoft, CWI and University of Amsterdam, Science Park 123, 1098 XG Amsterdam, the Netherlands}
\author{Jarn de Jong}
\affiliation{Electrical Engineering and Computer Science Department, Technische Universit{\"a}t Berlin, 10587 Berlin, Germany}
\author{Lina Vandr\'e}
\affiliation{Naturwissenschaftlich-Technische Fakult\"at, Universit\"at Siegen, Walter-Flex-Stra\ss e 3, 57068 Siegen, Germany}
\date{\today}

\begin{abstract}
\noindent
We present an algorithm for verifying the local unitary (LU) equivalence of graph and stabilizer states. Our approach reduces the problem to solving a system of linear equations in modular arithmetic. 
Furthermore, we demonstrate that any LU transformation between two graph states takes a specific form, naturally generalizing the class of local Clifford (LC) transformations. Lastly, using existing libraries, we verify that for up to $n=11$, the number of LU and LC orbits of stabilizer states is identical.
%
\end{abstract}

\maketitle

\section{Introduction}

Stabilizer states are a key class of pure multipartite states, which are powerful resources for various quantum tasks and are efficiently implementable \cite{nielsen_chuang_2010,gottesman1997stabilizer,PhysRevLett.91.107903,PhysRevA.68.022312,PhysRevA.67.022310}. A notable subclass of stabilizer states are graph states, which can be visually represented by graphs: vertices correspond to qubits, and edges represent controlled two-qubit operations used in their experimental realization on quantum computers~\cite{PhysRevA.69.062311,graphstatesIBMQ,9866745,briegel2009measurement}. 


Entanglement properties are independent of the local basis which motivates to study local unitary (LU) equivalence of states. LU equivalent states are useful for exactly the same tasks. 
For graph states it has turned out that two different graphs may describe the same state up to LU operations \cite{vandennestGraphicalDescriptionAction2004}.

A natural subclass of LU operations on both stabilizer and graph states are local Clifford (LC) operations, which are fault-tolerant and easy to implement, making them well-suited for current quantum hardware~\cite{vandennestGraphicalDescriptionAction2004,Clifford,ketkarNonbinaryStabilizerCodes2006,BurchardtRaissi20,RaissiBurchardt22}. Additionally, there is an efficient algorithm for verifying LC equivalence between two stabilizer states, with a complexity of $\mathcal{O}(n^4)$ in the number of qubits $n$~\cite{BOUCHET199375,AlgorithmLULC,Bouchet1991AnEA}.

For a long time, it was conjectured that all local unitary (LU) transformations on stabilizer states could be implemented via LC operations \cite{LULCsupport,PLUvsLC}. This hypothesis, known as the  ``LU=LC conjecture'', represents a key open problem concerning the entanglement structure of stabilizer states \cite{krueger2005openproblemsquantuminformation}. Early work supported the conjecture for large subclasses of stabilizer states and linked it to quadratic forms~\cite{PhysRevA.71.022310}. It was shown to hold for graph states of up to 9 qubits \cite{cabelloEntanglementEightqubitGraph2009,Lina2024}. However, a numerical search uncovered a counterexample involving 27 qubits, which remains the smallest known example of such graphs to date \cite{LULCfalse}. Subsequently, a more systematic approach for constructing pairs of graph states that are LU-equivalent but not LC-equivalent was proposed~\cite{LUnoLC,claudet2024localequivalencestabilizerstates}. 


\begin{figure}[h]
\centering
\includegraphics[width=0.49\textwidth]{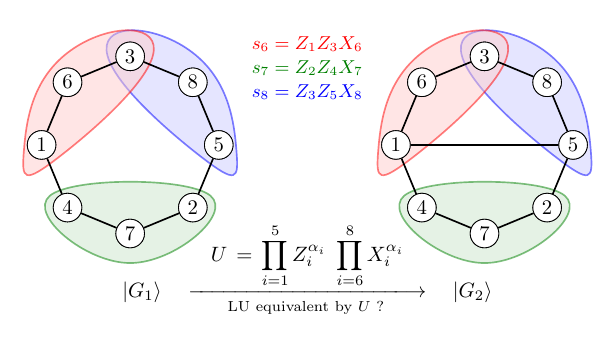}
\caption{Examples of two graph states that are not LU-equivalent. There exists an efficient algorithm exists to verify LC-equivalence between any two graph states. This paper introduces an algorithm to verify LU-equivalence between graph states, which appears efficient in practice. Using this algorithm, we verify that for $n \leq 11$, no instances of LU- but not LC-equivalent graph states are found. The algorithm relies on the following observations: using minimal local sets and corresponding stabilizers, we show that any LU-equivalence must have a very specific form. Subsequently, verification of the existence of such an equivalence can be reduced to solving a linear system of equations.}
\label{fig:intro}
\end{figure}

In this paper, we introduce an algorithm to verify LU equivalence between any two stabilizer states. Since any stabilizer state is LC equivalent to a graph state \cite{vandennestGraphicalDescriptionAction2004}, we can restrict our problem to determining LU equivalence between graph states. Our algorithm is constructive: for two input graphs $G$ and $G'$, it either outputs the exact form of the LU equivalence between the corresponding graph states $\ket{G}$ and $\ket{G'}$, or returns ``NO'' if the states are not LU equivalent. It builds upon several recent results concerning graph states \cite{LocalSets1,foliage,claudet2024covering,Lina2024}. While the algorithm appears borderline efficient, we do not yet have a formal proof of this statement. 

As a consequence, we show that any countrexample to ``LU=LC conjecture'' must be of a specific form. 
Moreover, we show that the LU transformations between two LU equivalent graph states take a specific form that naturally generalizes the class of local Clifford (LC) operations. This result paves the way for fault-tolerant implementations of LU equivalences for both graph and stabilizer states. 
Moreover, by utilizing an existing database containing all unlabeled LC-orbits for graphs with up to $n=11$ vertices, we applied our algorithm to graphs from different orbits and verified that graphs from distinct LC-orbits are not LU-equivalent. Consequently, we confirm that the number of LC-orbits and LU-orbits of unlabeled graphs is identical for $n\leq 11$ vertices.


The paper is organized as follows:
\cref{sec:Preliminaries} recalls the concept of graph states, stabilizer states, and hypergraph states, and it presents prior results concerning their equivalence. 
Furthermore, it introduces the concept of minimal local sets. 
\cref{sec:Necessary conditions} further investigates notions of local sets and based on them establish some necessary conditions for two graph states being LU equivalent. 
Those conditions are further used in \cref{sec:StepI}, which reduces the problem of finding an LU transformation between graph states related to arbitrary graphs to a simpler problem where  the initial graphs and the form of LU transformation are very specific.
\cref{sec:StepII} shows how to bring such a reduced problem to a linear system of equations in modular arithmetic. 
Solvability and further simplification of such a system are discussed in greater detail in \cref{AppB}. 
\cref{Algorithm_outline} connects results obtained in previous sections and outlines the algorithm that verifies LU equivalence between graph states. 
\cref{sec:numerics} contains a numerical analysis of LU equivalent graph states up to $n=11$ qubits. 
\cref{Discussion} contains concluding remarks and open questions. 

\section{Stabilizer states and Graph states}
\label{sec:Preliminaries}


In this section we present a short introduction of concepts relevant for this paper and fix notation. The introduction includes stabilizer, graph, and hypergraph states, as well as local Clifford equivalence of states, entanglement properties of graph states, marginal states, and the notion of local sets.

\subsection{Stabilizer states}
\noindent

An interesting family of multipartite entangled states are \textit{stabilizer states}. These are states which are +1 eigenstates of operators of the \textit{Pauli group}.
Recall that the $n$-th \textit{Pauli group}
\begin{equation}
\label{eq:PauliGroup}
{\displaystyle \mathbf{P}_{n}=\left\{{\begin{array}{c}\phi \cdot A_{1}\otimes \ldots \otimes A_{n}:
\,
A_{j}\in \mathbf{P} ,\ 
\,
\phi=\pm 1,\pm i \end{array}}\right\}}
\end{equation}
consists of $n$-fold tensor products of Pauli operators, i.e. $\mathbf{P}~=~\{\Id,X,Y,Z\}$, possibly multiplied by a scalar. The scalar $\phi$ is called the \textit{phase} of a Pauli string. By definition, the Pauli group is a subgroup of the group of local unitary operators $U_2^{\otimes n}$.

The \textit{stabilizer group} $\St_{\ket{\psi}}$ of an $n$-qubit state $\ket{\psi}$ is the group of Pauli operators $S\in\mathcal{P}_{n}$ of which $\ket{\psi}$ is $+1$ eigenstate. 
For any state, the stabilizer group $\St_{\ket{\psi}}$ is of order $|\St_{\ket{\psi}}|=2^k$ for some $k=0,\ldots,n$, and is generated by $k$ elements. 
If, for a given state $\ket{\psi}$, the number $k=n$, the state $\ket{\psi}$ is called \textit{stabilizer state}. 
For such states, the density matrix has a convenient expression in terms of their stabilizers:
\begin{equation}
\label{eq:stab_state_in_stabilizer}
\dyad{\psi} =
\dfrac{1}{2^n}
\sum_{S\in \St_{\ket{\psi}}} S.
\end{equation} 
The stabilizer group $\St_{\ket{\psi}}$ of a stabilizer state $\ket{\psi}$ is generated by $n$ elements $ \St_{\ket{\psi}}=\langle s_1,\ldots,s_n\rangle$ for some $s_i\in \St_{\ket{\psi}}$. We denote by $\Sg = \{ s_1,\ldots,s_n\}$ the \textit{set of generators} of stabilizer $\St_{\ket{\psi}}$. Notice that the choice of generators $\Sg$ of a stabilizer set is not unique. 

For example, the stabilizer group of the 3 qubit GHZ state $\ket{GHZ_3} = \tfrac{1}{\sqrt{2}}(\ket{000}+\ket{111})$ \cite{greenberger2007going} is generated by three independent elements:
\begin{align}
    \St_{\ket{GHZ_3}} &= \langle X^{(1)} X^{(2)} X^{(3)}, Z^{(1)} Z^{(2)} \id^{(3)}, Z^{(1)} \id^{(2)} Z^{(3)} \rangle  \notag \\ 
    &\eqqcolon \langle \Sg \rangle \label{eq:GHZstab}
\end{align}
and hence consists of $2^3$ elements. Here we use the notation $X^{(1)} X^{(2)} X^{(3)} :=X^{(1)}\otimes X^{(2)} \otimes X^{(3)}$. The GHZ state is a stabilizer state for all number of qubits $n$.

Recall that any stabilizer generator $\Sg=\langle s_1,\ldots,s_n\rangle$  can be associated to the check matrix, an $n\times 2n$ matrix whose rows correspond to the generators $ s_1$ through $s_n$, the left hand side of the matrix contains $1$s to indicate which generators contain $X$ on given position, and the right hand side contains which generators contain $Z$ on given position, the presence of a $1$ on both sides indicates a $Y$ in the
generator. More explicitly, the check matrix is of the following form
\[
\begin{bNiceArray}{c|c}
\X_\Sg&\Z_\Sg \\
\end{bNiceArray}
\]
where matrices $\X_\Sg=(x_{ij})_{i,j=1}^n$ and $\Z_\Sg=(z_{ij})_{i,j=1}^n$ are binary matrices defined by
\begin{equation*}
x_{ij}\defeq \begin{cases}
  1  & s_{i\downarrow j} =X, \, Y \\
  0 & \text{otherwise}
\end{cases},
\quad
z_{ij}\defeq \begin{cases}
  1  & s_{i\downarrow j} =Z, \, Y \\
  0 & \text{otherwise}
\end{cases},
\end{equation*}
where $s_{i\downarrow j}$ denotes $j$th element of operator $s_i$. For example the check matrix of the generator $\Sg$ given in \cref{eq:GHZstab} is given by
\begin{equation}
\begin{bNiceArray}{c|c}
\X_\Sg&\Z_\Sg \\
\end{bNiceArray}
=
\begin{bNiceArray}{c c c |c c c}
1 & 1 & 1 & 0 & 0 & 0 \\
0 & 0 & 0 & 1 & 1 & 0 \\
0 & 0 & 0 & 1 & 0 & 1
\end{bNiceArray}. \label{eq:GHZcheckM}
\end{equation}

If the check-matrix $[\X_\Sg | \Z_\Sg ]$ is in the following from:
\begin{equation}
\label{eq:stabilizer_generator_normal_form}
\X_\Sg=
\begin{bNiceArray}{c|c}
\Block{1-1}{\id_k}&\Gamma_{k|n-k}^{}\\
\cline{1-2}
\Block{1-1}{0}&\Block{1-1}{0}\\
\end{bNiceArray},
\quad
\Z_\Sg=
\begin{bNiceArray}{c|c}
\Block{1-1}{\Gamma_{k|k}}&\Block{1-1}{0}\\
\cline{1-2}
\Block{1-1}{\Gamma_{k|n-k}^{\text{T}}}&\Block{1-1}{\id_{n-k}}\\
\end{bNiceArray},
\end{equation}
for some symmetric $k\times k$ matrix $\Gamma_{k|k}^{}$ and arbitrary $k\times n-k$ matrix $\Gamma_{k|n-k}^{}$, we say that it is in a \textit{normal form}. 
If the symmetric matrix $\Gamma_{k|k}^{}$ does not have diagonal elements, we say that it is in a \textit{strong normal form}. 
The check matrix of the GHZ state in \cref{eq:GHZcheckM} is in a strong normal form, since  we have $\Gamma_{k|k}^{} = [0]$ and $\Gamma_{k|n-k}^{} = [1 \  1 ]$. 
Up to a permutation of the qubits, for any stabilizer one can always find a set of generators such that the associated check matrix is in a normal form, see \cref{appendix:A} for details. 
Recall that any stabilizer generator $\Sg=\langle s_1,\ldots,s_n\rangle$ is uniquely determined by the corresponding check matrix and the \textit{phase vector} $\phi=(\phi_j)_{j=1}^n$, where $\phi_j=\pm 1,\pm i$ is a scalar factor in a Pauli string $s_j$.  

\subsection{Local Clifford equivalence}

Stabilizer states are usually considered under local unitary or local Clifford equivalence~\cite{nielsen_chuang_2010}. 
The \textit{local Clifford group}
\begin{equation}
{\displaystyle \mathbf {C} _{n}^{\text{loc}}=\{V\in U_{2}^{\otimes n}\mid V\mathbf {P} _{n}V^{\dagger }=\mathbf {P} _{n}\}}
\end{equation}
is the local normalizer of the $n$-th Pauli group, i.e. the largest group preserving the Pauli group $\mathbf {P} _{n}$ that is also local. 

For example, the single-qubit Clifford group $\mathbf{C}_1$ has exactly $24$ elements and is generated by the Hadamard gate $H=1/\sqrt{2}\begin{psmallmatrix}1 & 1\\1 & -1\end{psmallmatrix}$, the phase gate $S=\begin{psmallmatrix}1 & 0\\0 & i\end{psmallmatrix}$ together with the Pauli $X$ matrix. In fact, each element in $\mathbf{C}_1$ can be uniquely expressed as a matrix product $\mathbf{A}\mathbf{B}$, where $\mathbf{A}\in\{I,H,S,HS,SH,HSH\}$ and $\mathbf{B}=\{\id,X,Y,Z\}$. Similarly, $|\mathbf{C}_n|=24^n$. 
By definition, the local Clifford group  $\mathbf {C} _{n}^{\text{loc}}$ is a subgroup of the group of local unitary operators $U_{2}^{\otimes n}$.
Two stabilizer states are said to be \textit{LC equivalent} (locally Clifford equivalent) if they can be transformed into each other using only local Clifford operators $\mathbf {C} _{n}^{\text{loc}}$. 
Similarly, two stabilizer states are said to be \textit{LU equivalent} (locally unitary equivalent) if they can be transformed into each other using only local unitary operators $U_{2}^{\otimes n}$. 


For the purposes of this manuscript, it is useful to distinguish a special classes of matrices. 
Firstly, observe that for any two Pauli matrices $A,B\in \{X,Y,Z\}$, there exists a matrix, referred to as the \textit{transition matrix} and denoted by $\mathbf{C} (A,B) $ that transforms $A$ into $B$ and $B$ into $A$ while preserving other Pauli matrices up to the phase. For instance, $\mathbf{C} (X,Z)=H$ as the Hadamard matrix $H$ satisfies: $HXH^\dagger=Z, \, HZH^\dagger=X,\, HYH^\dagger=-Y$. We refer to \cref{appendix:0} for the exact form and some simple observations regarding transition matrices. 

Furthermore, for Pauli matrices, we define their powers in the following way:
\begin{align}
X^\alpha &:= e^{i \tfrac{\alpha}{2}\pi} \big( \text{cos} \tfrac{\alpha}{2}\pi \,\Id -i \,\text{sin} \tfrac{\alpha}{2} \pi \, X \Big)
\nonumber 
\\
Y^\alpha &:= e^{i \tfrac{\alpha}{2}\pi} \big( \text{cos} \tfrac{\alpha}{2}\pi \,\Id -i \,\text{sin} \tfrac{\alpha}{2} \pi \, Y \Big)
\label{eq:XYZ_alpha}
\\
Z^\alpha &:= e^{i \tfrac{\alpha}{2}\pi} \big( \text{cos} \tfrac{\alpha}{2}\pi \,\Id -i \,\text{sin} \tfrac{\alpha}{2} \pi \, Z \Big)
\nonumber 
\end{align}
Notice that for $\alpha=0$, we have $X^0=Y^0=Z^0=\Id$. For $\alpha=1$, we recover initial Pauli matrices, i.e. $X^1=X$ and similarly for $Y,Z$. Furthermore, the powers satisfy the periodicity property $X^{\alpha +2}=X^{\alpha}$ and similarly for $Y,Z$. Additionally, observe that 
$X^{\alpha},Y^{\alpha},Z^{\alpha}$ are Clifford matrices only for $\alpha=k/2$ for arbitrary $k\in \mathbb{Z}$.

\subsection{Graph states}


A graph $G=(V,E)$ consists of a finite set of vertices $V$ and of edges $E\subseteq V \times V$. A graph is called \textit{simple} if it contains neither edges connecting a vertex to itself nor multiple edges between the same vertices. A simple graph is uniquely characterized by its \textit{adjacency matrix} $\Gamma_G =(\gamma_{ij})_{i,j \in V}$, where $\gamma_{ij}=1$ if vertices $i,j$ are connected, and $\gamma_{ij}=0$ otherwise. For simplicity of notation, we consider graphs with the vertex set $V=[n]:=\{1,\ldots ,n\}$. The \textit{neighbourhood} $N_i:=\{k\in [n]: \gamma_{ik}=1\}$ of a vertex $i$ is the set of all vertices adjacent to $i$.

The \textit{graph state} $\ket{G}$ corresponding to a simple graph $G$ is defined as
\begin{equation}
\label{graph_state}
\ket{G} = \prod_{(v,w)\in E} \textbf{CZ}^{\{vw\}} \ket{+}^{\otimes \abs{V}},
\end{equation}
where $\textbf{CZ}^{\{vw\}} = \sum_{i,j =0}^1  (-1)^{ij} \ket{ij}_{vw} \bra{ij}$ is a controlled-$Z$ operator on qubits $v$ and $w$, and $\ket{+}= \frac{1}{\sqrt{2}} (\ket{0}+\ket{1})$. 
In particular, graph states are stabilizer states, with stabilizer generators of the following form:
\begin{equation}
\label{eq:stabilizer_of_graph_state}
s_i=
X^{(i)}\cdot\prod_{k\in N_i}Z^{(k)},
\end{equation}
for $1\leq i\leq n$ and where $A^{(i)}\in \Pn$ denotes operator $A$ acting on the $i$-th position, i.e. $A^{(i)}=\Id^{\otimes i-1}\otimes A \otimes \Id^{\otimes n-i}$, and $N_i$ denotes a neigbourhood of vertex $i$. 
Notice that the standard generators of a graph state $\ket{G}$ correspond to the following check matrix
\[
\begin{bNiceArray}{c|c}
\X_\Sg&\Z_\Sg \\
\end{bNiceArray}
=
\begin{bNiceArray}{c|c}
\Id_n&\Gamma_G \\
\end{bNiceArray},
\]
where $\Gamma_G$ is the adjacency matrix corresponding to $G$. 
Notice that this is a strong normal form (compare \cref{eq:stabilizer_generator_normal_form}) of the check matrix. 

\begin{figure}
    \centering
    \includegraphics[width=0.49\textwidth]{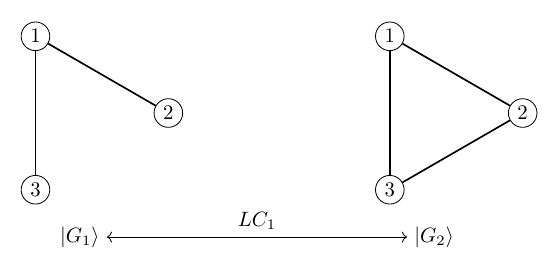}
    \caption{The 3 vertex star graph and the fully connected graph. Both corresponding graph states are LU equivalent to the 3 qubit GHZ state. The graphs are equivalent up to local complementation in vertex 1.}
    \label{fig:GHZ_example}
\end{figure}

A prominent example of graph states is the $n$ qubit GHZ state which can be represented by a star graph, as shown in \cref{fig:GHZ_example}. 
The graph state associated to the 3 qubit star graph is given by 
\begin{align}
\nonumber
    \ket{G_1} 
    &= \textbf{CZ}^{\{1,2\}}\textbf{CZ}^{\{1,3\}}\ket{+}^{\otimes 3} \\
    &= \id^{(1)} H^{(2)} H^{(3)} \ket{GHZ_3}.
\end{align}
The stabilizer generators are given by
\begin{align}
    s_1 &= X^{(1)} Z^{(2)} Z^{(3)}, \nonumber \\
    s_2 &= Z^{(1)} X^{(2)} \id^{(3)}, \\
    s_3 &= Z^{(1)} \id^{(2)} X^{(3)}. \nonumber 
\end{align}

It is known that any stabilizer state is LC equivalent to a graph state \cite{vandennestGraphicalDescriptionAction2004}. 
Furthermore, two graph states are LC equivalent if and only if their associated graphs are equivalent under \textit{local complementation operations} \cite{vandennestGraphicalDescriptionAction2004}.

\begin{definition}[Local complementation]\label{def:local_complementation} 
 For every vertex $a \in V$ of a graph $G=(V, E)$, we define a \textit{locally complemented graph} $\tau_{a}(G)$ with adjacency matrix
	\begin{equation}\label{eq:local_complementation_adjacency}
	\Gamma_{\tau_{a}(G)} = \Gamma_{G}+\Gamma_{K_{N_{a}}} \quad (\bmod 2),
	\end{equation}
	where $K_{N_{a}}= (V, {\{(v,w)~\mid~\left(v,w\in N_{a}\right) \wedge \left(v \neq w\right)\}})$ 
	is the complete graph on the neighborhood $N_{a}$ of vertex $a$.
\end{definition}

\noindent
For example the two graphs in \cref{fig:GHZ_example} are equivalent under local complementation in vertex 1. Therefore both corresponding graph states are equivalent to the GHZ state.

\begin{theorem}[see~\cite{vandennestGraphicalDescriptionAction2004}]
\label{LCisLC}
Two graph states are LC equivalent if and only if their associated graphs are equivalent under a sequence of local complementations.
\end{theorem}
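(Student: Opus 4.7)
The plan is to prove the two directions separately, with almost all the work lying in the ``only if'' direction. For the ``if'' direction, I would exhibit an explicit local Clifford that realises a single local complementation at a vertex $a$, and then iterate. A natural candidate is
\[
U_a \defeq \sqrt{-iX}^{(a)} \otimes \prod_{b \in N_a} \sqrt{iZ}^{(b)},
\]
which lies in $\mathbf{C}_n^{\text{loc}}$. Conjugating the standard generators in \cref{eq:stabilizer_of_graph_state} shows that $s_a$ is preserved, while for each $b\in N_a$ the generator $s_b$ acquires factors of $Z$ on $N_a\setminus\{b\}$. Rewriting the resulting stabilizer in the form \cref{eq:stabilizer_of_graph_state} for some new graph $\tau_a(G)$ yields exactly the adjacency update $\Gamma_G\mapsto \Gamma_G+\Gamma_{K_{N_a}}\pmod 2$ of \cref{def:local_complementation}. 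Composition over a sequence of vertices then yields the ``if'' direction.

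For the ``only if'' direction, I would lift the assertion to the level of check matrices. Two graph states $\ket G$ and $\ket{G'}$ are LC equivalent iff there is some $V\in\mathbf{C}_n^{\text{loc}}$ with $V\St_{\ket G}V^\dagger=\St_{\ket{G'}}$. Modulo Pauli corrections, the action of $V$ on the check matrix $[\Id_n\,|\,\Gamma_G]$ is given by independent $2\times 2$ symplectic transformations over $\GF{2}$ at each qubit, followed by an arbitrary invertible row operation (which merely amounts to choosing a different generating set for the same stabilizer). So the problem becomes purely combinatorial: show that whenever
\[
R\cdot \begin{bNiceArray}{c|c}\Id_n & \Gamma_G\end{bNiceArray}\cdot \operatorname{diag}(Q_1,\dots,Q_n)
 \;=\; \begin{bNiceArray}{c|c}\Id_n & \Gamma_{G'}\end{bNiceArray}
\]
with $R\in\mathrm{GL}_n(\GF 2)$ and each $Q_i$ a local symplectic acting on the $i$-th $X$--$Z$ column pair, the adjacency matrices $\Gamma_G$ and $\Gamma_{G'}$ are related by a sequence of local complementations.

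To establish this combinatorial core, I would reduce to generators of the local symplectic group at each vertex: the Hadamard-type swap of $X$ and $Z$ columns, and the phase-type addition of the $Z$ column to the $X$ column. For each elementary generator applied at a vertex $a$, one checks that, after restoring the left block to $\Id_n$ by an appropriate row reduction, the new right block is either $\Gamma_G$ itself or $\Gamma_{\tau_a(G)}$. The bookkeeping is delicate because row reductions can look nontrivial, but the key invariant is that a Hadamard at $a$ combined with the necessary row reductions reproduces exactly the symmetric rank-one update encoding $\tau_a$. By iterating over a decomposition of each $Q_i$ into such generators, any chain of elementary local symplectics is translated into a sequence of local complementations carrying $G$ to $G'$.

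The main obstacle is this last reduction step: the local symplectic group over $\GF 2$ has elements whose action mixes the $X$ and $Z$ parts in ways that are not directly graphical, and one must argue that the interplay with the row-reduction $R$ (needed to bring the check matrix back to standard form $[\Id_n\,|\,\Gamma']$) cannot produce adjacency matrices outside the local-complementation orbit of $\Gamma_G$. Equivalently, one must show that a Clifford which stabilizes the class of graph states modulo LC must reduce to a composition of the local-complementation moves above. This is the content of Bouchet's theorem on isotropic systems, and any proof will, implicitly or explicitly, establish the classification of local symplectic orbits of the rows $(e_i,\Gamma_G\,e_i)$ under the symmetry enforced by $\Gamma_G=\Gamma_G^{\mathsf T}$.
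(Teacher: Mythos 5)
This theorem is imported by the paper from Ref.~\cite{vandennestGraphicalDescriptionAction2004}; the paper offers no proof of its own, so your attempt has to stand on its own merits. Your ``if'' direction is fine and is the standard argument: conjugating the generators of \cref{eq:stabilizer_of_graph_state} by $\sqrt{-iX}^{(a)}\otimes\prod_{b\in N_a}\sqrt{iZ}^{(b)}$ and re-normalizing the generating set reproduces exactly the update $\Gamma_G\mapsto\Gamma_G+\Gamma_{K_{N_a}}\pmod 2$ of \cref{def:local_complementation}.

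The ``only if'' direction, which is the entire substance of the theorem, is not established. The concrete step that fails is your claim that each elementary local symplectic generator applied at a vertex $a$, ``after restoring the left block to $\Id_n$ by an appropriate row reduction,'' yields a right block equal to $\Gamma_G$ or $\Gamma_{\tau_a(G)}$. A Hadamard at $a$ replaces the $a$-th column of the left block $\Id_n$ by $\Gamma_G e_a$, whose $a$-th entry vanishes because $\Gamma_G$ has zero diagonal; the left block is then singular, no row reduction can restore it to $\Id_n$, and indeed $H^{(a)}\ket{G}$ has support of size $2^{n-1}$ and is never a graph state. Similarly the phase gate at $a$ produces a $Z$-block with a nonzero diagonal entry. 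So the generator-by-generator induction leaves the class of graph states after the very first elementary move, and the interplay between a full local symplectic $Q_i$ and the row reduction $R$ cannot be analyzed one generator at a time. You then concede this and invoke Bouchet's theorem on isotropic systems; but that theorem \emph{is} (a reformulation of) the statement being proven, so invoking it as a black box turns your argument into a citation --- which is precisely what the paper itself does. To genuinely close the gap one needs the linear-algebra argument of Van den Nest \emph{et al.}: write the block form $Q=\begin{psmallmatrix}A&B\\C&D\end{psmallmatrix}$ with diagonal blocks, note that $Q$ maps $\ket{G}$ to a graph state $\ket{G'}$ iff $A+\Gamma_G C$ is invertible with $(A+\Gamma_G C)^{-1}(B+\Gamma_G D)=\Gamma_{G'}$ symmetric and zero-diagonal, and then show by induction on the support of $C$ that every such $Q$ factors into the specific operators of your ``if'' direction. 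None of that bookkeeping is present in your sketch, so the hard direction remains unproved.
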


\noindent
We use the acronym LC equivalent both for the local Clifford equivalence and for the local complementation equivalence of the related graphs due to \cref{LCisLC}.

\subsection{Entanglement of Graph states}

In early studies of graph states, it was observed that the entanglement properties of a graph state are completely characterized by its corresponding adjacency matrix \cite{PhysRevA.69.062311,Entanglement_GraphStates,HajdusekMurao_conference}. 
For a given graph $G$ represented by its adjacency matrix $\Gamma$, we define the \textit{Cut-rank} function on all subsets $M\subset [n]$ as
\begin{equation}
\CutRank : M     \mapsto \rk (\Gamma_{M \vert M^{\perp}}),
\end{equation}
where $\Gamma_{M \vert M^{\perp}}$ denotes the submatrix of the adjacency matrix $\Gamma$ with rows and columns indexed by subsets $M$ and $M^{\perp}:= [n]\setminus M$ respectively, which is also known as \textit{connectivity function} \cite{doi:10.1137/0608028,BOUCHET199375}. 
The Cut-rank function captures the entanglement properties of a graph state with respect to any bipartition. 
Indeed, recall that the entanglement entropy $E_M (\ket{\psi})$ with respect to the bipartition $M \vert M^{\perp}$ is defined as $E_M (\ket{\psi}) := - \tr (\rho_M \log_2 \rho_M)$, which for graph states is equal to $\CutRank (M)$, i.e. 
\begin{equation}
E_M (\ket{\psi})=\CutRank  (M)
.
\end{equation} 
As entanglement entropy does not change with LU operations, the graphs $G$ and $G'$ related to LU equivalent graph states have the same $\CutRank$ function. We say that two graph states $\ket{G}$ and $\ket{G'}$ are marginal equivalent if the corresponding graphs have the same $\CutRank$ function.

We compute the Cut-rank function of the graph state $\ket{G_1}$ in \cref{fig:intro} for some sets $M \subset [8]$ as an example: 
$ \CutRank(\{1, 6\})= 2$, 
$\CutRank(\{1,3,6\}) =2$, 
$ \CutRank(\{1,3,5,7\}) = 3$.


\subsection{Marginal States}

For a given stabilizer state $\ket{\psi}$ associated to a stabilizer group $\St_{\ket{\psi}}$ and any subset $M \subset [n]$, we define the \textit{reduced stabilizer group}: 
\begin{equation}
\label{eq:marginalstab}
\St_{\ket{\psi}}^M 
= \lbrace s \in \St_{\ket{\psi}} : \supp (s) \subset M \rbrace,  
\end{equation}
where $\supp  (s)$ denotes the \textit{support} of a Pauli operator $s$, i.e. the collection of tensor-subspaces on which it acts non-trivially. 
Notice that $\St_{\ket{\psi}}^M $ forms a subgroup of $\St_{\ket{\psi}}$. 
This subgroup determine the density matrix of a reduced state $\rho_M (\psi ):= \tr_{M^\perp} \ket{\psi}\bra{\psi}$, namely:
\begin{equation}
\label{eq:marginalstab_def}
\rho_M (\psi ) =\dfrac{1}{2^{|M|}}
\sum_{s\in \St_{\ket{\psi}}^M} s_{\downarrow M} ,
\end{equation}
where $s_{\downarrow M}$ denotes the operator $s$ restricted to the subsystems in $M$. 
Note the similarity to \cref{eq:stab_state_in_stabilizer}. 
Furthermore, the size of the group $\St_{\ket{\psi}}^M $ determines the entanglement entropy, namely
\begin{equation}
\label{eq:ST_M_and_CutRank}
\Big|\St_{\ket{\psi}}^M \Big| =2^{|M|-\CutRank (M)}
\, .
\end{equation}

For example the reduced stabilizer group $\St_{\ket{G_1}}^{\{1,3,6\}}$ of the graph state $\ket{G_1}$ in \cref{fig:intro} is given by
\begin{align}
    \St_{\ket{G_1}}^{\{1,3,6\}} = \{ \id^{(M)}, Z^{(1)} Z^{(3)} X^{(6)} \id^{(M\setminus  \{1,3,6\} )} \}.
\end{align}
Therefore, the corresponding reduced state is determined by
\begin{align}
    \rho_{\{1,3,6\}} (G_1 ) =\dfrac{1}{2^{2}} \left( \id^{(1,3,6)} + Z^{(1)} Z^{(3)} X^{(6)} \right).
\end{align}

\subsection{Local sets}
\label{subsec:localset}

Consider graph $G$ and any subset $M\subset [n]$ of its vertices. 
Following Ref~\cite{LocalSets1}, a set $M$ is called a \textit{local set} if it is of the form $M=D\cup \Odd{D}$ where 
\begin{equation}
\label{Eq:OddD}
\Odd{D}:=\Big\{v\in [n]: |N_v \cap D| \text{ is odd}\Big\} 
\end{equation}
is the set of vertices connected to an odd
number of vertices in $D$. 
The set $D$ is called a \textit{generator} of the local set $M$ \cite{LocalSets1}. 
A local set that is minimal by inclusion relations is called a \textit{minimal local set} (MLS).  

For example the set $M_1 = \{3,5,8\}$ of graph $G_1$ in \cref{fig:intro} is a local set generated by $D_1 = \{ 8\}$. The set of neighboring vertices with an odd number of vertices connected to the set $D_1$ is $\Odd{D_1} = \{3,5 \}$. Another local set is $M_2 = \{2,3\}$ of graph $G_2$ in \cref{fig:GHZ_example}. In this case, $D_2 = M_2$ and $\Odd{D_2} = \emptyset$, since vertex 1 is adjacent to $D_2$ by an even number of edges. Note that  both local sets are MLSs.

Notice that a MLS can be characterized in terms of the $\CutRank$ function. 
Indeed, a set $M$ is MLS if and only if $\CutRank (M) <|M|$ and $\CutRank (N) =|N|$ for any proper subset $N\subsetneq M$, see the recent results in Ref~\cite{claudet2024covering,Lina2024}. 
Similarly,  MLS can be characterized in terms of the reduced stabilizer group, see \cref{eq:ST_M_and_CutRank}. 
Indeed, a set $M$ is MLS if and only if $\St_{\ket{G}}^M \neq \{\id^n\}$, while $\St_{\ket{G}}^N = \{\id^n\}$ for any proper subset $N\subsetneq M$. 
It turned out that the reduced stabilizer group $\St_{\ket{G}}^M$ corresponding to MLS $M$ has either two or for elements, see \cite{claudet2024covering,Lina2024}. 
Therefore, we introduce the \textit{Type} of an MLS as follows. 

\begin{definition}
\label{lemma:types}
Consider MLS $M$ and the corresponding reduced stabilizer group $\St_{\ket{\psi}}^M$. 
We call $M$ the \textit{MLS of Type I} if $|\St_{\ket{G}}^M |=2$, and \textit{MLS of Type II} if $|\St_{\ket{G}}^M |=4$.
\end{definition}

\noindent
A collection $\mathcal{M}$ of sets is called a minimal local set cover (MLS cover) if each set $M\in \mathcal{M}$ is a local set and $\mathcal{M}$ covers all vertices of a graph, i.e. $\cup_{M\in \mathcal{M}} M = [n]$. 
It was recently shown that such a cover exists for all graphs:

\begin{theorem}[see Theorem 1 in \cite{claudet2024covering}]
\label{th:MLS_cover}
Any graph has an MLS cover. 
Furthermore, there is an algorithm to compute an MLS cover for any graph, that runs in $\mathcal{O}(n^4)$ in the number of graph vertices $n$. 
\end{theorem}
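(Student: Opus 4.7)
The theorem has two parts: existence of an MLS cover, and an $\mathcal{O}(n^4)$-time algorithm to construct one. My plan is to reduce existence to the single-vertex statement that every $v \in [n]$ lies in at least one MLS — a cover is then assembled by picking one such MLS per vertex and discarding duplicates — and then to make this reduction algorithmic. To prove the single-vertex statement, consider the family $\mathcal{F}_v$ of all local sets containing $v$. This family is non-empty, since for $D = \{v\}$ the set $D \cup \Odd{D} = \{v\} \cup N_v$ is a local set containing $v$. Pick any $M \in \mathcal{F}_v$ that is minimal under inclusion; I claim $M$ is an MLS.

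Suppose otherwise: then some local set $N \subsetneq M$ exists, and shrinking if necessary we may assume $N$ is itself an MLS. Minimality of $M$ in $\mathcal{F}_v$ forces $v \notin N$. For graph states, local sets coincide with supports of non-trivial stabilizers via $\prod_{i \in D} s_i \mapsto D \cup \Odd{D}$, so pick stabilizers $s, s' \in \St_{\ket{G}}$ with $\supp(s) = M$ and $\supp(s') = N$. The product $s \cdot s'$ has $v$ in its support (since $s$ is non-trivial at $v$ while $s'$ is the identity there) and support contained in $M$; contradicting the minimality of $M$ amounts to showing $\supp(s \cdot s') \subsetneq M$, which in turn reduces to finding some $w \in N$ at which $s$ and $s'$ agree as Paulis. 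In the Type II case for $N$, the three non-identity elements of $\St_{\ket{G}}^N$ act as the three distinct Paulis $\{X,Y,Z\}$ at each $w \in N$, so for any fixed $w$ exactly one of them matches $s(w)$, and we choose $s'$ accordingly. In the Type I case, $s'$ is uniquely determined; I would resolve it by exploiting the graph-state structure (each stabilizer is of $X/Y$-type on its generator set $D$ and of $I/Z$-type outside) together with the freedom to replace $s$ by another representative of support $M$ in the appropriate coset within $\St_{\ket{G}}^M$ to force the required agreement.

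The algorithm is the constructive counterpart. Iterate over vertices $v$ not yet covered; for each, initialize $M \leftarrow \{v\} \cup N_v$ and repeatedly shrink: compute $\St_{\ket{G}}^M$ by Gaussian elimination on the check matrix of $\ket{G}$ restricted to columns indexed by $M$ (cost $\mathcal{O}(n^3)$); if some non-identity $s' \in \St_{\ket{G}}^M$ has $\supp(s') \subsetneq M$, apply the stabilizer-product construction above to obtain a strictly smaller local set still containing $v$ and iterate; otherwise output $M$ as the sought MLS. At most $n$ shrinkings occur per vertex, giving $\mathcal{O}(n^5)$ naively; to reach $\mathcal{O}(n^4)$ I would amortize by incrementally updating the row-reduced check matrix as a single row/column changes ($\mathcal{O}(n^2)$ per shrink), and skip vertices already covered by previously produced MLSs. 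The principal obstacle is the Type I case of the existence proof — ruling out a pathological configuration in which no choice of $s$ and $s'$ yields a strictly smaller support still containing $v$ — and it is precisely there that the finer cut-rank and reduced-stabilizer machinery for Type I MLSs developed in the cited paper must be invoked.
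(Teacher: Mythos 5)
Your reduction to ``every vertex lies in some MLS'' is fine and your Type II argument is sound, but the claim on which the whole construction rests --- that an inclusion-minimal element $M$ of the family of local sets containing $v$ is automatically an MLS --- is false, so the Type I case you flag as ``the principal obstacle'' is not merely unfinished but unfixable within this framework. Take the $4$-cycle with $N_1=N_2=\{3,4\}$ and $N_3=N_4=\{1,2\}$, and $v=3$. The set $M=\{1,2,3\}$ is a local set (generator $D=\{3\}$), and no proper subset of $M$ containing $3$ is a local set, since $\CutRank(\{3\})=1$, $\CutRank(\{1,3\})=\CutRank(\{2,3\})=2$ all equal the respective cardinalities; so $M$ is inclusion-minimal among local sets containing $3$. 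Yet $N=\{1,2\}$ is a local set (generator $D=\{1,2\}$, with $\Odd{\{1,2\}}=\emptyset$), in fact a Type I MLS with $\St_{\ket{G}}^{N}=\{\id^{\otimes 4},\,X^{(1)}X^{(2)}\}$, so $M$ is not an MLS. Concretely, $\St_{\ket{G}}^{M}=\{\id^{\otimes 4},\;X^{(1)}X^{(2)},\;Z^{(1)}Z^{(2)}X^{(3)},\;-Y^{(1)}Y^{(2)}X^{(3)}\}$: both full-support elements disagree with $X^{(1)}X^{(2)}$ at \emph{every} point of $N$, so every product $s\cdot s'$ again has support $M$; no coordinate ever drops out, no choice of representative helps, and neither the contradiction in your existence argument nor the shrinking step of your algorithm can be carried out. (Vertex $3$ is still covered by the MLS $\{3,4\}$, so the theorem survives; your proof of it does not --- the correct argument must be able to produce an MLS containing $v$ that is \emph{not} contained in the local set you started from.)

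Two further remarks. First, this statement is not proved in the present paper at all: it is imported verbatim as Theorem~1 of the cited reference, and the nontrivial content there is precisely the step your proposal leaves open, handled by a genuinely different mechanism rather than by shrinking a fixed $v$-minimal local set. Second, even setting correctness aside, your complexity sketch rests on the same broken shrinking loop (in the example above it would either cycle or wrongly certify $\{1,2,3\}$ as an MLS), so the $\mathcal{O}(n^4)$ claim is not established either.
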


For example, a MLS cover for the graphs $G_1,G_2$ in \cref{fig:intro} is given by 
\begin{align}
    \mathcal{M} = \{ \{1,3,6\}, \{3,5,8\},\{2,4,7\} \}.
\end{align}
Note that this choice is not unique.

As MLSs are uniquely characterized by the properties of the $\CutRank$ function, and LU-equivalent states share the same $\CutRank$ function, it follows that they share the same MLSs.

\begin{proposition}
\label{prop:MLS_cover_same}
If two graph states $\ket{G}$ and $\ket{G'}$ are LU equivalent, then any MLS for $G$ is also the MLS for $G'$ of the same Type, see \cref{lemma:types}. In particular, any MLS cover for $G$ is simultaneously an MLS cover for $G'$.    
\end{proposition}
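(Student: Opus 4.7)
The plan is to leverage the two characterizations recalled just before the proposition: first, that a set $M$ is an MLS if and only if $\CutRank(M)<|M|$ while $\CutRank(N)=|N|$ for every proper subset $N\subsetneq M$; and second, that the size of the reduced stabilizer group satisfies $|\St_{\ket{G}}^M|=2^{|M|-\CutRank(M)}$, which in turn controls the Type. The whole argument therefore reduces to showing that the $\CutRank$ function is an LU invariant, and then reading off both claims from the quoted characterizations.

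First I would invoke the fact that LU operations preserve all reduced density matrices up to local unitaries, and hence preserve the entanglement entropy $E_M(\ket{\psi})=-\tr(\rho_M\log_2\rho_M)$ for every bipartition $M\mid M^{\perp}$. Combined with the identity $E_M(\ket{G})=\CutRank_G(M)$ valid for graph states, this immediately gives $\CutRank_G(M)=\CutRank_{G'}(M)$ for every $M\subseteq[n]$, whenever $\ket{G}$ and $\ket{G'}$ are LU equivalent.

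Next, I would apply the MLS characterization via $\CutRank$: since $\CutRank_G=\CutRank_{G'}$ as set functions on $2^{[n]}$, the defining inequalities $\CutRank(M)<|M|$ and $\CutRank(N)=|N|$ for $N\subsetneq M$ hold for $G$ if and only if they hold for $G'$. Hence a subset $M$ is an MLS of $G$ exactly when it is an MLS of $G'$. For the Type, I would apply $|\St_{\ket{G}}^M|=2^{|M|-\CutRank(M)}$ to both $G$ and $G'$; equality of the $\CutRank$ values forces $|\St_{\ket{G}}^M|=|\St_{\ket{G'}}^M|$, so the Type (I if this size is $2$, II if $4$) is the same on both sides.

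Finally, the MLS cover statement is immediate: if $\mathcal{M}$ is a collection of MLSs of $G$ with $\bigcup_{M\in\mathcal{M}}M=[n]$, then by the previous step each $M\in\mathcal{M}$ is also an MLS of $G'$, and the covering condition is a purely combinatorial property of $\mathcal{M}$ independent of the graph, so $\mathcal{M}$ is an MLS cover of $G'$ as well. There is no substantive obstacle here; the proposition is essentially a bookkeeping corollary of the two characterizations already established, and the only thing to double-check is that we are allowed to freely use the MLS$\Leftrightarrow\CutRank$ criterion from \cite{claudet2024covering,Lina2024} as quoted in the preceding subsection.
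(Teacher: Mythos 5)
Your proof is correct and follows the same route the paper takes: the paper justifies this proposition in one sentence by noting that MLSs are characterized by the $\CutRank$ function, which is an LU invariant via the entanglement entropy, and your use of $|\St_{\ket{G}}^M|=2^{|M|-\CutRank(M)}$ to match the Types is exactly the intended bookkeeping. No gaps.
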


\subsection{(Weighted) Hypergraph states}


We conclude this section by generalizing the notion of graph states to hypergraph states, and further to weighted hypergraph states \cite{Kruszynska_2009, Qu_2013, Hypergraphs_2013, Guehne_2014, Mari_2019_Dissertation}. 
A hypergraph $H=(V,E)$ consists of a finite set of vertices $V$ and of a set of hyperedges $E =\{e : e\subset V\}$. 
Notice that if each $e\in E$ is of the size $|e|=2$, then a hypergraph is a simple graph. 
For a given set $D$, a $D$-weighted hypergraph $H_f=(V,f:2^V\rightarrow D)$ consists of a finite set of vertices $V$ and a function $f:2^V\rightarrow D$ defined on all subsets of $V$ with values in $D$. 
Notice that we can think of a hypergraph as a $\{0,1\}$-weighted hypergraph with hyperedges set given by $E=\{e\subset V :f(e)=1\}$. 

For an $\mathbb{R}$-weighted hypergraph $H_w=(V,f:2^V\rightarrow \mathbb{R})$, we define a \textit{weighted hypergraph state} as 
\begin{equation}
\label{weighted_hypergraph_state}
\ket{H_w} = \prod_{e\subset V} \textbf{CZ}^{e,f(e)} \ket{+}^{\otimes V},
\end{equation}
where for a hyperedge $e=\{v_1,\ldots, v_k\}$, 
\begin{align}
\label{eq:CZ_hypergraph}
\textbf{CZ}^{e,\alpha} = \sum_{j_1, \dots, j_k \in \{0,1\}} e^{i \pi  \alpha j_1\cdots j_k} \dyad{j_1,\ldots, j_k}_{e}
\end{align}
is a controlled-$Z^\alpha$ operator acting on qubits in $e$, and $\ket{+}= \frac{1}{\sqrt{2}} (\ket{0}+\ket{1})$ \cite{Hypergraphs_2013,zakaryan2024nonsymmetricghzstatesweighted}. 
Notice that this formula simplifies to \cref{graph_state} for graph states. 
Furthermore, weighted hypergraph states coincide with locally maximally entangleable  states \cite{Kruszynska_2009}. 



\section{Necessary conditions for LU equivalence}
\label{sec:Necessary conditions}

We begin with the statement that for a MLS $M$ any LU operator must transform between elements in the respective local stabilizer groups. 

\begin{lemma}
\label{prop:necessary_conditions_MLS}
Let $M\subset [n]$ be a MLS of the graphs $G$ and $G'$. 
Suppose that LU operator $U=U_1\otimes\cdots\otimes U_n$ transforms $\ket{G}$ into $\ket{G'}$. 
Then, there is a bijection $f_M: \St_{\ket{G}}^M \rightarrow \St_{\ket{G'}}^M$ such that
\begin{align}
\label{eq:s_sprim}
f_M (s) = U s U^\dagger 
\end{align}
for arbitrary $s\in \St_{\ket{G}}^M$ and that preserves identity, i.e. $f(\id^{\otimes n})=\id^{\otimes n}$. Moreover, if $M$ is MLS of Type II, $U_i$ must be a Clifford matrix for any $i\in M$.
\end{lemma}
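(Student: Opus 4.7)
The plan is to reduce the claim to an identity between reduced density matrices, identify both sides as the same projector, and then analyze the two MLS types separately.

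First, I would take the partial trace over $M^\perp$ of the identity $U\dyad{G}U^\dagger=\dyad{G'}$. Since $U=U_M\otimes U_{M^\perp}$ is local, this gives $\rho_M(G')=U_M\rho_M(G)U_M^\dagger$ with $U_M=\bigotimes_{i\in M}U_i$. Expanding both sides via \cref{eq:marginalstab_def}, using that LU equivalence preserves $\CutRank$ (and hence $|\St_{\ket{G}}^M|=|\St_{\ket{G'}}^M|=2^{|M|-\CutRank(M)}$), together with the fact that a sum over a commuting group of Pauli involutions equals its order times the projector on its joint $+1$ eigenspace, the identity reduces to $U_M\Pi_G U_M^\dagger=\Pi_{G'}$, where $\Pi_G$ and $\Pi_{G'}$ are those projectors.

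For Type I, $\St_{\ket{G}}^M=\{\Id,s\}$ and $\St_{\ket{G'}}^M=\{\Id,s'\}$, so $\Pi_G=\tfrac{1}{2}(\Id+s_{\downarrow M})$ and therefore $s_{\downarrow M}=2\Pi_G-\Id$. Conjugating both sides by $U_M$ immediately yields $U_M s_{\downarrow M}U_M^\dagger=s'_{\downarrow M}$, and extending trivially to $M^\perp$ this reads $UsU^\dagger=s'$ as operators. The map $f_M(\Id)=\Id$, $f_M(s)=s'$ is then the required bijection.

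For Type II, write $\St_{\ket{G}}^M=\{\Id,P_1,P_2,P_3\}$ with $P_3=P_1P_2$, and similarly for $G'$. The critical ingredient is the minimality of $M$: for each $i\in M$ the three non-identity restrictions $(P_k)_{\downarrow i}$ must be three distinct non-identity single-qubit Paulis, since if two coincided on qubit $i$ their product would be a non-identity element of $\St_{\ket{G}}^{M\setminus\{i\}}$, contradicting MLS. Hence $\{(P_1)_{\downarrow i},(P_2)_{\downarrow i},(P_3)_{\downarrow i}\}=\{X,Y,Z\}$ on every $i\in M$. I would then combine the tensor-product factorization $U_MP_kU_M^\dagger=\bigotimes_{i\in M}U_i(P_k)_{\downarrow i}U_i^\dagger$ with the sum identity $\sum_k U_MP_kU_M^\dagger=\sum_k P_k'$ to conclude that each single-qubit conjugation $U_i(\cdot)U_i^\dagger$ for $i\in M$ must send single-qubit Paulis to single-qubit Paulis. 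Since a single-qubit unitary that maps two distinct non-identity Paulis to Paulis is automatically Clifford, this forces $U_i\in\mathbf{C}_1$ for every $i\in M$. Once the Clifford property holds, $UP_kU^\dagger$ is a Pauli stabilizing $\ket{G'}$ and therefore lies in $\St_{\ket{G'}}^M$, so the bijection is $f_M(\Id)=\Id$, $f_M(P_k)=UP_kU^\dagger$.

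The principal difficulty is the Type II step: the sum identity alone permits non-Pauli Hermitian involutions $A_k=U_MP_kU_M^\dagger$ that share only the joint $+1$ eigenspace with $\{P_k'\}$, so Pauli rigidity of each $A_k$ must be extracted from the tensor factorization of $U_M$ together with the MLS minimality --- specifically the fact that all three restrictions span $\{X,Y,Z\}$ on every qubit of $M$ --- working in tandem. Making this rigidity argument precise, ruling out ``twisted'' single-qubit conjugations that happen to preserve only the projector $\Pi_{G'}$, is the delicate heart of the proof.
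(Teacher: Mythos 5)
Your reduction to the identity $\sum_{s'\neq\id}s'=U\bigl(\sum_{s\neq\id}s\bigr)U^\dagger$ over the reduced stabilizer groups, and your treatment of Type I, coincide with the paper's proof. Your observation that minimality forces the three non-trivial elements to restrict to three \emph{distinct} Paulis on every qubit of $M$ is also the correct starting point for Type II (the paper imports this, together with $|M|\geq 4$, from Ref.~\cite{Lina2024}).

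However, there is a genuine gap precisely where you flag ``the delicate heart of the proof.'' You assert that the tensor factorization of $U_M$ combined with the sum identity lets you ``conclude that each single-qubit conjugation $U_i(\cdot)U_i^\dagger$ must send single-qubit Paulis to single-qubit Paulis,'' but you give no argument for this, and a priori it is false for a generic local unitary: each $U_i\sigma U_i^\dagger$ is merely some real combination $\sum_{j'}c^{(i)}_{jj'}\sigma_{j'}$ with $(c^{(i)}_{jj'})$ a $3\times3$ orthogonal matrix, and the sum identity only constrains the sum of three tensor products of such combinations. The paper closes this gap with a quantitative rigidity argument that your sketch never touches: writing the conjugation at each qubit $i\in M$ in the Pauli basis, unitarity gives $\sum_{j,j'}|c^{(i)}_{jj'}|^2=3$; matching coefficients in \cref{eq:type_IandII} gives $\sum_{j,j'}\prod_{i}|c^{(i)}_{jj'}|\geq 3$; the AM--GM inequality bounds that same quantity by $3$ with equality only if $|c^{(i)}_{jj'}|$ is independent of $i$; and then, since a Type II MLS has $|M|=k\geq 4>2$, the inequality $|c|^{k}\leq|c|^{2}$ (with equality iff $|c|\in\{0,1\}$) forces every $|c^{(i)}_{jj'}|\in\{0,1\}$, i.e.\ a permutation matrix. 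This is exactly the step that rules out the ``twisted'' conjugations you worry about, and it crucially uses the lower bound on $|M|$ --- an input your proposal never invokes, which is a sign the sketch cannot be completed as written. Once the permutation structure is in hand, the bijection $f_M$ and the Clifford property of each $U_i$, $i\in M$, follow as you describe.
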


\noindent
\cref{prop:necessary_conditions_MLS} is proven in \cref{app:MLS}. Note that, in general, LU equivalence between two graph states does not map a stabilizer of one state to the corresponding stabilizer of the other. Thus, a similar result for local sets or other types of sets does not hold. This makes MLSs particularly useful, as they have very few elements, and any LU operator maps corresponding stabilizers into each other. 
In fact, \cref{prop:necessary_conditions_MLS} can be directly applied to exclude the possibility of LU-equivalence between two graph states, as demonstrated by the following example.

\begin{example}[\cite{Lina2024}]
\label{ex:9_qubit}
Consider two graphs on $9$ vertices presented in \cref{fig:example_9qubit}. 
One can verify that both graphs have the same $\CutRank$ function, hence the related graph state $\ket{R}$ and $\ket{L}$ are marginal-equivalent. 
On the other hand, the Bouchet algorithm shows, that this two graphs are not LC equivalent~\cite{BOUCHET199375,AlgorithmLULC}. 
We use \cref{prop:necessary_conditions_MLS} to show that those two graphs are not LU equivalent. 

This can be done by considering any MLS cover that contains the sets $M=\{1,2,3,5\}$ and $M'=\{1,4,5,7\}$. 
Notice that both sets are of Type I, and corresponding reduced stabilizer groups contain the following unique non-trivial elements:
\begin{align*}
S^{M}_{\ket{L}} &\ni s_2^L = Z^{(1)} X^{(2)} Z^{(3)} Z^{(5)} ,\\
S^{M}_{\ket{R}} & \ni s_{1}^R s_{2}^R s_{3}^R s_{5}^R =  -Y^{(1)} Y^{(2)} Y^{(3)} Y^{(5)}, \\
S^{M'}_{\ket{L}} &\ni s_{4}^{L} = Z^{(1)} X^{(4)} Z^{(5)} Z^{(7)}, \\
S^{M'}_{\ket{R}} &\ni s_{4}^{R} = Z^{(1)} X^{(4)} Z^{(5)} Z^{(7)}.
\end{align*}
which by \cref{prop:necessary_conditions_MLS}, are transformed one to another, i.e. 
\begin{align}
    U \,(Z^{(1)} X^{(2)} Z^{(3)} Z^{(5)} ) \,U^\dagger &=-Y^{(1)} Y^{(2)} Y^{(3)} Y^{(5)}, \label{eq:9ex_cond1} \\
    U \,(Z^{(1)} X^{(4)} Z^{(5)} Z^{(7)} ) \,U^\dagger &=Z^{(1)} X^{(4)} Z^{(5)} Z^{(7)}, \label{eq:9ex_cond2} 
\end{align}
by any LU equivalence $U=U_1\otimes\cdots\otimes U_n$ between $\ket{L}$ and $\ket{R}$. Notice that this leads to contradiction, namely \cref{eq:9ex_cond1} indicates that $U_1 \,Z_1 \,U_1^\dagger =Y^{(1)}$ while \cref{eq:9ex_cond2} requires $U_1 \,Z^{(1)}  \,U_1^\dagger =Z^{(1)} $. The argument relies on the fact that the two MLSs $M$ and $M'$ intersect non-trivially on vertex $1\in M\cap M'$ and that reduced stabilizer groups of LU-equivalent states must be compatible.
\end{example}


\begin{figure}
\centering
\includegraphics[width = 0.8\linewidth]{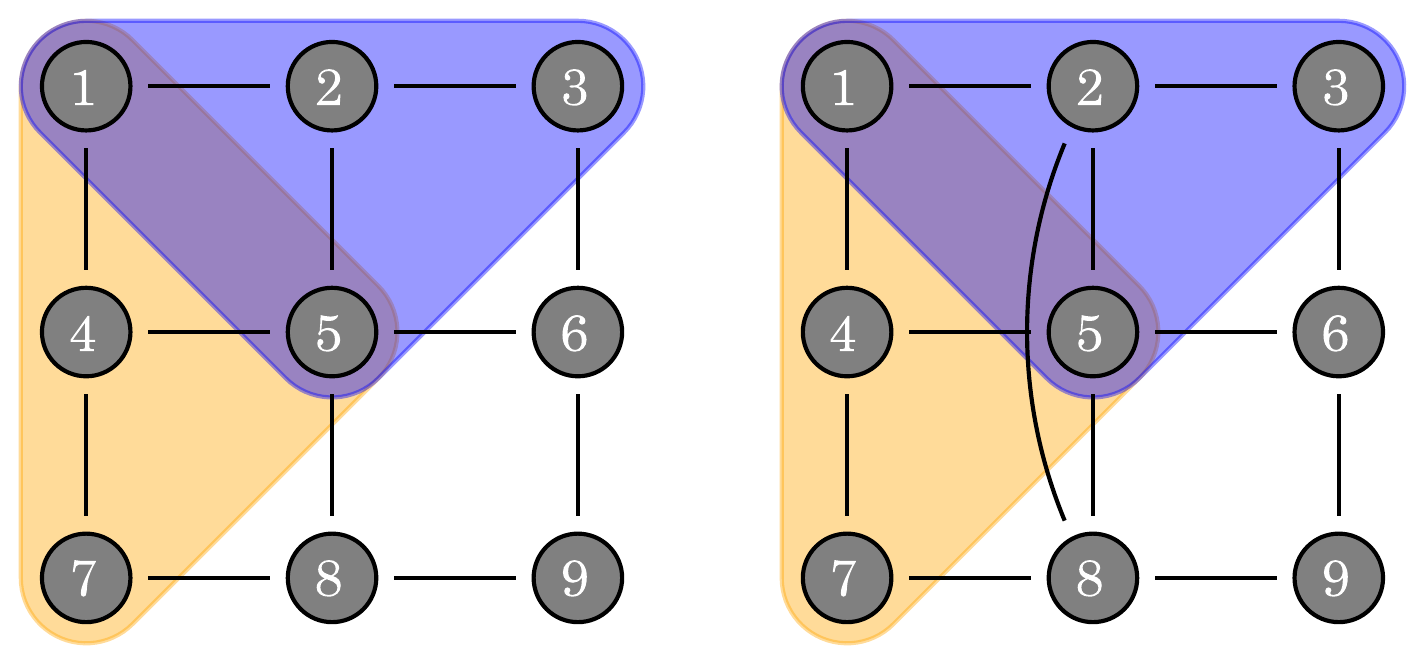}
\caption{Two graph $L$ (left) and $R$ (right) on nine vertices. Corresponding graph states are marginal-equivalent, but not LU-equivalent. Indeed, as illustrated in \cref{ex:9_qubit}, it is impossible to construct compatible functions corresponding to two MLSs: $M=\{1,2,3,5\}$ and $M'=\{1,4,5,7\}$, which are highlighted in blue and orange, respectively.
This figure is taken from Ref~\cite{Lina2024}.
}
\label{fig:example_9qubit}
\end{figure}

We now continue with the statement that any LU operator transforming two graph states must transform at least one Pauli string into another. Namely, we have the following statement.
\begin{proposition}
\label{prop:fG_fG'_functions}
Suppose that graph states $\ket{G}$ and $\ket{G'}$ are LU equivalent by operator $U=U_1\otimes\cdots\otimes U_n$. Then, there exist functions $F_G,F_{G'}:[n]\rightarrow \{X,Y,Z\}$ such that 
\begin{align}
\label{eq:fG_fG'_functions}
F_{G'} (i) = U _i \,F_G(i) \,U_i^\dagger 
\end{align}
for arbitrary $i\in [n]$. 
\end{proposition}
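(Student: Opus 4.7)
The strategy is to use the minimal local set (MLS) structure to localise the action of $U$ at each vertex. The key tools are \cref{th:MLS_cover} (an MLS cover of $G$ exists), \cref{prop:MLS_cover_same} (the same cover also covers $G'$), and \cref{prop:necessary_conditions_MLS} (conjugation by $U$ induces a bijection $\St_{\ket{G}}^M\to\St_{\ket{G'}}^M$ for each MLS $M$).

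First I fix an MLS cover $\mathcal{M}$ of $G$, which by \cref{prop:MLS_cover_same} is simultaneously an MLS cover of $G'$. For each vertex $i\in[n]$, I pick some $M\in\mathcal{M}$ containing $i$. The decisive structural observation is that every non-trivial $s\in\St_{\ket{G}}^M$ has support exactly equal to $M$: indeed, if $\supp(s)\subsetneq M$ then $s$ would lie in $\St_{\ket{G}}^{\supp(s)}$, contradicting the MLS characterisation that $\St_{\ket{G}}^N=\{\id^{\otimes n}\}$ for every proper subset $N\subsetneq M$. In particular, the restriction $s_{\downarrow i}$ belongs to $\{X,Y,Z\}$ for every $i\in M$.

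Next, choose any non-trivial $s\in\St_{\ket{G}}^M$ and write it as $s=\phi\cdot P_1\otimes\cdots\otimes P_n$ with $P_j\in\{\id,X,Y,Z\}$ and $\phi$ a scalar phase. Define $F_G(i):=P_i\in\{X,Y,Z\}$. By \cref{prop:necessary_conditions_MLS}, the image $U s U^\dagger$ lies in $\St_{\ket{G'}}^M$, which is also a Pauli group; applying the same minimality argument to $G'$ (using \cref{prop:MLS_cover_same}) forces $U s U^\dagger$ to be a Pauli string of support exactly $M$. Since $U$ is a tensor product,
\begin{equation*}
U s U^\dagger \;=\; \phi\cdot\bigotimes_{j=1}^{n} U_j P_j U_j^\dagger ,
\end{equation*}
and so each factor $U_j P_j U_j^\dagger$ (for $j\in M$) must be proportional, by a sign $\epsilon_j\in\{\pm 1\}$, to a single Pauli $Q_j\in\{X,Y,Z\}$. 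Setting $F_{G'}(i):=Q_i$ yields the required identity \cref{eq:fG_fG'_functions}, up to the sign $\epsilon_i$.

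The only point requiring care is this sign bookkeeping: the individual tensor factors $U_i P_i U_i^\dagger$ are Pauli matrices only up to $\pm 1$, and one must show that representatives in $\{X,Y,Z\}$ can be chosen so that the equality holds on the nose. I expect this to be the main (but mild) technical nuisance; conceptually it is a matter of redistributing the signs $\epsilon_j$ into the overall phase $\phi$ of $s$ and $U s U^\dagger$, which is consistent because $f_M$ is already a well-defined bijection of Pauli groups. The essential content — that $U_i$ must map at least one Pauli to a Pauli at each vertex $i$ — is handed to us entirely by the MLS machinery.
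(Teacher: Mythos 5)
Your proof follows essentially the same route as the paper's: fix an MLS cover of $G$ (which by \cref{prop:MLS_cover_same} also covers $G'$), pick for each vertex $i$ an MLS $M\ni i$ and a non-trivial element of $\St^M_{\ket{G}}$, and read off $F_G(i)$ and $F_{G'}(i)$ from \cref{prop:necessary_conditions_MLS}; your added observation that every non-trivial element of $\St^M_{\ket{G}}$ has support exactly $M$ (so that $s_{\downarrow i}\in\{X,Y,Z\}$) is correct and is left implicit in the paper. The sign issue you flag is genuine --- each factor $U_i\,F_G(i)\,U_i^\dagger$ is only guaranteed to equal a Pauli matrix up to $\pm 1$ even though the full string is mapped exactly, and your proposed fix of ``redistributing the signs into the overall phase'' does not restore the per-vertex equality \cref{eq:fG_fG'_functions} as literally stated --- but the paper's own one-line proof silently carries the same imprecision, and it is harmless because downstream (in the proof of \cref{prop:reduction_from_given_MLS}) the relation is only ever used up to a sign.
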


\begin{proof}
Consider a MLS cover $\mathcal{M}$ of graph $G$, see \cref{th:MLS_cover}). Notice that if $\ket{G'}$ is LU equivalent to $\ket{G}$, $\mathcal{M}$ is also MLS cover of $G'$, see \cref{prop:MLS_cover_same}. Consider arbitrary vertex $i \in[n]$. It belongs to at least one MLS $M\in \mathcal{M}$. Since any LU equivalence transforms stabilizer elements corresponding to $M$, at least one Pauli matrix must be transformed to another for vertex $i$, which gives values $F_G(i),F_{G'}(i)$ of the functions $F_G,F_{G'}$. 
\end{proof}

Consider the problem of determining whether two graphs $G,G'$ are LU equivalent. The task is to identify functions $F_G,F_{G'}:[n]\rightarrow \{X,Y,Z\}$ such that \cref{eq:fG_fG'_functions} in \cref{prop:fG_fG'_functions} holds. A straightforward approach is to find an MLS cover for $G$ and check whether it is also the MLS cover for $G'$.  If this is the case, each vertex $i\in [n]$ belongs to some set $i\in M$ from the MLS cover. When $M$ is of Type I, the reduced stabilizer groups $\St_{\ket{G}}^M, \St_{\ket{G,}}^M$ contain only a single non-trivial element, allowing us to determine the value of functions $F_G(i),F_{G'}(i)\in\{X,Y,Z\}$ based on \cref{prop:necessary_conditions_MLS}. However, if $M$ is of Type II, the reduced stablizer groups $\St_{\ket{G}}^M, \St_{\ket{G,}}^M$ contain three elements each, and the precise manner in which LU equivalence permutes these elements is not immediately clear. This ambiguity results in nine possible combinations for the values of $F_G(i),F_{G'}(i)\in\{X,Y,Z\}$. 

This issue can be mitigated by considering the \textit{intersection graph} of the MLS cover $M = \{M_i\}$, where each vertex corresponds to a set $M_i$, and two vertices $M_i$ and $M_j$ are connected by an edge if and only if $M_i \cap M_j \neq \emptyset$. Let $t$ denote the number of connected components in the intersection graph consisting solely of MLSs of Type II. By employing overlap-based arguments, as presented in \cref{ex:9_qubit}, we derive the following result:

\begin{proposition} 
\label{prop_pairs_of_func}
Consider two graphs $G$ and $ G'$ sharing the same MLS cover $\mathcal{M}$. 
There is a procedure that runs in $\mathcal{O}(n^3)$ time, which either rules out LU equivalence between $\ket{G}$ and $\ket{G'}$ or identifies pairs of functions $F_G^{\ell},F_{G'}^{\ell}:[n]\rightarrow \{X,Y,Z\}$ indexed by parameter $\ell\in \mathcal{L}$, such that if $\ket{G},\ket{G'}$ are LU-equivalent by some $U=U_1\otimes\cdots\otimes U_n$, then 
    \begin{align}
\label{eq:fG_fG'_functions_only_if}
F_{G'}^\ell (i) = U _i \,F_{G}^\ell (i) \,U_i^\dagger  
\end{align}
for at least one $\ell\in \mathcal{L}$. Moreover, $|\mathcal{L}|\leq 3^t$ where $t$ denotes the number of connected components in the intersection graph of $\mathcal{M}$ consisting entirely of MLSs of Type II. 
\end{proposition}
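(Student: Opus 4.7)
The plan is a constraint-propagation argument on the intersection graph $\mathcal{I}$ of the MLS cover $\mathcal{M}$. The per-MLS constraints come from \cref{prop:necessary_conditions_MLS}, and the cross-MLS propagation is exactly the overlap contradiction used in \cref{ex:9_qubit}. After propagation, the only residual freedom will live in connected components of $\mathcal{I}$ made entirely of Type II MLSs, and the bound $3^t$ will quantify how many consistent "colorings" of such a component can survive.

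\textbf{Local step.} Fix any hypothetical LU equivalence $U = U_1 \otimes \cdots \otimes U_n$ and an MLS $M \in \mathcal{M}$. By \cref{prop:necessary_conditions_MLS}, conjugation by $U$ induces a bijection $f_M : \St_{\ket{G}}^M \to \St_{\ket{G'}}^M$; since conjugation is a group homomorphism, $f_M$ is automatically a group isomorphism. If $M$ is Type I, both reduced stabilizer groups have order $2$, so $f_M$ is the unique non-trivial bijection, and reading off the Pauli letters of the unique non-identity stabilizer at each $i \in M$ determines $F_G(i)$ and $F_{G'}(i)$ (on vertices where these act non-trivially). If $M$ is Type II, both reduced stabilizer groups have order $4$, and the vertex-by-vertex data $(F_G|_M, F_{G'}|_M)$ is controlled by a choice of non-identity representative $s \in \St_{\ket{G}}^M$ together with its image $f_M(s) \in \St_{\ket{G'}}^M$, yielding at most nine local candidates per Type II MLS.

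\textbf{Global step.} I would construct $\mathcal{I}$: its nodes are the elements of $\mathcal{M}$, and an edge joins $M$ and $M'$ whenever $M \cap M' \neq \emptyset$. Any admissible pair of representatives at $M$ and $M'$ must agree on $F_G(i)$ and $F_{G'}(i)$ at every shared vertex $i$; violating this is precisely the obstruction from \cref{ex:9_qubit}. I would first fix the assignments forced by the Type I MLSs and propagate them along the edges of $\mathcal{I}$, pruning incompatible candidates at the Type II neighbours; any failure immediately outputs "not LU-equivalent". After this pass, the only remaining freedom sits in the $t$ connected components of $\mathcal{I}$ consisting solely of Type II MLSs. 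In each such component, fixing a representative stabilizer at one node propagates uniquely to the other nodes via the overlap constraint at each shared qubit, and only three globally consistent colorings survive per component. Multiplying over the $t$ pure Type II components gives $|\mathcal{L}| \leq 3^t$.

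\textbf{Complexity and principal obstacle.} Given $\mathcal{M}$, building $\mathcal{I}$ costs $\mathcal{O}(|\mathcal{M}|^2 n) = \mathcal{O}(n^3)$, and each propagation pass is linear in $|\mathcal{I}|$, so the procedure runs in $\mathcal{O}(n^3)$ time total. The hard part will be collapsing the naive "nine per Type II MLS" bound to the sharper "three per pure Type II component". Concretely, one must show that once a representative is fixed at one Type II node, \emph{both} the representative $s$ and its image $f_M(s)$ in every neighbouring Type II node are pinned down by the overlap constraint at the shared qubit, so that no additional branching occurs as the propagation spreads through the component. This rests on a careful case analysis of how the six group isomorphisms of $(\mathbb{Z}/2)^2$ restrict to a single qubit, together with the fact that for Type II MLSs \cref{prop:necessary_conditions_MLS} forces the on-site $U_i$ to be Clifford, so $U_i$ rigidly permutes $\{X, Y, Z\}$ at the shared vertex.
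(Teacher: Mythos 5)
Your overall architecture---per-MLS constraints from \cref{prop:necessary_conditions_MLS}, propagation along the intersection graph, a unique extension whenever the new MLS is of Type I or meets the already-covered domain, and one three-way branch to seed each connected component made entirely of Type II MLSs---is exactly the paper's proof (its \cref{lemma:extension} and \cref{lemma:extension2}, assembled recursively). However, your cross-MLS consistency check is wrong as stated: you require that admissible representatives ``agree on $F_G(i)$ and $F_{G'}(i)$ at every shared vertex $i$.'' Two overlapping Type I MLSs will in general force \emph{different} Pauli letters at a shared vertex $i$ on the $G$ side (say $Z$ from $M_1$ and $X$ from $M_2$); this is perfectly compatible with LU equivalence, since the constraints $U_i Z U_i^\dagger=\cdot$ and $U_i X U_i^\dagger=\cdot$ concern different inputs to $U_i$. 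Demanding literal agreement would make your procedure return ``not LU-equivalent'' on LU-equivalent pairs, violating the proposition. The correct compatibility condition (the paper's \cref{lemma:overlaps_conditions}) is that the letters at $i$ coincide on the $G$ side if and only if they coincide on the $G'$ side, i.e.\ $\supp\big(s_G^{M_1}s_G^{M_2}\big)=\supp\big(s_{G'}^{M_1}s_{G'}^{M_2}\big)$; this is what \cref{ex:9_qubit} actually exploits (same source letter $Z$, different target letters $Y$ versus $Z$). Accordingly, the single-valued function $F_G$ is defined by whichever MLS first covers a vertex, and later MLSs covering that vertex contribute constraints to be \emph{checked}, not recorded.

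Separately, you misplace the principal difficulty. Collapsing ``nine candidates per Type II MLS'' to ``three per pure Type II component'' needs no case analysis of the six automorphisms of $(\mathbb{Z}/2)^2$ and no appeal to $U_i$ being Clifford: since the proposition only requires that \emph{some} pair $(F_G^\ell,F_{G'}^\ell)$ be realized by $U$, you may fix the source representative $s_1\in\St_{\ket{G}}^M$ once and for all and branch only over its three possible images under $f_M$, which is precisely what \cref{lemma:extension2} does. The ingredient that actually makes the subsequent propagation through a component deterministic---and which you leave implicit---is that the three non-trivial elements of a Type II reduced stabilizer group carry three \emph{distinct} Pauli letters at every vertex of the MLS (cited from Ref.~\cite{Lina2024}); this is what lets a single already-assigned letter at a shared vertex select a unique element of $\St_{\ket{G}}^{M'}$ and a unique element of $\St_{\ket{G'}}^{M'}$, with no further branching.
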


\noindent
Algorithm 1 illustrates the procedure for determining pairs of functions $F_G^{\ell},F_{G'}^{\ell}$, and we prove its correctness in \cref{appendix:D}.

We conclude this section with a remark. The ambiguity in determining the functions $F_G^{\ell},F_{G'}^{\ell}$ arises specifically for vertices belonging to MLSs of Type II. However, as established in \cref{prop:necessary_conditions_MLS}, any LU operator acting on these vertices is necessarily Clifford. Nonetheless, it remains unclear how this observation could significantly simplify the algorithm presented in this paper.

\section{Reduction to special form of graphs and LU equivalence}
\label{sec:StepI} 

In the previous section, we showed that any LU equivalence between two graph states must map at least one Pauli string to another, see \cref{prop:fG_fG'_functions}. 
These Pauli strings were represented as functions $F_G,F_{G'}:[n]\rightarrow \{X,Y,Z\}$. While we could not directly identify a specific Pauli string, we introduced a procedure to find pairs of such strings/functions $F_G^\ell,F_{G'}^\ell$, guaranteeing that any LU equivalence must map at least one such pair. 
In this section, we demonstrate that if a specific pair $F_G^\ell,F_{G'}^\ell$ is mapped under LU equivalence, the initial graph states can be transformed into LC-equivalent representatives $|\hat{G}\rangle$ and $|\hat{G}\rangle$, where the LU equivalence between them takes a very specific form. Namely, we have the following results.

\begin{theorem}
\label{corollary:all_are_that_form}
    Any pair of LU-equivalent graph states $\ket{G}$ and $\ket{G'}$ is LC-equivalent to the pair $\hat{G}$ and $\hat{G'}$, i.e. 
    \[\ket{\hat{G}} \stackrel{LC}{\cong} \ket{G},
\quad
\ket{\hat{G'}} \stackrel{LC}{\cong} \ket{G'},\]
with adjacency matrices of the form
\begin{equation}
\label{auxiii_main_main}
\Gamma_{\hat{G}}=
\begin{bNiceArray}{c|c}
\Block{1-1}{\Gamma_{k|k}}&\Gamma_{k|n-k}^{}\\
\cline{1-2}
\Block{1-1}{\Gamma_{k|n-k}^{\text{T}}}&\Block{1-1}{0}\\
\end{bNiceArray},\quad
\Gamma_{\hat{G'}}=
\begin{bNiceArray}{c|c}
\Block{1-1}{\Gamma_{k|k}'}&\Gamma_{k|n-k}^{}\\
\cline{1-2}
\Block{1-1}{\Gamma_{k|n-k}^{\text{T}}}&\Block{1-1}{0}\\
\end{bNiceArray}.
\end{equation} 
which are LU equivalent, i.e. $|\hat{G'}\rangle=\hat{U}|\hat{G}\rangle$,  by the operator of the following form
\begin{equation}
\label{eq:ZX_main}
\hat{U}=
\prod_{i=1}^k Z^{\alpha_i}_i
\prod_{i=k+1}^nX^{\alpha_i}_i
\end{equation}
for some values $\alpha_i \in [0,2]$.
\end{theorem}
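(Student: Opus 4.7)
The plan is to reduce $(G, G')$ to the asserted canonical form in two rounds of local Cliffords, tracking how $U$ transforms throughout.

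Round 1 normalises the on-site Paulis of $U$. Applying \cref{prop:fG_fG'_functions} to $U = U_1 \otimes \cdots \otimes U_n$ yields functions $F_G, F_{G'}:[n] \to \{X,Y,Z\}$ with $U_i F_G(i) U_i^\dagger = F_{G'}(i)$. I fix a subset $K \subseteq [n]$ of size $k$ (relabelling qubits so $K = [k]$) and, for each $i$, pick single-qubit Cliffords $C_i$ acting on $\ket G$ and $C_i'$ acting on $\ket{G'}$ that rotate both $F_G(i)$ and $F_{G'}(i)$ to a common target Pauli $P_i := Z$ if $i \leq k$ and $P_i := X$ if $i > k$; such $C_i, C_i'$ exist because the single-qubit Clifford group acts transitively on $\{X,Y,Z\}$. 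The conjugated operator $\tilde U_i := C_i' U_i C_i^\dagger$ then commutes with $P_i$, and since the centraliser of $Z$ (respectively $X$) in $U(2)$ modulo global phase is exactly $\{Z^\alpha\}$ (respectively $\{X^\alpha\}$), we read off $\tilde U = \prod_{i\leq k} Z_i^{\alpha_i} \prod_{i>k} X_i^{\alpha_i}$, matching \cref{eq:ZX_main}.

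In Round 2 I reshape the resulting stabilizer states $C\ket G$ and $C'\ket{G'}$ into graph states with adjacency matrices of the block form \cref{auxiii_main_main}, while preserving the shape of $\tilde U$ from Round 1. For this I restrict the further local Cliffords $D_i, D_i'$ to the centraliser of $P_i$, namely $\{\Id, Z, S, S^\dagger\}$ on sites $i\leq k$ and the $X$-basis analogue $\{\Id, X, X^{1/2}, X^{-1/2}\}$ on sites $i > k$; conjugation of $\tilde U$ by these leaves each factor of the form $Z^{\alpha_i}$ or $X^{\alpha_i}$ (only the $\alpha_i$'s may pick up signs). Combining these site-restricted Cliffords with row operations inside the stabilizer group, a symplectic Gauss--Jordan elimination in the spirit of \cref{eq:stabilizer_generator_normal_form} brings the generators of $C\ket G$ to $\hat s_i = X_i \prod_j Z_j^{\hat\gamma_{ij}}$ with $\hat\gamma_{ij}=0$ whenever $i,j>k$ — which is exactly the graph-state stabilizer for an adjacency matrix of the form \cref{auxiii_main_main}. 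Executing the same reduction coherently on both sides produces $\hat G$ and $\hat{G'}$ sharing the common off-diagonal block $\Gamma_{k|n-k}$ and the common vanishing lower-right block, with $\hat U$ as claimed.

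The main obstacle I foresee is Round 2 — namely verifying that the symplectic column operations required to reach the canonical form can always be realised inside the $P_i$-centralisers, which are strict subgroups of the full single-qubit Clifford group. Equivalently, the partition $K$ chosen in Round 1 (encoding the $Z$- vs $X$-assignment coming from the MLS-derived functions $F_G$ and $F_{G'}$) must coincide with the partition naturally dictated by the normal form of the check matrix, so that no Hadamard swapping the two halves is needed. Showing that this alignment is always achievable is the technical heart of the argument, and this is where the structural information extracted from the minimal local sets in \cref{sec:Necessary conditions} is essential.
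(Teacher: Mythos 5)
Your proposal follows the same broad strategy as the paper (rotate the Pauli strings $F_G,F_{G'}$ to a standard form by local Cliffords, identify the residual freedom in $U$ from a centraliser computation, then bring the check matrices to a normal form), but it has two genuine gaps. First, the centraliser claim in Round~1 is wrong as stated: the set of single-qubit unitaries preserving $Z$ \emph{up to phase} is $\{X^{s}Z^{\alpha}:s\in\{0,1\},\,\alpha\in[0,2]\}$, not $\{Z^{\alpha}\}$ (see \cref{lemma:form_preserving_pauli}); the sign $U_iZU_i^{\dagger}=-Z$ cannot be excluded, because \cref{prop:fG_fG'_functions} only controls the Pauli \emph{string}, whose individual tensor factors may each pick up compensating signs, and the transition matrices $\mathbf{C}(\cdot,\cdot)$ themselves introduce signs on the unmatched Paulis. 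Consequently you only obtain $\tilde U=\prod_i X^{s_i}Z^{\alpha_i}$ (and the $X$-basis analogue), and eliminating the residual bit-flips $X^{s_i}$ is a substantial step: the paper does it in \cref{prop:solution_1} via the weighted-hypergraph calculus, and that argument crucially uses the already-established block structure of $\hat G$ (no edges among the last $n-k$ vertices), so it cannot be waved away inside Round~1.

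Second, the partition $K=[k]$ cannot be fixed freely at the start of Round~1. In the paper's proof \emph{all} sites are first rotated so that $F_G(i)\mapsto Z$; the integer $k$ and the split of the vertex set are then \emph{dictated} by $k=\rk\X_{\Sg}$ of the resulting check matrix (\cref{lemma:Auxi_0_pm}), and only afterwards is $H^{\otimes n-k}$ applied to the complement to produce a graph state (\cref{lemma:Auxi_4_pm}). You correctly identify this alignment problem as ``the technical heart'' of your Round~2, but you do not resolve it, and resolving it is essentially the content of \cref{lemma:Auxi_0_pm,lemma:Auxi_2_pm,lemma:Auxi_3.5_pm,lemma:Auxi_4_pm}. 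Relatedly, the assertion that running the reduction ``coherently on both sides'' yields the \emph{same} off-diagonal block $\Gamma_{k|n-k}$ for $\hat G$ and $\hat{G'}$ requires an argument: the paper proves it (\cref{lemma:Auxi_3.5_pm}) by observing that the two intermediate stabilizer states are related by a diagonal (phase) operator, which preserves the computational-basis support and hence forces equal $\X$-parts of the check matrices after a suitable change of generators. Without these ingredients the proposal does not yet constitute a proof.
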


\begin{proposition}
\label{Th:reduction}
Consider two graph states $\ket{G}$ into $\ket{G'}$, and all LU operators $U=U_1\otimes\cdots\otimes U_n$ that satisfies 
\[
F_{G'} (i) = U _i \,F(i) \,U_i^\dagger
\]
for given functions $F_G,F_{G'}:[n]\rightarrow \{X,Y,Z\}$, and for all $i\in [n]$. 
Then, there is a procedure running in $\mathcal{O}(n^3)$ time, which either rules out LU equivalence between $\ket{G}$ and $\ket{G'}$ or constructs graphs $|\hat{G}\rangle$ and $|\hat{G'}\rangle$ as discussed in the previous theorem.   
\end{proposition}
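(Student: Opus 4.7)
My plan is to execute the reduction in two explicit Clifford stages followed by a compatibility check. In the first stage, for each $i\in[n]$ I fix a single-qubit Clifford transition matrix $T_i$ with $T_i F_G(i) T_i^\dagger = F_{G'}(i)$, which by \cref{appendix:0} is explicit and $\mathcal{O}(1)$ to compute. Any local unitary $U=\bigotimes_iU_i$ satisfying \cref{eq:fG_fG'_functions} factors as
\[
U_i \;=\; e^{i\theta_i}\,F_{G'}(i)^{\alpha_i}\,T_i,
\]
because $U_i T_i^\dagger$ commutes with $F_{G'}(i)$ and therefore lies in the one-parameter family $e^{i\theta}F_{G'}(i)^{\alpha}$. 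The search for $U$ is thus equivalent to the search for parameters $\alpha_i\in[0,2)$ with $\bigl(\prod_i F_{G'}(i)^{\alpha_i}\bigr)\,T\ket{G}=\ket{G'}$, where $T=\bigotimes_iT_i$; the intermediate state $T\ket{G}$ is LC-equivalent to $\ket{G}$ but is not yet itself a graph state.

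In the second stage I bring both $T\ket{G}$ and $\ket{G'}$ to a canonical graph-state form. I choose a partition $[n]=A\sqcup B$ with $|A|=k$ assigning canonical Pauli $Z$ to vertices in $A$ and $X$ to vertices in $B$, and then, using only local Cliffords drawn from the single-qubit normalizer of $F_{G'}(i)$ at each vertex—so that the diagonal form of the LU operator is preserved modulo a redefinition of the $\alpha_i$—I Gauss-reduce the check matrix of each state to the strong normal form of \cref{eq:stabilizer_generator_normal_form} associated with the ordering $A,B$. The adjacency matrices of the resulting graph states $\hat G,\hat{G'}$ then take precisely the block shape of \cref{auxiii_main_main}, and the transported LU operator takes the form $\prod_{i\in A}Z_i^{\alpha_i}\prod_{i\in B}X_i^{\alpha_i}$ of \cref{eq:ZX_main}.

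Finally I compare the two reduced adjacency matrices. Both have a zero bottom-right block by construction, so LU-equivalence additionally requires their shared off-diagonal block $\Gamma_{k|n-k}$ to agree; if it does not, the procedure returns NO, otherwise it outputs $\hat G$ and $\hat{G'}$. Each operation above consists of single-qubit multiplications together with Gaussian elimination over $\GF{2}$ on the $n\times 2n$ check matrix, giving the advertised $\mathcal{O}(n^3)$ complexity. The hard part is Stage~2: standard check-matrix reduction assumes arbitrary local Cliffords, whereas here the single-qubit operations are confined to the normalizer of $F_{G'}(i)$, which is a much smaller subgroup. I expect this to be resolvable by a short case analysis over $F_{G'}(i)\in\{X,Y,Z\}$ showing that the admissible subgroup, combined with row operations on the generators, can still realise every pivot required by the strong normal form; if some pivot truly cannot be completed within the restricted subgroup, that very obstruction serves as the NO-certificate ruling out the targeted form of LU equivalence.
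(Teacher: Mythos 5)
Your Stage 1 is sound and in fact slightly streamlines the paper's opening move: conjugating by a single transition matrix $T_i$ taking $F_G(i)$ directly to $F_{G'}(i)$ makes $U_iT_i^\dagger$ commute \emph{exactly} with $F_{G'}(i)$, so you land immediately in the one-parameter family $e^{i\theta}F_{G'}(i)^{\alpha}$, whereas the paper routes both functions through $Z$ and must then carry an extra $X^{s_i}$ flip factor (\cref{lemma:form_preserving_pauli}) that it only eliminates at the very end via \cref{prop:solution_1}. That part of your plan would work.

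The genuine gap is Stage 2, which you yourself flag as unresolved, and the difficulty you anticipate there is a misdiagnosis. The reduction of the check matrix to normal form does \emph{not} require local Cliffords at all, let alone Cliffords confined to the normalizer of $F_{G'}(i)$: it is achieved purely by \emph{row operations}, i.e.\ by changing the generating set of the stabilizer group, which leaves the state and the transported LU operator completely untouched (\cref{lemma:Auxi_0_pm}). Local Cliffords enter only at two specific, explicitly computed points --- the operator $W$ of \cref{lemma:Auxi_2_pm} that clears the diagonal of $\Gamma_{k|k}$ and trivializes the phase vector, and the final $\id^{\otimes k}\otimes H^{\otimes n-k}$ of \cref{lemma:Auxi_4_pm} --- and both are checked to preserve the $X^sZ^\alpha$ form of the operator (\cref{ob:Auxi_2.5_pm}). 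So there is no ``pivot that cannot be completed within the restricted subgroup''; your proposed case analysis is attacking a non-problem while leaving the actual construction unspecified. Moreover, the partition $[n]=A\sqcup B$ and the value of $k$ are not free choices: $k=\rk\X_\Sg$ is forced by the normal form.

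Second, your final step --- independently Gauss-reducing both states and returning NO if the two $\Gamma_{k|n-k}$ blocks disagree --- is not justified as stated. Gaussian elimination to normal form is not canonical, so literal disagreement of the off-diagonal blocks does not by itself certify non-equivalence. The paper closes this hole with a support argument (\cref{lemma:Auxi_3.5_pm}): the two intermediate stabilizer states are related by a \emph{diagonal} operator, which preserves the support in the computational basis, and the support determines the row space of the $\X$-part of the check matrix; hence generators with the \emph{same} $\X$-part (and thus the same $\Gamma_{k|n-k}$) can always be chosen for the second state when the targeted LU equivalence exists. Without an argument of this kind your NO-certificate is unsound, and without carrying the LU operator explicitly through each reduction step you cannot conclude that it ends up in the form of \cref{eq:ZX_main}.
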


\noindent
The construction of graphs $\hat{G}$ and $\hat{G'}$ is algorithmic and presented in Algorithm 2 on \cref{fig:algo_subroutines}. \cref{appendix:B} contains proof of \cref{corollary:all_are_that_form} and \cref{Th:reduction}.

Notice that pairs of graphs with adjacency matrices of the form (\ref{auxiii_main_main}) and LU equivalence of the form (\ref{eq:ZX_main}) were previously used in Ref.~\cite{LULCfalse,LUnoLC,claudet2024localequivalencestabilizerstates} to construct counterexamples to the ``LU=LC conjecture'', i.e. graph state pairs that are LU- but not LC-equivalent, see \cref{fig:K_graph}. 
\cref{corollary:all_are_that_form} establishes the reverse statement, namely that all such counterexamples must take this form. 

We conclude this section by listing some known examples related to the ``LU=LC conjecture'' from the literature. After analyzing these examples, we propose a conjecture.

\begin{example}[Structure of counterexamples to the ``LU=LC conjecture'']
\label{ex:LU_LC_counterexamples}
\label{ex:K_graph} 
A bipartite \textit{Kneser graph} $ K(k,t) $ is a bipartite graph with two independent sets of vertices. The first set is $V_1 := [k]$, and the second set consists of all subsets \( A \subset [k] \) of size \( t \), denoted as \( V_2 := \binom{[k]}{t} \). Two vertices \( i \in V_1 \) and \( A \in V_2 \) are connected if and only if \( i \in A \). The graph \( K(k,t) \) has \( |V_1| + |V_2| = k + \binom{k}{t} \) vertices. \Cref{fig:K_graph} presents an example of \( K(7,5) \) with 28 vertices. 
The class of bipartite Kneser graphs can be generalized as follows. The \textit{generalized bipartite Kneser graph} \( K(k, (t_1, \ldots, t_s)) \) is a bipartite graph where the first set of vertices is \( V_1 := [k] \), and the second set \( V_2 \) is composed of subsets of \( [k] \) of sizes \( t_1, \ldots, t_s \), i.e., \( V_2 := \binom{[k]}{t_1} \cup \cdots \cup \binom{[k]}{t_s} \). Two vertices \( i \in V_1 \) and \( A \in V_2 \) are connected if and only if \( i \in A \). 
%
We denote by \( \hat{K}(k,t)) \) a graph obtained from $ K(k, t) $ by introducing additional edges between all vertices in \( V_1 := [k] \), similarly \( \hat{K}(k, (t_1, \ldots, t_s)) \). See \Cref{fig:K_graph} for an illustration.

The first known counterexample to the LU-LC equivalence conjecture was obtained via numerical search in Ref.~\cite{LULCfalse}. Initially, its structure was unclear. However, as pointed out in Ref.~\cite{LUnoLC}, the pair of graphs from Ref.~\cite{LULCfalse} is equivalent to the pair \( K(6, (5, 4)) \) and \( \hat{K}(6, (5, 4)) \). Moreover, Ref.~\cite{LUnoLC} systematically used the pairs \( K(k, (t_1, \ldots, t_s)) \) and \( \hat{K}(k, (t_1, \ldots, t_s)) \) to construct additional counterexamples to the ``LU=LC conjecture''. \Cref{fig:K_graph} illustrates one such example. More recently, Ref.~\cite{claudet2024localequivalencestabilizerstates} established conditions on \( k \) and \( t \) for which the pair of graphs \( K(k,t) \) and \( \hat{K}(k,t) \) is LU- but not LC-equivalent.
\end{example}

\begin{conjecture}
    Any pair of graphs $G,G'$ which is LU, but not LC-equivalent can be transformed by LC-operatons to the pair of graphs $K(n,(k_1,\ldots,k_s))$ and $ \hat{K}(k, (t_1, \ldots, t_s)) $, up to the removal of some twin vertices (i.e. vertices with identical neighborhoods). 
\end{conjecture}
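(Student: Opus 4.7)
The plan is to begin from Theorem \ref{corollary:all_are_that_form}, which reduces any pair of LU-equivalent graph states $\ket{G}, \ket{G'}$ (up to LC operations) to a canonical pair $\ket{\hat{G}}, \ket{\hat{G}'}$ with adjacency matrices of the block form (\ref{auxiii_main_main}), related by the explicit LU transformation $\hat{U} = \prod_{i=1}^k Z^{\alpha_i}_i \prod_{i=k+1}^n X^{\alpha_i}_i$. The hypothesis that $\ket{G}$ and $\ket{G'}$ are \emph{not} LC-equivalent translates, after this reduction, into the statement that no admissible realisation of $\hat{U}$ has all $\alpha_i \in \tfrac{1}{2}\mathbb{Z}$; at least one strictly non-Clifford exponent must occur.

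Next I would expand the identity $\hat{U}\ket{\hat{G}} = \ket{\hat{G}'}$ in the computational basis. The amplitudes of $\ket{\hat{G}}$ and $\ket{\hat{G}'}$ are $\pm 2^{-n/2}$ with signs given by $\mathbb{F}_2$-quadratic forms read off from $\Gamma_{\hat{G}}, \Gamma_{\hat{G}'}$. The first factor $\prod_{i=1}^k Z^{\alpha_i}_i$ multiplies each amplitude by $e^{i\pi \sum_i \alpha_i x_i}$, while $\prod_{i=k+1}^n X^{\alpha_i}_i$, when expanded in the $Z$-basis and summed over the $(n-k)$-tail, can be collected by its dependence on $x\in\{0,1\}^k$. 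The equation then reduces to a finite family of polynomial identities in the phases $\zeta_j := e^{i\pi \alpha_j}$, indexed by subsets $A\subseteq [k]$, whose coefficients are determined entirely by the bipartite block $\Gamma_{k|n-k}$ and by the difference $\Gamma'_{k|k}-\Gamma_{k|k}$.

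The combinatorial core of the proof is then to classify those systems of identities that admit a \emph{non-Clifford} solution $(\zeta_j)$. The claim is that the difference $\Gamma'_{k|k}-\Gamma_{k|k}$ must be the complete-graph matrix $J-I$ on some subset (recovering the $K$ versus $\hat{K}$ distinction), and that the columns of $\Gamma_{k|n-k}$ must, after removing duplicates, range over \emph{all} subsets of $[k]$ of some prescribed sizes $t_1,\ldots,t_s$. Heuristically, the identities are Fourier-type vanishing conditions on the multiset of columns, and the forward direction of Ref.~\cite{claudet2024localequivalencestabilizerstates} already exhibits that such cancellations occur whenever the columns form complete layers $\binom{[k]}{t_j}$. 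Twin vertices correspond to repeated columns, contributing a redundant factor to every identity and thereby explaining the ``up to removal of twin vertices'' clause.

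The principal obstacle is exactly this classification step: converting the forward implication of Ref.~\cite{claudet2024localequivalencestabilizerstates} into a full converse requires showing that no other family of columns can produce the same non-Clifford cancellations. This will likely demand a careful Fourier or representation-theoretic argument on the Boolean lattice, combined with uniqueness results for symmetric incidence structures (one expects the symmetric group on $[k]$ to act transitively on any orbit of columns forced by a non-Clifford identity, pinning the configuration to a union of layers $\binom{[k]}{t_j}$). A secondary difficulty is the non-uniqueness of the reduction of Theorem \ref{corollary:all_are_that_form}: distinct admissible choices of the functions $(F_G^\ell, F_{G'}^\ell)$ from \cref{prop_pairs_of_func} produce different normal forms, and one must verify that the final conclusion is stable under this choice, with the twin-vertex allowance precisely absorbing the residual ambiguity.
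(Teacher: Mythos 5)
This statement is labelled as a \emph{conjecture} in the paper: the authors do not prove it, and there is no proof to compare yours against. Your proposal is likewise not a proof but a research plan, and you say as much yourself. The setup you describe is essentially the paper's own machinery: Theorem~\ref{corollary:all_are_that_form} reduces any LU-equivalent pair to the block normal form (\ref{auxiii_main_main}) with an equivalence of the form (\ref{eq:ZX_main}), and \cref{prop:solution_2} already converts the condition $\hat{U}\ket{\hat{G}}=\ket{\hat{G}'}$ into the explicit system (\ref{eq:11}) in the exponents $\alpha_i$, indexed by subsets $e\subseteq[k]$ --- this is exactly your ``family of polynomial identities indexed by $A\subseteq[k]$ with coefficients determined by $\Gamma_{k|n-k}$ and $\Gamma'_{k|k}-\Gamma_{k|k}$.'' So the first two paragraphs of your proposal reproduce known reductions rather than new progress.

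The genuine gap is the step you yourself flag as the ``combinatorial core'': classifying which column multisets of $\Gamma_{k|n-k}$ admit a solution of (\ref{eq:11}) outside $V_2$ (non-Clifford exponents) while admitting none inside it, and showing that, after deleting repeated columns (twins), the surviving columns must form complete layers $\binom{[k]}{t_1}\cup\cdots\cup\binom{[k]}{t_s}$ with $\Gamma'_{k|k}-\Gamma_{k|k}$ the complete-graph matrix. That classification \emph{is} the conjecture; Ref.~\cite{claudet2024localequivalencestabilizerstates} only gives the forward direction (complete layers do produce LU-not-LC pairs), and no converse is known. Your appeal to ``Fourier-type vanishing conditions'' and transitivity of the symmetric group action is a heuristic, not an argument --- nothing forces the solution set of (\ref{eq:11}) to be invariant under any symmetry of $[k]$, since $\Gamma_{k|n-k}$ need have no symmetry a priori. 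You also assert without justification that $\Gamma'_{k|k}-\Gamma_{k|k}$ must be $J-I$ on some subset. Finally, note a secondary issue you partly acknowledge: the normal form of Theorem~\ref{corollary:all_are_that_form} depends on the choice of $\ell\in\mathcal{L}$ and on the MLS cover, so even a successful classification for one normal form would need an argument that the conclusion is independent of that choice; the twin-vertex clause does not obviously absorb this ambiguity. As it stands the proposal does not establish the statement.
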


\begin{figure*}
    \centering
    \includegraphics[width=0.8\linewidth]{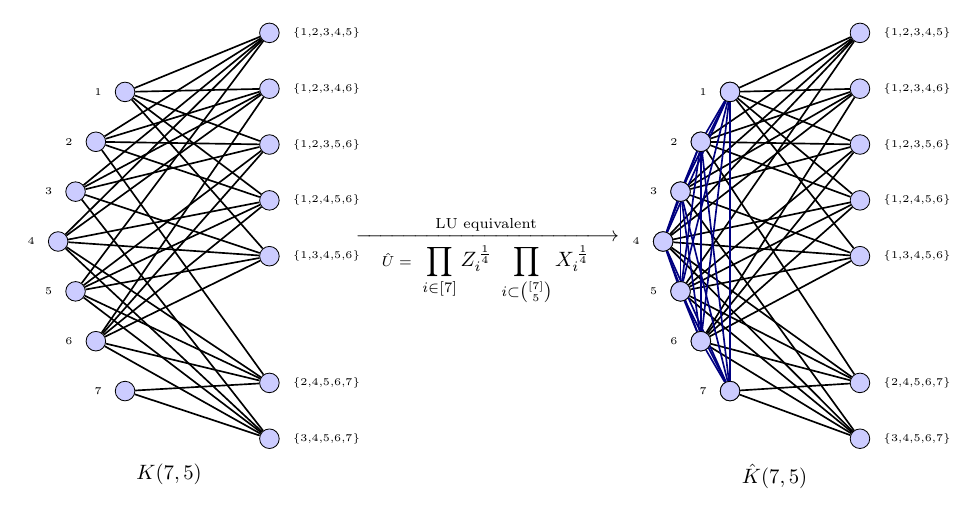}
    \caption{To the left, the bipartite Kneser graph \( K(7,5) \) is spanned between the vertex sets \( V_1 := [7] \) and \( V_2 := \binom{[7]}{5} \), i.e., all 5-element subsets of \([7]\). The graph has a total of 28 vertices. To the right, the graph \(\hat{K}(7,5)\) is shown, which is obtained from \(K(7,5)\) by additionally connecting all vertices in \( V_1 = [7] \). The corresponding graph states are LU-equivalent but not LC-equivalent. Their equivalence is established via the operator \(\hat{U}\), which aplies \( Z^{1/4}_i \) to all vertices \( i \in V_1 \) and \( X^{1/4}_i \) to all vertices \( i \in V_2 \). This example is discussed in Ref.~\cite{LUnoLC}.
}
    \label{fig:K_graph}
\end{figure*}

\section{Reduction to LP problem in modular arithmetic}
\label{sec:StepII}

\cref{Th:reduction}  reduces the problem of finding LU transformation between arbitrary graph states to the problem of finding an LU transformation between specific graph states and of a specific form (\ref{eq:ZX_main}). 
Fortunately, for those graphs we can rely on formulas derived in Ref.~\cite{G_hne_2014,LUnoLC} to track the changes in the graphs by LU-operators of the form (\ref{eq:ZX_main}). 
For this, we shall use the language of weighted hypergraph states. 
We will present LU-operators of the form (\ref{eq:ZX_main}) might as a graphical rules on hypergraph states.

Firstly, notice that a single-qubit operator $Z^\alpha_i$ acting on $i$th register transforms arbitrary  weighted hypergraph state in a simple way. 
Indeed, consider a weighted hypergraph $H_w=(V,f)$ and arbitrary vertex $i\in V$. 
Then by acting with $Z^\alpha_i$ on a weighted hypergraph state $\ket{H_w}$, we obtain another weighted hypergraph state
\begin{equation}
\label{eq:Z_action}
Z^\alpha_i\ket{H_w} 
=
\ket{C_{\{i\},\alpha }\,H} 
\end{equation}
where $C_{\{i\},\alpha }\,H$ is a weighted hypergraph obtained from $H_w$ by adding the weight $\alpha$ to hyperedge $\{i\}$. 

Secondly, we present how a single-qubit operator $X^\alpha_i$ acting on $i$th register transforms arbitrary hypergraph state. 
For our purpose it is enough to present such action in the restricted setting, for general formula, we refer to \cite{LUnoLC}. 
Consider a hypergraph $H=(V,E)$ and arbitrary vertex $i\in V$, and assume that all hyperedges adjacent to $i$ are of size two, i.e. $|e|=2$ for all $e\in E$ such that $i\in e$. 
Moreover, denote by $\delta_i:=\{v\in V: \{i,v\}\in E\}$ the neighborhood of $i$. 
Then by acting with $X^\alpha_i$ on a hypergraph state $\ket{H}$, we obtain a weighted hypergraph state
\begin{equation}
\label{eq:X_action}
X^\alpha_i\ket{H} 
=
\ket{C_{\delta_i,\alpha }\,H}
\end{equation}
where $C_{\delta_i,\alpha }\,H$ is a weighted hypergraph obtained from $H$ by adding the weights $(-2)^{|e|-1}\alpha$ to all hyperedges $e\subset \delta_i$. 

Notice that unlike in \cref{eq:Z_action}, \cref{eq:X_action} assumes action on a hypergraph state, and does not extend to weighted hypergraphs. Nevertheless, as it was observed in Ref.~\cite{LUnoLC}, if vertices $i,j$ are disconnected, the action $X_i^{\alpha_i}$ and $X_j^{\alpha_j}$ commute. In particular, consider graph $G=(V,E)$ and a vertex set $[k,n]:=\{k+1,\ldots,n\}\subset V$. Assume that all vertices in $[k,n]$ are mutually disconnected. 
Then the joint action of $(X_i^{\alpha_i})_{i\in [k,n]}$ operators on graph state $\ket{G}$ is the following
\begin{equation}
\label{eq:X_mluti_action}
\prod_{i=k+1}^n
X_i^{\alpha_i}
\ket{G} = \ket{\prod_{i=k+1}^n C_{\delta_i,\alpha_i}\,H}
\end{equation}
where $\delta_i$ is a neighborhood of $i$, and we use notation (\ref{eq:X_action}). 
Furthermore, all such operators $C_{\delta_i,\alpha_i}$ pairwise commute. 

After analyzing how the operators $X_i^{\alpha_i}$ and $Z_i^{\alpha_i}$ act on graph and weighted hypergraph states, we can reformulate the statement of \cref{Th:reduction} in terms of hypergraphs and, ultimately, as a linear programming (LP) problem in modular arithmetic. In short, one needs to find values $\alpha_i$ for $i\in [k,n]$ for operators $X_i^{\alpha_i}$ acting on $\hat{G}$

\begin{theorem}
\label{prop:solution_2}
Consider graph states $\hat{G}$ and $\hat{G'}$ corresponding to the adjacency matrices (\ref{auxiii_main_main}). There exists LU transformation $\hat{U}$ of the form (\ref{eq:ZX_main}) between $|\hat{G}\rangle$ and $|\hat{G'}\rangle$ if and only if there exists solution to the following system of $2^k-k$ linear equations with $n-k$ variables $\vec{\alpha}:=(\alpha_{k+1},\ldots,\alpha_n)$ in modular arithmetic (mod $2$):
\begin{equation}
\label{eq:11}
    A\cdot\Vec{\alpha} =\Vec{b} \quad (\text{mod}\, 2),
\end{equation}
where $A$ is $(2^{k}-k)\times (n-k)$ matrix with rows indexed by all subsets $e\subset[k]$ such that $|e|\neq 1$ and columns indexed by $i\in [k,n]$ defined as $A_{e,i}=2^{|e|-1}$ if $e\subset \delta_i$ where $ \delta_i$ is a neigberhood of $i$ in graph $G$, otherwise $A_{e,i}=0$; and vector $\Vec{b}=(b_e)$ has rows indexed by all subsets $e\subset[k]$ such that $|e|\neq 1$, with the entry $b_{e}=1$ if $|e|=2$ and $e\in G \xor e\in G'$, and $b_{e}=0$ otherwise.
%
\end{theorem}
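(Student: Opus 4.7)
The plan is to apply $\hat{U}$ to $|\hat{G}\rangle$ using the graphical action rules \cref{eq:Z_action,eq:X_mluti_action}, express the result as a weighted hypergraph state, and then equate weights hyperedge by hyperedge with $|\hat{G'}\rangle$. Since $|\hat{G'}\rangle$ is a graph state, all of its hyperedges have size $2$ with weight $1$, so matching provides equations for every other hyperedge as well.

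The key structural observation is that $\{k+1,\ldots,n\}$ is an independent set in $\hat{G}$ (the bottom-right block of $\Gamma_{\hat{G}}$ is zero) and the neighbourhood $\delta_i$ of each $i>k$ lies in $[k]$. The operators $X^{\alpha_i}_i$ for $i>k$ therefore pairwise commute and, by \cref{eq:X_mluti_action}, transform $|\hat{G}\rangle$ into a weighted hypergraph state in which every subset $e\subset\delta_i\subset[k]$ has its weight incremented by $(-2)^{|e|-1}\alpha_i$. Since the operators $Z^{\alpha_i}_i$ for $i\leq k$ act on disjoint qubits, they commute with all $X^{\alpha_j}_j$ and, by \cref{eq:Z_action}, add weight $\alpha_i$ to the singleton $\{i\}$. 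Crucially, none of these updates touch the cross-edges recorded by $\Gamma_{k|n-k}$, nor create any hyperedge containing a vertex from $\{k+1,\ldots,n\}$, so those edges carry weight $1$ in both $\hat{U}|\hat{G}\rangle$ and $|\hat{G'}\rangle$ and match automatically.

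Matching the remaining hyperedge weights modulo $2$ then produces the claimed system. For each singleton $\{i\}$ with $i\leq k$ the accumulated weight $\alpha_i+\sum_{j>k,\,i\in\delta_j}\alpha_j$ must vanish mod $2$; since $\alpha_i$ appears in this equation alone, it can be chosen to satisfy it for any values of $\alpha_{k+1},\ldots,\alpha_n$, so these $k$ singleton equations impose no constraint and justify why they are excluded from \cref{eq:11}. For each $e\subset[k]$ with $|e|\geq 2$, the accumulated weight
$(\Gamma_{k|k})_{jj'}\cdot\mathbf{1}_{|e|=2}+\sum_{i>k,\,e\subset\delta_i}(-2)^{|e|-1}\alpha_i$
must equal $(\Gamma_{k|k}')_{jj'}\cdot\mathbf{1}_{|e|=2}$ modulo $2$, and rearranging yields exactly \cref{eq:11}, with $b_e$ as stated. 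The sign discrepancy between $(-2)^{|e|-1}$ and the matrix entry $2^{|e|-1}$ is absorbed either by the substitution $\alpha_i\mapsto -\alpha_i\pmod 2$ (which leaves every $b_e$ invariant) or is automatically irrelevant when the right-hand side vanishes, as it does for all $|e|\geq 3$.

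The main technical point to be careful about is the sign and parity bookkeeping in modular arithmetic with real variables $\alpha_i\in[0,2]$, and the check that no hyperedge of $\hat{U}|\hat{G}\rangle$ escapes the vertex set $[k]$ (beyond the inherited cross-edges). Both follow cleanly from $\delta_i\subset[k]$ and from the fact that $Z^{\alpha_i}$ only modifies singletons. Combined with the free choice of $\alpha_i$ for $i\leq k$, this establishes the desired biconditional between the existence of an LU transformation of the form \cref{eq:ZX_main} and the solvability of \cref{eq:11}.
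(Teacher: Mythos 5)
Your proposal is correct and takes essentially the same route as the paper's proof: apply the graphical rules \cref{eq:Z_action,eq:X_mluti_action} to $\hat{U}\ket{\hat{G}}$, equate the resulting hyperedge weights with those of $\ket{\hat{G'}}$ case by case in $|e|$, and discard the singleton equations because each contains the otherwise-unconstrained variable $\alpha_i$, $i\leq k$. Your added care about the sign discrepancy between $(-2)^{|e|-1}$ and $A_{e,i}=2^{|e|-1}$ is sound — the sign is uniform within each equation and $-b_e\equiv b_e \pmod 2$ — and addresses a point the paper's proof glosses over.
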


\begin{proof}
Assume now, that graph states $|\hat{G}\rangle$ and $|\hat{G'}\rangle$ are LU equivalent by the operator $\hat{U}$ of the form (\ref{eq:ZX_main}). Using (\ref{eq:X_mluti_action}) followed by (\ref{eq:Z_action}), we have:
\begin{align}
\ket{\hat{G}'}
= \ket{\prod_{i=1}^k  C_{\{i\},\alpha_i} \cdot \prod_{i=k+1}^n  C_{\delta_i,\alpha_i}\,\hat{G}},
\label{eq:kkk7}
\end{align}
where the weighted hypergraph on the right side is obtained from graph $\hat{G}$ by adding the weights $(-2)^{|e|-1}\alpha_i$ to all hyperedges $e\subset \delta_i$ for all $i=k+1,\ldots,n$ and by adding $+\alpha_i$ weights to all the single-vertex hyperedges $i\in [k]$. Since weighted hypergraph states are equal if and only if they correspond to the same hypergraphs, the hypergraphs on both sides of (\ref{eq:kkk7}) must be identical. Notice that weights might be different only on hyperedges $e\subset[k]$ which label equations in the system (\ref{eq:11}). We further investigate weights by those hyperedges in three separate cases: $|e|>2,|e|=2,|e|=1$.
As $\hat{G}'$ is a graph, in the hypergraph on the right side the weights for all hyperedges $e$ of the size $|e|> 2$ on the right-hand side must vanish modulo two, leading to (\ref{eq:11}) for all $e:|e|>2$. Similarly, analyzing the total weights for all hyperedges $e$ of the size $|e|= 2$ on both sides of (\ref{eq:kkk7}) results in remaining equations in (\ref{eq:11}). For each $e$ with $|e|=1$, an additional equation of type (\ref{eq:11}) indexed by $e \in [k]$ exists. However, since each equation involves a single independent variable $\alpha_e$, both the equations and variables $\alpha_e$ for $e \in [k]$ can be omitted.
\end{proof}


\section{Solution of LP problem}
\label{AppB}

In the two previous sections, we effectively reduced the problem of verifying LU-equivalence between a given pair of graph states to solving a linear system of equations in modular arithmetic (see \cref{prop:solution_2}). Note, however, that system (\ref{eq:11}) can have exponentially many equations in the number of nodes $n$. In this section, we discuss the complexity of solving such a system. In particular, we demonstrate that the problem can be restricted to a discrete set of solutions, see \cref{corollary:reduction_Further3}. Consequently, we discuss strategies to circumvent the challenge of dealing with a potentially exponential number of equations in $n$, as detailed in \cref{lemma:reduction_LC_r}. 
%

Subsequently, we discuss examples of known pairs of states that are LC-equivalent but not LU-equivalent and comment on how our algorithm performs for such cases. Based on those findings, we formulate conjectures regarding the feasibility of determining LU-equivalence between arbitrary graph states, see \cref{con:2}. 

\begin{proposition}
\label{proposition:reduction_Further2}
Consider two graphs $\hat{G}$ and $\hat{G'}$ with adjacency matrices of the form (\ref{auxiii_main_main}), and denote by $\hat{G}_\Delta$ and $\hat{G'}_\Delta$ graphs obtained from $\hat{G}$ and $\hat{G'}$ by deleting vertices from the set $[k,n]$ that have the same neighborhood, keeping only one representative vertex for each set of vertices with identical neighborhoods. The corresponding graph states $|\hat{G}\rangle$ and $|\hat{G'}\rangle$ are LU equivalent by the operator of the form (\ref{eq:ZX_main}) if and only if graph states $|\hat{G}_\Delta\rangle$ and $|\hat{G'}_\Delta\rangle$ are LU equivalent by the operator of the form (\ref{eq:ZX_main}).
\end{proposition}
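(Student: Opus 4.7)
The plan is to reduce the statement to a purely linear-algebraic fact about the system \eqref{eq:11} from \cref{prop:solution_2}, namely that deleting a column of the coefficient matrix that duplicates another one does not change solvability. This should make the proof essentially a bookkeeping exercise once the right dictionary is set up.

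First I would invoke \cref{prop:solution_2} twice: once for the pair $\hat{G},\hat{G'}$, producing a linear system $A\vec{\alpha}=\vec{b}\pmod 2$, and once for $\hat{G}_\Delta,\hat{G'}_\Delta$, producing a system $A_\Delta\vec{\beta}=\vec{b}_\Delta\pmod 2$. Since both pairs have adjacency matrices of the form \eqref{auxiii_main_main}, deleting a vertex $j\in[k,n]$ from $\hat{G}$ deletes the corresponding row/column from both $\Gamma_{k|n-k}$ and its transpose but leaves $\Gamma_{k|k}$ and $\Gamma_{k|k}'$ intact. Consequently $\vec{b}_\Delta=\vec{b}$, since $\vec{b}$ depends only on the symmetric difference of the top-left blocks and on which $|e|=2$ edges it marks. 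Moreover, the rule $A_{e,i}=2^{|e|-1}[e\subset\delta_i]$ shows that twin vertices $i,j\in[k,n]$ (i.e.\ those with $\delta_i=\delta_j$ as neighborhoods inside $[k]$) contribute identical columns to $A$. Because $\Gamma_{k|n-k}$ is shared by $\hat{G}$ and $\hat{G}'$, the twin-relation is the same in both graphs, so $\hat{G}_\Delta$ and $\hat{G'}_\Delta$ are obtained by deleting the same vertices. Thus $A_\Delta$ is precisely $A$ with all but one column from each twin class removed.

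The key step is then to show that solvability of $A\vec{\alpha}=\vec{b}$ is equivalent to solvability of $A_\Delta\vec{\beta}=\vec{b}$. For the forward direction, given a solution $\vec{\alpha}$ I would, for each twin class $\{i_1,\dots,i_r\}\subset[k,n]$ (with $i_1$ the retained representative), define $\beta_{i_1}:=\alpha_{i_1}+\cdots+\alpha_{i_r}\pmod 2$ and leave other coordinates unchanged; then for every row $e$ the combined contribution $\sum_{s=1}^r A_{e,i_s}\alpha_{i_s}=A_{e,i_1}\beta_{i_1}$ collapses exactly to the $A_\Delta$ expression, so $\vec{\beta}$ solves the reduced system. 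For the converse, I would extend any solution $\vec{\beta}$ of the reduced system by setting $\alpha_{i_s}=0$ for $s\geq 2$ (and $\alpha_{i_1}=\beta_{i_1}$); the equations of the original system are then satisfied automatically since the zeroed coordinates make no contribution. Combining with \cref{prop:solution_2} in both directions yields the claimed equivalence of LU-equivalences via operators of the form \eqref{eq:ZX_main}.

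The only place that requires any care is verifying that the additive manipulations on the $\alpha_i$'s are compatible with the modular-arithmetic semantics of \cref{prop:solution_2} and with the physical operator $\hat U$: the relevant point is that the variables enter the constraints $\mathbb{Z}/2\mathbb{Z}$-linearly (via the coefficients $2^{|e|-1}$), and that $X^{\alpha_{i_1}+\alpha_{i_2}}=X^{\alpha_{i_1}}X^{\alpha_{i_2}}$ up to a global phase, so replacing $\alpha_{i_1},\dots,\alpha_{i_r}$ by their sum at a single vertex produces a genuine operator of the form \eqref{eq:ZX_main} on the reduced graph. This is straightforward but is the one spot where one must not conflate the $[0,2)$-valued parameters with the mod-$2$ linear-system values, and it is the place I expect the referee to look most closely.
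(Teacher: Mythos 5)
Your proposal is correct and follows essentially the same route as the paper: both arguments reduce to the observation that twin vertices in $[k,n]$ (whose twin relation agrees in $\hat{G}$ and $\hat{G'}$ because $\Gamma_{k|n-k}$ is shared) produce identical columns of $A$ in the system \eqref{eq:11}, and that deleting duplicate columns does not change solvability. The paper states this more tersely, while you additionally spell out the explicit collapse $\beta_{i_1}=\alpha_{i_1}+\cdots+\alpha_{i_r}$ and the zero-extension for the converse, which is a harmless elaboration of the same idea.
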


\begin{proof}
Notice that if vertices $i,j \in [k,n]$ share the same neighborhood in graph $\hat{G}$, they also share the same neighborhood in $\hat{G'}$. Moreover, variables $\alpha_i$ and $\alpha_j$ always appear in the same equations in \cref{eq:11}, meaning that the $i$-th and $j$-th columns of matrix $A$ are the same. Therefore, deleting one of these variables, say $\alpha_i$, and the corresponding $i$-th row in matrix $A$ does not change the solvability of the system (\ref{eq:11}). 
In that way, we delete vertices from $[k,n]$ that have the same neighborhood without changing solvability of the system.
\end{proof}

As a consequence of \cref{proposition:reduction_Further2}, the solvability of the system (\ref{eq:11}) can be investigated over a discrete set of potential solutions. Furthermore, the number of non-trivial equations can be reduced, namely we have the following result.

\begin{corollary}
\label{lemma:reduction_Further3}
The linear system of equations (\ref{eq:11}) has a solution if and only if there exists a solution $\vec{\alpha} := (\alpha_{k+1}, \ldots, \alpha_n)$ for $\alpha_i\in V_{|\delta_i|-1}$ where
\begin{equation}
    \label{eq:V_r}
    V_r:=\Big\{ \frac{m}{2^{r-1}} \,: \,m=0,1,\ldots,2^r-1\Big\}
\end{equation}
is a discrete set of $2^k$ elements, and $\delta_i$ denotes the neighborhood of vertex $i$ in either of the graphs $\hat{G}$ or $\hat{G'}$. 

Moreover, in (\ref{eq:11}) there are at most $ 2^{\Delta_G} (n-k)$ non-trivial equations, where
\begin{equation}
    \Delta_G:=
    \max_{i\in[k,n]} \; |\delta_i |,
\end{equation} 
is the maximal vertex-degree among vertices $i\in[k,n]$.
\end{corollary}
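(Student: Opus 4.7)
The ``if'' direction is trivial, so the content lies in the forward direction. I would begin by invoking \cref{proposition:reduction_Further2} to reduce to the case where distinct vertices in $[k,n]$ have distinct neighborhoods $\delta_i$. The structural feature driving the discretization is that each $\alpha_i$ appears in an equation indexed by $e$ only when $e \subset \delta_i$, always with coefficient $2^{|e|-1}$; the reduction modulo $2$ therefore factors through $\alpha_i \bmod 2^{2-|e|}$, and the finest granularity that $\alpha_i$ must resolve is $2^{2-|\delta_i|}$, coming from the top equation $e=\delta_i$.

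From this observation, translating $\alpha_i$ by any element of $2^{2-|\delta_i|}\mathbb{Z}$ leaves the entire left-hand side of the system unchanged modulo $2$, so the solution set, if nonempty, is a union of cosets of the discrete lattice $\prod_i 2^{2-|\delta_i|}\mathbb{Z}/2\mathbb{Z}$ inside $(\mathbb{R}/2\mathbb{Z})^{n-k}$; every such coset meets the transversal $\prod_i V_{|\delta_i|-1}$. I would make this concrete by a triangular elimination: order the variables $i\in[k,n]$ by decreasing $|\delta_i|$, consistent with the inclusion order on the neighborhoods. At each step the equation indexed by $e=\delta_i$ becomes, after substitution of already-fixed $\alpha_j$ with $\delta_j \supsetneq \delta_i$, a single-variable congruence in $\alpha_i$ whose solution set is an arithmetic progression of spacing $2^{2-|\delta_i|}$ inside $[0,2)$, from which I pick the designated representative in $V_{|\delta_i|-1}$. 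The non-diagonal equations for $e\subsetneq \delta_i$ then remain satisfied because they only see $\alpha_i$ through the coarser residue $\alpha_i \bmod 2^{2-|e|}$, which was already pinned down by the continuous solution.

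For the count of non-trivial equations, an equation indexed by $e$ has nonzero left-hand side only if $e \subset \delta_i$ for some $i \in [k,n]$; for each fixed $i$, the number of such $e$ with $|e|\geq 2$ is $2^{|\delta_i|}-|\delta_i|-1 < 2^{|\delta_i|} \leq 2^{\Delta_G}$. Summing over $i \in [k,n]$ yields the claimed bound of $2^{\Delta_G}(n-k)$ on the number of non-trivial equations, and equations with identically vanishing left-hand side are either trivial (if $b_e=0$) or signal infeasibility and can be handled separately. The main obstacle is the consistency step in Part~1: verifying that the triangular elimination respects the inclusion order and that the chosen discrete representative at each level is compatible with every sub-equation $e \subsetneq \delta_i$, which ultimately rests on the propagation of continuous feasibility through the lattice $\prod_i 2^{2-|\delta_i|}\mathbb{Z}$.
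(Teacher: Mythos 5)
Your count of the non-trivial equations is fine and matches the paper's, and the reduction via \cref{proposition:reduction_Further2} to pairwise distinct neighborhoods is the right first move. The first part, however, rests on an invariance claim that is false, and the direction of the lattice containments is the culprit. You assert that translating $\alpha_i$ by any element of $2^{2-|\delta_i|}\mathbb{Z}$ leaves every equation of (\ref{eq:11}) unchanged modulo $2$, because each equation $e$ only sees $\alpha_i \bmod 2^{2-|e|}$. But the lattices $2^{2-|e|}\mathbb{Z}$ are nested the other way: $\mathbb{Z}=2^{2-2}\mathbb{Z}\subseteq 2^{2-3}\mathbb{Z}\subseteq\cdots\subseteq 2^{2-|\delta_i|}\mathbb{Z}$, so the joint invariance group of the system in the variable $\alpha_i$ is the \emph{intersection} of these lattices, namely $\mathbb{Z}$, not the finest member $2^{2-|\delta_i|}\mathbb{Z}$. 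Concretely, shifting $\alpha_i$ by $2^{2-|\delta_i|}$ changes the term $2^{|e|-1}\alpha_i$ in an equation with $2\le|e|<|\delta_i|$ by $2^{|e|+1-|\delta_i|}$, which is not $0$ modulo $2$. Hence the solution set is not a union of cosets of $\prod_i 2^{2-|\delta_i|}\mathbb{Z}$, and the closing step of your triangular elimination --- that the equations for $e\subsetneq\delta_i$ ``remain satisfied because they only see the coarser residue'' --- fails for exactly this reason: moving $\alpha_i$ within an arithmetic progression of spacing $2^{2-|\delta_i|}$ does change $\alpha_i$ modulo the coarser $2^{2-|e|}$. There is also no guarantee that the arithmetic progression you solve at each step has offset zero, which is what you would need for it to meet $V_{|\delta_i|-1}$ at all.

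The paper's mechanism points in the opposite direction and is much shorter: after the reduction to distinct neighborhoods, the single row of (\ref{eq:11}) indexed by $e=\delta_i$ reads $2^{|\delta_i|-1}\alpha_i\equiv 0\pmod 2$, so this one constraint by itself already forces \emph{every} solution to satisfy $\alpha_i\in 2^{2-|\delta_i|}\mathbb{Z}$, i.e.\ $\alpha_i\in V_{|\delta_i|-1}$ once $\alpha_i$ is taken in $[0,2)$. No representative needs to be chosen and no compatibility with the sub-equations needs to be verified; the discreteness is imposed by the top equation rather than obtained by moving within an invariance lattice. If you want to keep the spirit of your argument, you must replace translation invariance by this forcing step.
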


\begin{proof}
Consider two graphs $\hat{G}$ and $\hat{G'}$ with adjacency matrices of the form (\ref{auxiii_main_main}) corresponding to the linear system (\ref{eq:11}). Assume that all vertices $i \in [k,n]$ have pairwise distinct neighborhoods, i.e., $\delta_i \neq \delta_j$ for $i \neq j \in [k,n]$. Then, for any $i \in [k,n]$, the equation in (\ref{eq:11}) corresponding to $e=\delta_i$ takes the form $2^{|\delta_i|-1} \alpha_i = 0 \,\, (\text{mod} \, 2)$. Furthermore, all equation in (\ref{eq:11}) are in modular arithmetic $ (\text{mod} \, 2)$, hence we can take $\alpha_i \in  V_{|\delta_i|-1}$, which proves the first statement in this case. 

Now assume that some vertices $i_1, \ldots, i_s \in [k,n]$ share the same neighborhood, i.e., $\delta_{i_1} = \cdots = \delta_{i_s}$. As shown in \cref{proposition:reduction_Further2}, the system (\ref{eq:11}) has a solution if and only if there exists a solution with $\alpha_{i_2} = \cdots = \alpha_{i_s}$. Under this assumption, the equation in (\ref{eq:11}) corresponding to $\delta_{i_1}$ takes the form $2^{|\delta_{i_1}|-1} \alpha_{i_1} = 0 \,\, (\text{mod} \, 2)$, completing the proof of the first statement.

Notice that the only equations in (\ref{eq:11}) which are non-trivially satisfies correspond to $e$ such that $e\subset\delta_i$ for some $i\in[k,n]$. There are at most $ 2^{\Delta_G} (n-k)$ such equations. 
\end{proof}

Notice that, as $\Delta_G,\,k <n$ the number of non-trivial equations in this system is exponential in number of vertices $n$. One approach to addressing this problem is to restrict our consideration to solutions within a constrained subspace. As we shall see, this also reduces the number of non-trivial equations in the system.

\begin{corollary}
\label{lemma:reduction_LC_r}
Fix parameter $r\in[n]$ and consider the linear system of equations (\ref{eq:11}) with solutions $\vec{\alpha} := (\alpha_{k+1}, \ldots, \alpha_n)$ restricted to $\alpha_i\in V_{r}$. Such system has at most $(n-k)\sum_{t=2}^r\binom{\Delta_G}{t}$ non-trivial equations.

Moreover, there is the following upper bound for number of non-trivial equations $r(n-k)\binom{\Delta_G}{r}\leq n^{r+1}$.
\end{corollary}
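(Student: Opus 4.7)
The plan is to substitute the parametrisation $\alpha_i = m_i/2^{r-1}$ with $m_i\in\{0,1,\ldots,2^r-1\}$ directly into the system (\ref{eq:11}) and then track which rows become vacuous modulo $2$. Under this substitution, the entry $A_{e,i}\alpha_i$ reduces to $2^{|e|-r}m_i\pmod 2$, which vanishes identically in $m_i$ whenever $|e|\geq r+1$, because $2^{|e|-r}$ is then an even integer. Combined with the fact that $b_e=0$ for every $|e|\neq 2$ (and the boundary case $|e|=2$ together with $r=1$ only yields a solvability precondition rather than a genuine constraint on $\vec\alpha$), this shows that only indices $e$ with $2\leq|e|\leq r$ can give rise to non-trivial equations.

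I would then count such rows by a simple union bound: a row $e$ has a nonzero entry only if $e\subset\delta_i$ for some $i\in[k,n]$, so the number of non-trivial equations is at most
\begin{equation*}
\sum_{i\in[k,n]}\sum_{t=2}^{r}\binom{|\delta_i|}{t}\;\leq\;(n-k)\sum_{t=2}^{r}\binom{\Delta_G}{t},
\end{equation*}
which is the first claimed bound. The cleaner form $r(n-k)\binom{\Delta_G}{r}$ follows by invoking the pointwise inequality $\binom{\Delta_G}{t}\leq\binom{\Delta_G}{r}$ for $2\leq t\leq r$, valid in the monotone regime $r\leq\Delta_G/2$ which is the only one in which this bound is genuinely informative, and summing at most $r$ such terms. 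The final numerical estimate $r(n-k)\binom{\Delta_G}{r}\leq n^{r+1}$ is then immediate from $\binom{\Delta_G}{r}\leq\Delta_G^{r}/r!$ together with $n-k\leq n$ and $\Delta_G\leq n$: chaining gives $r(n-k)\binom{\Delta_G}{r}\leq n\cdot n^{r}/(r-1)!\leq n^{r+1}$.

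The main subtlety I expect is the trivialisation step itself: one has to argue carefully that equations with $|e|>r$ really contribute no genuine constraint on the $\alpha_i$'s, separating the LHS-vanishes-identically case (automatic) from the potential precondition $b_e=0$ that must still be checked for $|e|=2$ in the degenerate case $r=1$. Everything else — the union bound, the binomial monotonicity, and the final polynomial estimate — is routine bookkeeping, and the argument parallels the proof of \cref{lemma:reduction_Further3} with the threshold $\Delta_G$ replaced by $r$.
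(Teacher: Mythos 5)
Your proposal is correct and follows essentially the same route as the paper's (very terse) proof: identify that rows with $|e|>r$ become vacuous because $2^{|e|-1}\alpha_i=2^{|e|-r}m_i\equiv 0\pmod 2$, then count the surviving rows $e\subset\delta_i$ with $2\le|e|\le r$ via the union bound $(n-k)\sum_{t=2}^r\binom{\Delta_G}{t}$. Your explicit handling of the $r=1$, $|e|=2$ boundary case and of the monotonicity caveat in passing from $\sum_{t=2}^r\binom{\Delta_G}{t}$ to $r\binom{\Delta_G}{r}$ is more careful than the paper's ``simple calculation'' remark, but it is the same argument.
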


\begin{proof}
Notice that the only equations in (\ref{eq:11}) that are nontriviallyatisfies correspond to $e$ such that $e\subset\delta_i$ for some $i\in[k,n]$, and $1<|e|<r$. For each $i\in[k,n]$ there are exactly $\sum_{t=2}^r\binom{\Delta_G}{t}$ such sets $e$, which gives the estimate. The aforementioned upper bound is a simple calculation. 
\end{proof}

Notice that $V_1\subset V_2\subset\cdots\subset V_{\Delta_G}$, and thus, solving the system (\ref{eq:11}) within the solution space $\alpha_i\in V_r$ for increasing $r$ provides a natural hierarchy for finding the LU operator that transforms two graph states. Furthermore, the number of non-trivial equations can be bounded by $n^{r+1}$ for $r<<n$ and by $2^n$ for larger $r$. Moreover, this hierarchy is complete. In fact, as a consequence of \cref{lemma:reduction_Further3}, the linear system of equations (\ref{eq:11}) has a solution if and only if a solution exists for $\alpha_i\in V_{\Delta_G}$. 

In fact, this hierarchy coincides with the hierarchy of equivalences of graph states discussed in Ref.~\cite{claudet2024localequivalencestabilizerstates}, where the authors defined the class of $r$-local complementations as:
\begin{equation}
    \LC_r:= \langle H, Z^{1/2r}\rangle .
\end{equation}
It is easy to observe that solving the system (\ref{eq:11}) within the solution space $\alpha_i \in V_r$ is equivalent to finding an $ \text{LC}_r$-operator as discussed in Ref~\cite{claudet2024localequivalencestabilizerstates}. 

For example, verifying $\LC_1$-equivalence is in fact $\LC$-equivalence. Therefore, verifying LC-equivalence corresponds to solving the linear system of equations (\ref{eq:11}) with solutions restricted to $\alpha_i\in V_{1}=\{0,1\}$. 


Interestingly, Ref~\cite{claudet2024localequivalencestabilizerstates} shows that $r$-local complementations form a strict hierarchy. Indeed, the authors demonstrated that for
\begin{align}
\label{eq:t_k}
 k \,=\,& 2r + 2^{\lfloor \log_2(r) \rfloor + 1} - 1, \\
\nonumber   
      t \,=\,& 2r + 1, 
\end{align}
the pair of states \( K(k,t) \) and \( \hat{K}(k,t) \), see \cref{ex:K_graph} for definition, is $\text{LC}_{r}$-equivalent but not $\text{LC}_{r-1}$-equivalent.  In other words, using our algorithm, system (\ref{eq:11}) can be solved for $\alpha_i \in V_{r}$, but there is no solution for $\alpha_i \in V_{r-1}$. In particular, choosing $r=2$, we recover pair of states \( K(7,5) \) and \( \hat{K}(7,5) \) which are $\text{LC}_{2}$-equivalent but not $\text{LC}$-equivalent (i.e. $\text{LC}_{1}$-equivalent), see \cref{fig:K_graph}. Notably the number of vertices in such graphs is exponential in $r$. Indeed, we can lower bound the number of vertices in the graphs $K(k,t) $ and $\hat{K}(k,t)$ for the case defined in (\ref{eq:t_k}), by applying Stirling’s approximation, we find that $|K(k,t) |=|\hat{K}(k,t)|>e^r$. This leads to the following.  

\begin{conjecture}
\label{con:2}
    There exists a function $F\in \mathcal{O}(\text{log}\,n)$ such that (\ref{eq:11}) is solvable if and only if it is solvable in  space $V_{F(n)}$. As a result, our algorithm can decide if the initial graph states are LU-equivalent by solving $n^{\mathcal{O}(\text{log} \,n)}$ equations in (\ref{eq:11}).

    Equivalently, this can be stated as $\LU=\LC_{F(n)}$ for all stabilizer states on $n$ qubits, meaning all LU-equivalent stabilizer states on $n$ qubit are $\LC_{F(n)} $-equivalent.  
\end{conjecture}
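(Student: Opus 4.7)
The plan is to prove \cref{con:2} by combining the structural reductions of \cref{sec:StepI,AppB} with an extremal comparison against the Kneser construction of Ref.~\cite{claudet2024localequivalencestabilizerstates}. Via \cref{prop:solution_2,lemma:reduction_Further3}, the conjecture reduces to showing that whenever the linear system (\ref{eq:11}) admits a solution $\vec{\alpha}\in V_{\Delta_G}^{n-k}$, it also admits one in $V_{F(n)}^{n-k}$ with $F(n)=O(\log n)$. First, I would apply \cref{corollary:all_are_that_form} to bring any LU-equivalent pair of graph states into the canonical form $(\hat{G},\hat{G'})$ of (\ref{auxiii_main_main}), then \cref{proposition:reduction_Further2} to remove twin vertices in $[k,n]$; this forces every $i\in[k,n]$ to have a distinct neighborhood $\delta_i\subset[k]$, giving the structural inequality $n-k\leq 2^k-1$.

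Next, I would unfold the dyadic structure. Writing $\alpha_i=m_i/2^{r-1}$ with $m_i\in\{0,1,\ldots,2^r-1\}$, each equation in (\ref{eq:11}) indexed by $e$ with $2\leq|e|\leq r$ becomes
\[
\sum_{i:\,\delta_i\supset e} m_i \;\equiv\; b_e\cdot 2^{r-|e|}\ \pmod{2^{r-|e|+1}},
\]
while equations with $|e|>r$ are automatically satisfied. Expanding $m_i$ in binary, this produces a triangular cascade: the equation at $|e|=r$ constrains only the lowest bits, while equations at smaller $|e|$ constrain successively more bits. The minimal admissible $r$ is the cascade depth at which the obstruction encoded by the $b_e$-values can be resolved.

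The heart of the argument---and the main obstacle---is showing that this depth is $O(\log n)$. The proposed strategy is extremal: suppose a minimal obstruction forces level $r$; then the set of $r$-subsets $e\subset[k]$ contributing nonzero constraints at the top level must be rich enough to prevent reduction, mirroring the role of $\binom{[k]}{t}$-style vertex sets in \cref{ex:K_graph}. Combining the twin-distinctness of the columns of $A$ with a rank/pigeonhole estimate on this top level should force $n-k\geq\binom{k'}{r}\geq 2^r$ for some $k'=\Theta(r)$, yielding $r\leq\log_2 n+O(1)$ and matching the lower bound $n>e^r$ from Ref.~\cite{claudet2024localequivalencestabilizerstates} up to constants.

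The technically delicate step will be extending this extremal estimate from the symmetric Kneser examples to \emph{every} minimal obstruction. One must show that whenever level $r$ is strictly required, the column support of the top-level slice of $A$ essentially contains a copy of $\binom{[k]}{r}$. Useful tools should include (i) an inductive kernel analysis of $A$ restricted to the level-$j$ block, descending from $j=r$ down to $j=2$; (ii) a pigeonhole on the columns indexed by $|\delta_i|\geq r$, combined with the twin-distinctness of \cref{proposition:reduction_Further2}; and (iii) a matroid or Sauer-Shelah-type counting argument that converts linear independence at one level into quantitative growth of the support set. If these can be carried through, $F(n)=O(\log n)$ follows and, via \cref{lemma:reduction_LC_r}, the algorithmic complexity $n^{\mathcal{O}(\log n)}$ is immediate.
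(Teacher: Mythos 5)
The statement you are addressing is a \emph{conjecture}: the paper offers no proof of it, and only observes two things — that the second sentence follows from the first via \cref{lemma:reduction_LC_r}, and that the Kneser examples of Ref.~\cite{claudet2024localequivalencestabilizerstates} give a matching \emph{lower} bound (graphs that are $\LC_r$- but not $\LC_{r-1}$-equivalent have more than $e^r$ vertices, so $r$ really can grow like $\log n$). Your preparatory reductions are all sound and consistent with the paper: the passage to the canonical pair $(\hat{G},\hat{G'})$ via \cref{corollary:all_are_that_form}, the twin removal via \cref{proposition:reduction_Further2} giving $n-k\leq 2^k$, and the dyadic rewriting of each equation of (\ref{eq:11}) indexed by $e$ as a congruence modulo $2^{r-|e|+1}$ (with equations at $|e|>r$ trivially satisfied) are all correct and are a reasonable way to organize the problem.

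However, your proposal does not close the conjecture, and the gap is exactly where you flag it. The claim that every minimal obstruction forcing level $r$ must contain a Kneser-like family of $r$-subsets, and hence that $n-k\geq\binom{k'}{r}\geq 2^r$ for some $k'=\Theta(r)$, is asserted but not argued: nothing in the structure of the matrix $A$ (whose columns are merely the down-sets $\{e\subset\delta_i: |e|\geq 2\}$ of \emph{arbitrary} distinct subsets $\delta_i\subseteq[k]$) is shown to prevent a sparse, asymmetric configuration from requiring large $r$ with only polynomially many vertices. The inductive kernel analysis, the pigeonhole on columns with $|\delta_i|\geq r$, and the Sauer--Shelah-type counting are named as candidate tools, but no lemma is stated or proved that converts ``level $r$ is strictly required'' into a quantitative lower bound on $n$. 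Since this implication is precisely the content of the open conjecture (an extremal statement over \emph{all} graphs, not just the symmetric examples where it is verified), the proposal should be regarded as a plausible research program rather than a proof; as written, the conditional phrasing (``should force'', ``if these can be carried through'') correctly signals that the essential step remains unestablished.
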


\noindent
Notice the second part of \cref{con:2} follows from \cref{lemma:reduction_LC_r}, where we have shown that solving (\ref{eq:11}) in the restricted space $V_r$ limits the number of non-trivial equations to $n^{F(n)+1}$.

We conclude this section by noting that the linear system of equations (\ref{eq:11}) with solutions in $\alpha_i\in V_r$ can be readily transformed into a linear system over $\mathbb{F}_2$. For further details, we refer to \cref{Appendix_X}.

\section{Algorithm outline}
\label{Algorithm_outline}

In this section, we combine results from \cref{sec:Necessary conditions,sec:StepI,sec:StepII} to reduce the problem of finding an LU transformation between two graph states to a linear programming (LP) problem in modular arithmetic.

We begin by examining necessary conditions for the existence of such an LU transformation in terms of minimal local sets. Specifically, we find an MLS cover $\mathcal{M}$ for $G$, check that all sets from $\mathcal{M}$ are minimal in $G'$, and verify that their types match, as required by \cref{prop:MLS_cover_same}.

If $\mathcal{M}$ is an MLS cover for both $G$ and $G'$, we construct the pairs of functions $F_G^{(\ell)}, F_{G'}^{(\ell)}: [n] \to {X,Y,Z}$ indexed by $\ell \in \mathcal{L}$, as described in Algorithm 1. According to \cref{prop_pairs_of_func}, any LU operator $U = U_1 \otimes \dots \otimes U_n$ that satisfies $U\ket{G} = \ket{G'}$ must also satisfy $F_{G'}^\ell(i) = U_i \,F_G^\ell(i)\, U_i^\dagger$ for at least one $\ell \in \mathcal{L}$. 

For each $\ell \in \mathcal{L}$, we assume that $U\ket{G} = \ket{G'}$ satisfies the condition above and construct the pairs of graphs $\hat{G}, \hat{G'}$, where $|\hat{G}\rangle$ is LC-equivalent to $\ket{G}$ and $|\hat{G'}\rangle$ to $\ket{G'}$, as outlined in Algorithm 2. According to \cref{Th:reduction}, any LU equivalence between $|\hat{G}\rangle$ and $|\hat{G'}\rangle$ has the form (\ref{eq:ZX_main}).

Next, following \cref{prop:solution_2}, we transform this problem into a linear system of equations in modular arithmetic, as described in Algorithm 3. We solve this system using LP methods. If a solution exists, it confirms LU equivalence between $\ket{G}$ and $\ket{G'}$. If no solution is found, we continue with other values of $\ell \in \mathcal{L}$. If no solution is found for any $\ell$, the states $\ket{G}$ and $\ket{G'}$ are not LU-equivalent.

\cref{fig:algo_outline} provides a schematic description of the algorithm. 
Moreover, \cref{fig:algo_subroutines,fig:algo_main} present the pseudocode for the algorithm, where the three main reduction steps—Algorithms 1,2, and 3 are used as subroutines.

\subsection{Computational complexity}
\label{sec:comput_complex}

The exact computational complexity of our algorithm remains unknown, particularly with respect to whether it is borderline efficient (quasi-polynomial in the number of nodes, $n$) or infeasible (exponential in $n$). The MLS cover can be computed in $\mathcal{O}(n^4)$ time as it was shown in Ref~\cite[Theorem 1]{claudet2024covering}. For a given MLS cover, Algorithm 1 on \cref{fig:algo_subroutines} computes a class of pairs of functions $F_G^{\ell }, F_{G'}^{\ell }$ indexed by ${\ell \in \mathcal{L}}$. While the computational procedure runs in $\mathcal{O}(n^3)$ time, the size of the outcome class $|\mathcal{L}|$ might be large, in principle exponential in $n$. Note that the rest of the procedure is performed independently for all $\ell\in \mathcal{L}$.

In fact, the size of the class $\mathcal{L}$ can be upper bounded by $|\mathcal{L}|\leq 3^t$ where $t$ denotes the number of connected components in the intersection graph of $\mathcal{M}$ consisting entirely of MLSs of Type II, see \cref{prop_pairs_of_func}. Through random testing of pairs of states up to $n = 12$, we verified that this number satisfies $t=0,1$ and related class $\mathcal{L}$ is of size $|\mathcal{L}|=1,3$. We strongly suspect that $|\mathcal{L}|=o(1)$, but we are currently unable to rigorously prove this in general. Establishing a formal bound on $|\mathcal{L}|$ remains an open question. 

For given $\ell\in \mathcal{L}$, the reduction presented in Algorithm 2 and further Algorithm 3 yields the linear system of equations (\ref{eq:11}). These procedures involve Gaussian elimination, which can be performed in $\mathcal{O}(n^3)$ time. As we presented in \cref{AppB}, this system can then be transformed into a linear system over $\mathbb{F}_2$ with number of variables bounded by $n^2$ and a number of equations bounded by $2^n$. Despite the number of equations being exponential in $n$, the exact complexity of solving such a system remains unclear. Denoting this complexity as 
$C(n)$, the overall complexity of our algorithm is given by
\begin{equation}
    \mathcal{O} 
    \Big(n^4+
    |\mathcal{L}| (n^3 +C(n))\Big).
\end{equation}
We leave the exact estimation of $C(n)$ and $|\mathcal{L}|$ as open problems. 


Assuming the bound $|\mathcal{L}|=o(1)$ and \cref{con:2}, the complexity of our algorithm becomes
\begin{equation}
n^{\mathcal{O}(\text{log}\, n)}
\end{equation}
which is quasi-polynomial in the number of nodes $n$. As noted in \cref{AppB}, the existing examples of LU-equivalent graph states suggest that our algorithm cannot decide the the LU equivalence in polynomial time. We suspect that an efficient algorithm for this problem, i.e. one running in time polynomial in $n$, is unlikely to exist.



\begin{figure}
\centering
\includegraphics[width = 0.9\linewidth]{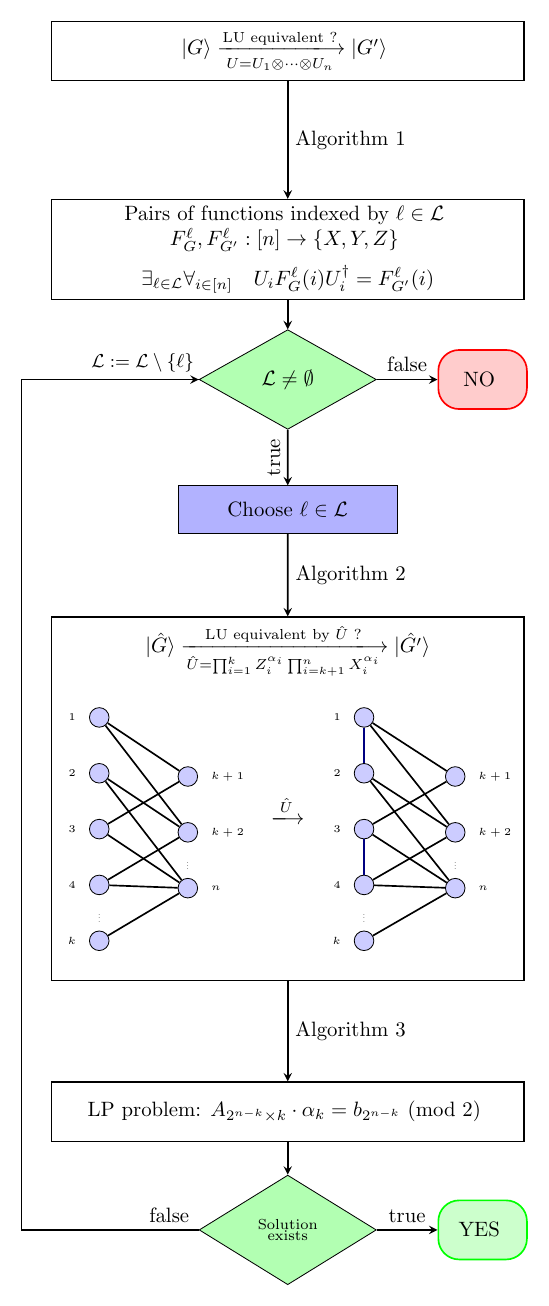}
\caption{A schematic description of the algorithm to verify the existence of an LU-equivalence consists of three main reduction steps: Algorithms 1,2, and 3 presented on \cref{fig:algo_subroutines}.  
}
\label{fig:algo_outline}
\end{figure}

\section{Analysis of graph states up to $n=11$ qubits} 
\label{sec:numerics}

Graph states have been classified up to $n = 12$ qubits \cite{Danielsen_2008,cabelloEntanglementEightqubitGraph2009}, and there is a database containing all unlabelled LC-orbits up to $n = 12$ vertices, see Ref~\cite{data_graph_states}. We make use of this database in our analysis. Consider any pair of unlabelled LC-orbits of graphs on $n$ vertices. For each pair, we can select their representatives $G$ and $G'$ with arbitrarily chosen labelling of the vertices. For every permutation $\sigma \in S_n$ of the vertices of $G'$, we first check if $G$ and $\sigma(G')$ have the same $\CutRank$ function, which is a necessary condition for LU equivalence between $G$ and $\sigma(G')$. These $\CutRank$ functions are calculated by making use of the $LU$-invariants of graph states as presented in Ref~\cite{Lina2024}.

For $n < 9$, no pairs of graphs $G$ and $\sigma(G')$ from different unlabelled LC-orbits share the same $\CutRank$ functions. For $n = 9, 10, 11$, there are $1, 3, 15$ pairs of LC-orbits, respectively, for which permutations $\sigma(G')$ exist such that $\CutRank(G) = \CutRank(\sigma(G'))$. We further examine these cases and run our algorithm to verify LU equivalence for the selected pairs $G$ and $\sigma(G')$ that share the same $\CutRank$ function. In almost all cases, the algorithm terminates in the first subroutine (Algorithm 1), verifying that the necessary conditions for LU equivalence are not satisfied.

Only for one pair of $n = 11$ qubit LC-orbits, the algorithm returned four compatible functions $F_{G}^\ell,F_{\sigma(G')}^\ell$, $\ell \in [4]$. This corresponds to the (unlabelled) orbits \textit{10010} and \textit{14666} in the catalog from Ref~\cite{data_graph_states}, but note that their $\CutRank$ functions only coincide for a non-trivial permutation of the representative of \textit{14666} as presented in \cite{data_graph_states}.

There are exactly four permutations $\sigma \in S_n$ for which these four compatible functions $F_{G}^\ell,F_{\sigma(G')}^\ell$ exist, one of which is shown in \cref{fig:example_11qubit}. For all four permutations and their corresponding families $F_{G}^\ell,F_{\sigma(G')}^\ell$, our algorithm proceeds to the second subroutine (Algorithm 2), where it verifies that these classes are not LU equivalent. 
These findings are summarized in \cref{tab:table1}. 

As a result, we conclude that up to $n = 11$ qubits, no graphs from different unlabelled LC-orbits are LU equivalent. Thus, the number of unlabelled LU- and LC-orbits up to $n = 11$ is exactly the same. This was previously known up to $n = 8$ \cite{cabelloEntanglementEightqubitGraph2009}, and recently improved to $n = 10$ \cite{Lina2024}.

\begin{figure}
\centering
\includegraphics[width = 0.8\linewidth]{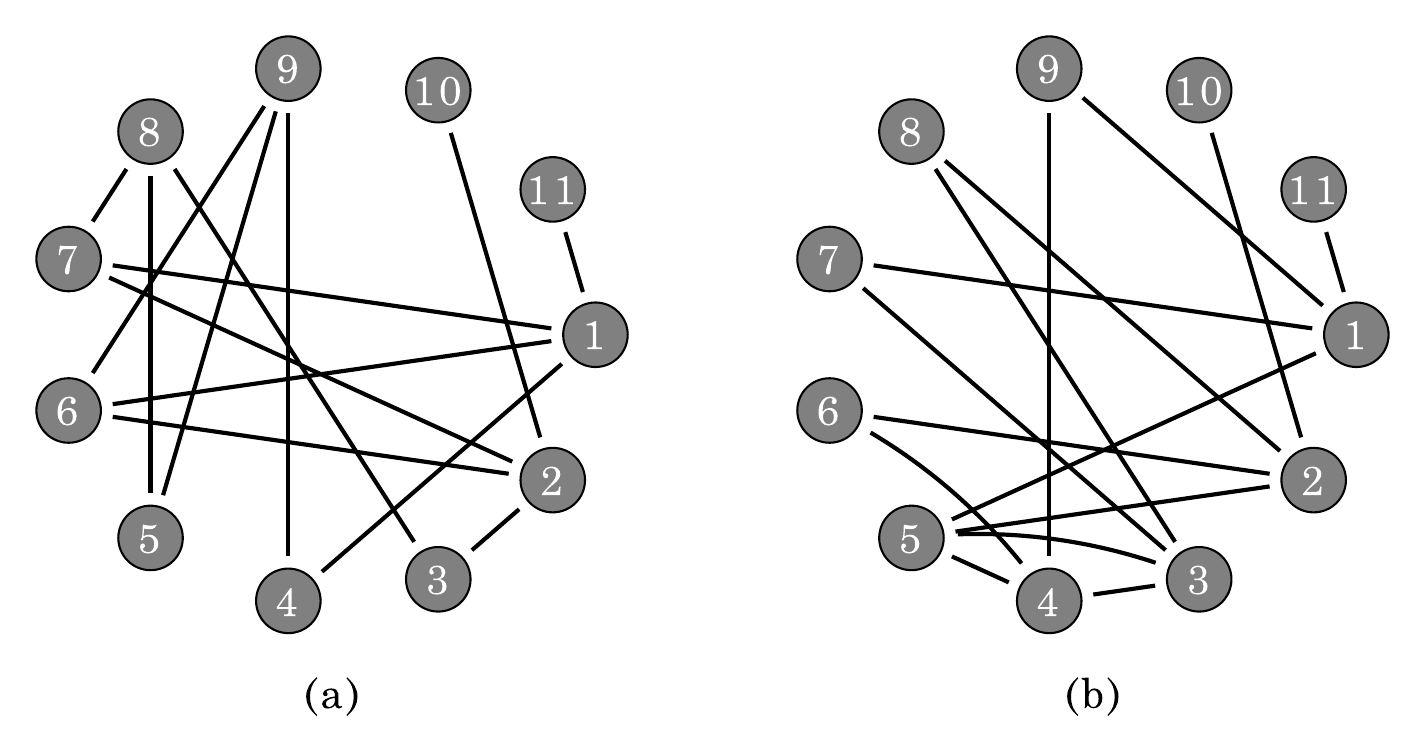}
\caption{Two eleven-qubit graphs that belong to different LC-orbits, have the same $\CutRank$ function, and hence the same entanglement properties. We verified that they are not LU equivalent by running our algorithm.}
\label{fig:example_11qubit}
\end{figure}

\begin{table}[htbp]
\centering
\begin{tabular}{c|c|c|c}
 $n$  	& \#  LC-orbits 	& \#  cut-rank 	& \#  compatible family 
 \\ 
&  (unlabelled)	& function pairs 	&  of function $F$ pairs
 \\ 
	 \hline
2   	& 1	 & -	 & -	    \\ 
3   	& 1	 & -	 & -	    \\ 
4   	& 2	 & -	 & -	    \\ 
5   	& 4	 & -	 & -	    \\ 
6   	& 11	 & -	 & -	    \\ 
7   	& 26	 & -	 & -	    \\  
8   	& 101	 & -	 & -	    \\  
9   	& 404	 & 	1 & -	    \\  
10   	& 3\ 132	 & 3	 & -	    \\  
11  	&  40\ 457	 & 15	 & 1    \\ 
\end{tabular}%
\caption{
The first column gives the number of vertices of the graph. The second column provides the number of LC-orbits \cite{data_graph_states} of unlabelled graphs on $n$ vertices. The third column shows the number of pairs of LC-orbits for which there exists a labelling of vertices with the same cut-rank function. The fourth column gives the number of pairs of LC-orbits for which there exists a labelling of vertices with compatible family of functions $F_{G}^\ell,F_{\sigma(G')}^\ell$.
}%
\label{tab:table1}
\end{table}

\section{Concluding remarks}
\label{Discussion}

In this paper, we present a novel algorithm for verifying local unitary (LU) equivalence between stabilizer states. 
Since any stabilizer state is LC equivalent to a graph state \cite{vandennestGraphicalDescriptionAction2004}, we can restrict our problem to determining LU equivalence between graph states. 

Our algorithm takes two arbitrary graph states as input and outputs a linear system of equations in modular arithmetic, whose solvability is equivalent to the equivalence of the initial graph states. 
\cref{fig:algo_outline} graphically presents the algorithm in which the three main steps of the algorithm are based on \cref{prop:MLS_cover_same}, \cref{prop:solution_2}, and \cref{prop:solution_2} respectively.


Using existing libraries, we verify that for up to $n = 11$, the number of LU and LC orbits of stabilizer states is identical. This finding extends previous results on smaller number of qubits~\cite{cabelloEntanglementEightqubitGraph2009,Lina2024}.


Despite its practical effectiveness, the exact computational complexity of our algorithm remains unknown, particularly whether it is borderline efficient (quasi-polynomial in the number of nodes $n$). The open challenges include rigorously bounding the size of the class of compatible functions $\mathcal{L}$ and determining the precise complexity $C(n)$ of solving the linear systems of equations output by our algorithm (\ref{eq:11}), see the discussion in \cref{sec:comput_complex}. 


\section*{Acknowledgement}
We express our heartfelt gratitude to Otfried G\"uhne, Tim Coopmans, and Sebastiaan Brand for their invaluable discussions and insightful contributions to this work. 

Moreover, the authors wish to thank Nathan Claudet and Simon Perdrix, who were working on closely related problems, for their valuable correspondence and insightful exchanges during the preparation of the second version of this manuscript. 

This research was supported by an NWO Vidi grant (Project No. VI.Vidi.192.109),
 the European Union (ERC, ASC-Q, 101040624), 
the Deutsche Forschungsgemeinschaft  (DFG, German Research Foundation, project numbers 447948357 and 440958198), the Sino-German Center for Research Promotion (Project M-0294), the German Ministry of Education and Research (Project QuKuK, BMBF Grant No.\ 16KIS1618K), the European Union via the Quantum Internet Alliance project, and the Stiftung der Deutschen Wirtschaft.

Views and opinions expressed are however those of the authors only and do not necessarily reflect those of the European Union or the European Research Council. Neither the European Union nor the granting authority can be held responsible for them.

It should be noted that a closely related study appeared on arXiv at approximately the same time as our work \cite{claudet2024localequivalencestabilizerstates}. We were unaware of this research during the preparation of our manuscript, and the results presented here were obtained independently. In the current version of our manuscript, particularly in \cref{AppB}, we have added comments to clarify how our findings relate to those presented in Ref.~\cite{claudet2024localequivalencestabilizerstates}.

\section*{Appendices}

\cref{app:MLS} contains the proof of \cref{prop:necessary_conditions_MLS}. 
\cref{appendix:D} provides the proof of \cref{prop_pairs_of_func}, which establishes the correctness of Algorithm 1. 
In \cref{appendix:0}, we present several observations regarding transition matrices. In \cref{appendix:A}, we present results concerning check matrices of graph and stabilizer states. These results are later applied in \cref{appendix:B}, for the proof of \cref{Th:reduction}. Finally, \cref{Appendix_X} contains some additional remarks on \cref{AppB}.

\appendix
\section{Proof of \cref{prop:necessary_conditions_MLS}}
\label{app:MLS}

\begin{proof}[Proof of \cref{prop:necessary_conditions_MLS}]
Suppose that $U=U_1\otimes\cdots\otimes U_n$ is an LU transformation transforming $\ket{G}$ into $\ket{G'}$. 
In particular, it transforms reduced states $\rho_M (G)$ into $\rho_M (G')$. 
Hence, by (\ref{eq:marginalstab_def}), we have
\begin{align}
\label{eq:type_both}
\frac{1}{2^{|M|}}\sum_{S\in \St_{\ket{G'}}^M} S_{\downarrow M} &=\rho_M (G')=
U\rho_M (G) U^\dagger =\\&=
\frac{1}{2^{|M|}}
U \big( \sum_{S\in \St_{\ket{G}}^M} S_{\downarrow M} \big) U^\dagger
\nonumber
\end{align}
which leads to
\begin{equation}
\label{eq:type_IandII}
\sum_{\substack{s'\in \St_{\ket{G'}}^M\\ s'\neq \id^{\otimes n}}} s' =
U \big( \sum_{\substack{s\in \St_{\ket{G}}^M\\ S\neq \id^{\otimes n}}} s \big) U^\dagger
\end{equation}
For $M$ being a MLS we shall consider the two cases when $M$ is of Type I or II separately.

Notice that if $M$ is a MLS of Type I, the sums in \cref{eq:type_IandII} contain only one element, and hence the statement follows immediately. 

Suppose $M$ is a MLS of Type II, then sums in \cref{eq:type_IandII} contain three elements each. 
Denote by $s_1,s_2,s_3\in \St_{\ket{G}}^M$ and  $s_1',s_2',s_3'\in \St_{\ket{G'}}^M$ all non-trivial elements in $\St_{\ket{G}}^M$ and $\St_{\ket{G'}}^M$ respectively. 
Without loss of generality, suppose $M=[k]$ is a set of the first $k$ indices, and express Pauli strings
\[
s_i=s_{i1} \otimes \cdots\otimes s_{ik} \otimes \id^{\otimes n-k}
\]
for $i=1,2,3$ and $s_{i1}, \dots s_{ik} \in \{X, Y, Z\}$, similarly for $s_i'$. 
According to \cite{Lina2024}, $k\geq 4$ 
and for each $j\in [k]$, the Pauli matrices $s_{1j},s_{2j},s_{3j}$ are all different, hence they span the entire set $\{X,Y,Z\}$. Since single qubit operations in general can be expressed as a linear combination of $\{\id, X,Y,Z\}$ and since unitary transformations do not change the trace, we have for each $j\in [k]$ and $i \in \{1,2,3\}$: 
\begin{align}
\label{eq:auxy0.5}
U_j s_{ij} U_j^\dagger =\sum_{j'=1}^3 c^{(i)}_{jj'}s'_{ij'}?
\end{align}
for some coefficients $c^{(i)}_{jj'}$ that satisfies 
\begin{equation}
\label{eq:auxy0}
|c_{j1}^{(i)}|^2+|c_{j2}^{(i)}|^2+|c_{j3}^{(i)}|^2=1,
\end{equation}
as the matrix $(c^{(i)}_{jj'})_{j,j'=1,2,3}$ is unitary \footnote{Indeed, for arbitrary Pauli matrix $\sigma$ and unitary matrix $V$, we have $V^\dagger \sigma V= c_X X+c_Y+c_Z Z$ for coefficients $c_X,c_Y,c_Z$ that satisfies $|c_X|^2+|c_Y|^2+|c_Z|^2=1$. Indeed this follows from the fact that $\{\id,X,Y,Z\}$ forms a basis of unitary matrices.}. 
consequently, for each $i\in [k]$, we have 
\begin{equation}
\label{eq:auxy1}
\sum_{j,j'=1}^3 |c^{(i)}_{jj'}|^2=3.
\end{equation}
On the other hand, from \cref{eq:type_IandII}, we have
\[
\sum_{j=1}^3 \Big(\prod_{i\in [k]} c_{jj'}^{(i)}\Big) =1
\]
for arbitrary $j'=1,2,3$. Hence, we have
\begin{equation}
\label{eq:auxy2}
\sum_{j,j'=1}^3 \Big(\prod_{i\in [k]} |c_{jj'}^i|\Big) \geq 3.
\end{equation}
As we shall see, \cref{eq:auxy1,eq:auxy2} together with the fact that $k>2$ shows that we have only trivial solutions. 
Indeed, from (\ref{eq:auxy1}) and inequalities between geometric ant arithmetic averages \footnote{for non-negative real numbers $x_1,\ldots,x_k$, we have $k x_1\cdots x_k\leq x_1+\cdots+x_k$, where the equality holds iff $x_1=\cdots =x_k$.}, we have 
\begin{equation}
\label{eq:auxy3}
\sum_{j,j'=1}^3 \Big(\prod_{i\in [k]} |c_{jj'}^{(i)}|\Big) \leq 3.
\end{equation}
with equality iff for each $j,j'=1,2,3$, we have $|c_{jj'}^{(i)}|=|c_{jj'}^{(i')}|$ for all $i,i'\in [k]$. 
Notice that from (\ref{eq:auxy2}), there is equality in (\ref{eq:auxy3}), hence indeed for each $j,j'=1,2,3$, we have 
\begin{equation}
\label{eq:auxy3.5}
|c_{jj'}^{(i)}|=|c_{jj'}^{(i')}|
\end{equation}
for all $i,i'\in [k]$. 
Therefore (\ref{eq:auxy2}) can be rewritten as
\begin{equation}
\label{eq:auxy4}
\sum_{j,j'=1}^3 |c_{jj'}^{(i)}|^k \geq 3.
\end{equation}
for arbitrary $i\in [k]$. 
Notice that $|c_{jj'}^i| \leq 1$ and hence for $k>2$, $|c_{jj'}^{(i)}|^k \leq |c_{jj'}^i|^2$ with identity iff $|c_{jj'}^{(i)}|=0,1$. 
Therefore, by (\ref{eq:auxy1}), we have that $|c_{jj'}^{(i)}|=0,1$ for all $i,j,j'$. 
Consider $i=1$. From the fact that matrix $(c^{(i)}_{jj'})_{j,j'=1,2,3}$ is unitary, we can conclude that there is a bijection $f_M:\{1,2,3\}\rightarrow \{1,2,3\}$, such that
\begin{equation}
\label{eq:auxy4P}
|c^{(1)}_{jj'}|=
\begin{cases}
  1  & j'=f_M(j) \\
  0 & \text{otherwise}
\end{cases}
\end{equation}
Therefore from (\ref{eq:auxy0.5}) and (\ref{eq:auxy3.5}), for arbitrary $j=1,2,3$, we have $U s_j U^\dagger= \omega_j s_{f_M (j)}$ for some phase $\omega_j$, i.e $|\omega_j |=1$. 
Notice that since there are no phases in (\ref{eq:type_IandII}), $\omega_j=1$ which concludes the statement.
\end{proof}

\section{Proof of \cref{prop_pairs_of_func}}
\label{appendix:D}

Here we present a proof of \cref{prop_pairs_of_func}. We begin by establishing some necessary conditions for two graph states to be LU equivalent.

\begin{lemma}
\label{lemma:overlaps_conditions}
Consider two graphs $G,G'$ sharing the same MLS cover $\mathcal{M}$. Furthermore, consider two sets $M_1,M_2\in \mathcal{M}$ of Type I with non-trivial intersection, i.e. $M_1\cap M_2\neq \emptyset$. Denote by $s_G^{M_1}\in \St_{\ket{G}}^{M_1},s_G^{M_2}\in \St_{\ket{G}}^{M_1}$ and by $s_{G'}^{M_1}\in \St_{\ket{G'}}^{M_1},s_{G'}^{M_2}\in \St_{\ket{G'}}^{M_1}$ the corresponding elements in reduced stabilizer groups. Then, if $\ket{G}$ is LU-equivalent to $\ket{G'}$. we have:
\begin{equation}
\label{eq:supp}
    \supp 
    \big( s_G^{M_1} s_G^{M_2} \big)
    =
    \supp 
    \big( s_{G'}^{M_1} s_{G'}^{M_2} \big).
\end{equation}
\end{lemma}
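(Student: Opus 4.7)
The plan is to invoke \cref{prop:necessary_conditions_MLS} separately for each of the two Type I MLSs $M_1$ and $M_2$, multiply the resulting conjugation relations, and then exploit the tensor-product structure of $U$ to read off the support equality.

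First, I would apply \cref{prop:necessary_conditions_MLS} to $M_1$. Since $M_1$ is of Type I, the reduced stabilizer group $\St_{\ket{G}}^{M_1}$ contains only $\id^{\otimes n}$ and $s_G^{M_1}$, and similarly $\St_{\ket{G'}}^{M_1} = \{\id^{\otimes n}, s_{G'}^{M_1}\}$. The bijection $f_{M_1}$ provided by the proposition preserves identity, hence it must send $s_G^{M_1}$ to $s_{G'}^{M_1}$, yielding $U s_G^{M_1} U^\dagger = s_{G'}^{M_1}$. The identical argument applied to $M_2$ yields $U s_G^{M_2} U^\dagger = s_{G'}^{M_2}$. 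Because conjugation is a group homomorphism, multiplying these two relations gives
\begin{equation}
\label{eq:proof_overlaps_product}
U \bigl( s_G^{M_1} s_G^{M_2} \bigr) U^\dagger \;=\; s_{G'}^{M_1} s_{G'}^{M_2}.
\end{equation}

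For the final step I would exploit that $U = U_1 \otimes \cdots \otimes U_n$ is strictly local. Since $s_G^{M_1}, s_G^{M_2} \in \mathbf{P}_n$, their product $P := s_G^{M_1} s_G^{M_2}$ is an element of the Pauli group, so $P = \phi\, \sigma_1 \otimes \cdots \otimes \sigma_n$ for some phase $\phi$ and $\sigma_i \in \{\id,X,Y,Z\}$; analogously $P' := s_{G'}^{M_1} s_{G'}^{M_2} = \phi'\, \sigma'_1 \otimes \cdots \otimes \sigma'_n$. The tensor structure of $U$ turns \cref{eq:proof_overlaps_product} into a factor-by-factor identity $U_i \sigma_i U_i^\dagger = \mu_i \sigma'_i$ for scalars $\mu_i$ satisfying $\phi \prod_i \mu_i = \phi'$. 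A short case analysis finishes the argument: if $\sigma_i = \id$, then $\mu_i \sigma'_i = \id$ forces $\sigma'_i = \id$; conversely if $\sigma_i \neq \id$, then $U_i \sigma_i U_i^\dagger$ is traceless, so it cannot equal any nonzero multiple of $\id$, forcing $\sigma'_i \neq \id$. Thus $\sigma_i = \id \iff \sigma'_i = \id$, which is precisely the claimed support equality $\supp(P) = \supp(P')$.

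The only delicate point is the essential uniqueness of the tensor-factor decomposition used in the last step. I do not anticipate this to be a real obstacle: both $P$ and $P'$ are genuine Pauli strings and $U$ is a tensor product of single-qubit unitaries, so the argument reduces to the observation that a non-identity Pauli matrix is traceless while a nonzero scalar multiple of $\id$ is not. No deeper machinery is needed, and the hypothesis $M_1 \cap M_2 \neq \emptyset$ is not logically required for the proof to go through, although it is what makes the conclusion informative—without overlap the two supports would be trivially determined by the separate Type I conditions.
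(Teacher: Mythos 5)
Your proposal is correct and follows essentially the same route as the paper: apply \cref{prop:necessary_conditions_MLS} to each Type I set to get $U s_G^{M_j} U^\dagger = s_{G'}^{M_j}$, multiply the two relations, and conclude that the support of the resulting Pauli string is preserved. Your final step is in fact slightly more careful than the paper's one-line justification, since you correctly isolate that it is the \emph{locality} of $U$ (via the tracelessness of conjugated non-identity Pauli factors) rather than unitarity alone that preserves the support.
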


\begin{proof}
    Suppose that we have $\ket{G'}=U\ket{G}$ for some local unitary operator $U$. By \cref{prop:necessary_conditions_MLS}, we have
    \begin{equation*}
        s_{G'}^{M_1} = U\,s_G^{M_1} \,U^\dagger
        ,
        \quad
        s_{G'}^{M_2}  = U\,s_G^{M_2} \,U^\dagger,
    \end{equation*}
    and hence
    \begin{equation*}
        s_{G'}^{M_1}s_{G'}^{M_2}  
        = U\,(s_G^{M_1}s_G^{M_2}) \,U^\dagger.
    \end{equation*}
    Notice that $s_{G'}^{M_1}s_{G'}^{M_2} $ and $s_G^{M_1}s_G^{M_2}$ are Pauli strings. Since $U$ is a unitary operator, it does not change the support, hence we have  (\ref{eq:supp}), which finishes the proof.
\end{proof}

We proceed with two lemmas that shows how to extend compatible functions to the bigger domain.

\begin{lemma}
\label{lemma:extension}
Consider two graphs $G,G'$ sharing the same MLS cover $\mathcal{M}$. Moreover, let $N\subset [n]$ and suppose there exists LU operator $U=U_1\otimes\cdots\otimes U_n$, transforming $\ket{G}$ to $\ket{G'}$ which satisfies
\begin{equation}
    U_i\,F_G(i)\,U_i^\dagger=F_{G'} (i)
\end{equation}
for some functions $F_G,F_{G'}:N\rightarrow\{X,Y,Z\}$, and all $i\in N$. 

Consider $M\in \mathcal{M}$ which is either of Type I, or satisfies $M\cap N\neq \emptyset$. Then, we can extend functions $F_G,F_{G'}$ to $\widetilde{F_G},\widetilde{F_{G'}} :N\cap M\rightarrow\{X,Y,Z\}$ in a way that operator $U$ satisfies:
\begin{equation}
\label{eq:extension1}
    U_i\,\widetilde{F_G}(i)\,U_i^\dagger=\widetilde{F_{G'}} (i)
\end{equation}
for all $i\in N\cap M$. 
\end{lemma}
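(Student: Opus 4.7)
The plan is to exploit \cref{prop:necessary_conditions_MLS} in order to extract, from the single global intertwining relation $U\ket{G}=\ket{G'}$, enough information about the local factors $U_i$ with $i\in M$ to define the extension (which I take to be to the domain $N\cup M$). I will case-split on the Type of $M$.

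Suppose first that $M$ is of Type I. Then the reduced stabilizer groups contain exactly one non-trivial element each, say $s_G^M$ and $s_{G'}^M$, and by \cref{prop:necessary_conditions_MLS} we have $U s_G^M U^\dagger = s_{G'}^M$. Minimality of $M$ forces the supports of $s_G^M$ and $s_{G'}^M$ to equal $M$, so that
\[
s_G^M=\omega\bigotimes_{i\in M}A_i,\qquad s_{G'}^M=\omega'\bigotimes_{i\in M}A'_i,
\]
with $A_i,A'_i\in\{X,Y,Z\}$ and scalar phases $\omega,\omega'$. Decomposing the conjugation relation factor by factor yields $U_i A_i U_i^\dagger=\epsilon_i A'_i$ with $\epsilon_i\in\{\pm 1\}$ and $\prod_{i\in M}\epsilon_i=\omega'/\omega$.

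Suppose next that $M$ is of Type II; by hypothesis this can occur only when $M\cap N\neq\emptyset$. Then \cref{prop:necessary_conditions_MLS} forces $U_i$ to be a Clifford matrix for every $i\in M$, so that $U_i P U_i^\dagger$ is a signed Pauli for every single-qubit Pauli $P$.

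I now define $\widetilde{F_G},\widetilde{F_{G'}}$ piecewise. On $N$, keep $\widetilde{F_G}(i):=F_G(i)$ and $\widetilde{F_{G'}}(i):=F_{G'}(i)$. On $M\setminus N$, if $M$ is of Type I, set $\widetilde{F_G}(i):=A_i$ and $\widetilde{F_{G'}}(i):=A'_i$; if $M$ is of Type II, pick any non-trivial $s\in\St_{\ket{G}}^M$, let $\widetilde{F_G}(i)$ be the Pauli appearing at position $i$ in $s$, and set $\widetilde{F_{G'}}(i):=U_i\widetilde{F_G}(i)U_i^\dagger$, which lies in $\{X,Y,Z\}$ up to sign by the Clifford property. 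The relation~(\ref{eq:extension1}) then holds on $N$ by hypothesis and on $M\setminus N$ by the factorised conjugation computed above.

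The main technical obstacle is reconciling the $\pm 1$ signs produced by the tensor-product decomposition with the requirement that $\widetilde{F_G},\widetilde{F_{G'}}$ take values in $\{X,Y,Z\}$ rather than $\{\pm X,\pm Y,\pm Z\}$. In both cases, this is handled by absorbing these signs into the global phase of the relevant stabilizer generator, using the product constraint $\prod\epsilon_i=\omega'/\omega$ in the Type I case and the analogous phase-matching provided by \cref{prop:necessary_conditions_MLS} in the Type II case.
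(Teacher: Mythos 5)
Your Type~I case follows the paper's own argument: take the unique non-trivial elements $s\in\St_{\ket{G}}^M$ and $s'\in\St_{\ket{G'}}^M$, invoke \cref{prop:necessary_conditions_MLS} to get $UsU^\dagger=s'$, and read the extension off the Pauli letters of $s$ and $s'$ on $M\setminus N$. (Both you and the paper pass from the tensor-product identity to the factorwise one without fully tracking the residual signs $\epsilon_i$ with $\prod_i\epsilon_i=1$; that imprecision is shared, so I will not press it.)

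The Type~II case is where your proposal genuinely diverges and where it has a gap. You never use the hypothesis $M\cap N\neq\emptyset$, yet that hypothesis is the entire content of the lemma in this case. The groups $\St_{\ket{G}}^M$ and $\St_{\ket{G'}}^M$ each have three non-trivial elements, and a priori $U$ could realize any of three bijections between them --- exactly the ambiguity that \cref{lemma:extension2} is forced to enumerate when $M\cap N=\emptyset$. The paper's proof picks an anchor $j\in M\cap N$, uses the fact that the three non-trivial elements carry pairwise distinct Pauli letters at every position of $M$ to single out the unique $s$ with $s_{\downarrow j}=F_G(j)$ and the unique $s'$ with $s'_{\downarrow j}=F_{G'}(j)$, and then argues from $U_j\,F_G(j)\,U_j^\dagger=F_{G'}(j)$ that $UsU^\dagger$ must be this particular $s'$; the extension is then given by $s_{\downarrow i}$ and $s'_{\downarrow i}$ and is computable from the graphs alone. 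Your construction instead takes an \emph{arbitrary} non-trivial $s$ and defines $\widetilde{F_{G'}}(i):=U_i\widetilde{F_G}(i)U_i^\dagger$. This (i) makes the extension depend on the unknown operator $U$, which defeats the purpose of the lemma --- it must feed Algorithm~1, which has no access to $U$, and it is precisely what keeps the branching factor in \cref{prop_pairs_of_func} at $3^t$ rather than growing with the total number of Type~II sets; (ii) does not actually establish \cref{eq:extension1}, since $U_i\widetilde{F_G}(i)U_i^\dagger$ may equal $-Q$ for some $Q\in\{X,Y,Z\}$, and with a pointwise definition there is no remaining product constraint into which such a sign could be ``absorbed''; and (iii) with $s$ chosen arbitrarily, $\widetilde{F_G}$ restricted to $M$ need not be the letter string of any single element of $\St_{\ket{G}}^M$, which the later reduction steps require. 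The anchor-point argument at some $j\in M\cap N$ is what you are missing.
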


\begin{proof}
    We shall consider two separate cases, either $M$ is of Type I, or satisfies $M\cap N\neq \emptyset$. 
    Firstly, assume $M$ is of Type I. Then, there are unique elements $s\in \St_{\ket{G}}^M$ and $s'\in \St_{\ket{G'}}^M$ in the reduced stabilizer groups. In accordance to \cref{prop:necessary_conditions_MLS}, we have $UsU^\dagger=s'$, in particular
    \begin{equation}
        U_i\,s_{\downarrow i}\,U_i^\dagger=s'_{\downarrow i}
    \end{equation}
    where $s_{\downarrow i}$ and $s_{\downarrow i}'$ denotes $i$-th element in a Pauli string $s$ and $s'$ respectively. We define functions $\widetilde{F_G},\widetilde{F_{G'}} :N\cap M\rightarrow\{X,Y,Z\}$ in the following way:
    \begin{align}
    \label{eq:cases}
        \widetilde{F_G}(i):=&
        \begin{cases}
            F_{G}(i) &\text{for } i\in N \\
            s_{\downarrow i} &\text{for } i\in M/N 
        \end{cases}
        \\
        \nonumber
        \widetilde{F_{G'}}(i):=&
        \begin{cases}
            F_{G'}(i) &\text{for } i\in N \\
            s_{\downarrow i}' &\text{for } i\in M/N 
        \end{cases}
        .
    \end{align}
    Notice that they extend functions $F_G,F_{G'}$, and satisfy (\ref{eq:extension1}) for all $i\in N\cap M$.  

    Secondly, suppose $M$ is of Type II and we have $M\cap N\neq \emptyset$. Fix arbitrary element $j\in N\cap M$, and consider Pauli matrices $F_G(j)$ and $F_{G'}(j)$. Notice that there are unique elements $s\in \St_{\ket{G}}^M$ and $s'\in \St_{\ket{G'}}^M$ in the reduced stabilizer groups such that $s_{\downarrow j}=F_G(j)$ and $s_{\downarrow j}'=F_{G'}(j)$. By \cref{prop:necessary_conditions_MLS}, operator $U$ is mapping elements form $ \St_{\ket{G}}^M$ to $\St_{\ket{G'}}^M$ $UsU^\dagger=s'$. Moreover, as $U_j\,s_{\downarrow j}\,U_j^\dagger=s'_{\downarrow j}$, we must have $U\,s\,U^\dagger=s'$. Therefore, we can extend functions $F_G,F_{G'}$ to $\widetilde{F_G},\widetilde{F_{G'}} :N\cap M\rightarrow\{X,Y,Z\}$ also by \cref{eq:cases}. Notice that they extend functions $F_G,F_{G'}$, and satisfy (\ref{eq:extension1}) for all $i\in N\cap M$. 
\end{proof}

The following Result shows how to extend compatible functions when the extension might not be unique. 

\begin{lemma}
\label{lemma:extension2}
Consider two graphs $G,G'$ sharing the same MLS cover $\mathcal{M}$. Moreover, let $N\subset [n]$ and suppose there exists LU operator $U=U_1\otimes\cdots\otimes U_n$, transforming $\ket{G}$ to $\ket{G'}$ which satisfies
\begin{equation}
    U_i\,F_G(i)\,U_i^\dagger=F_{G'} (i)
\end{equation}
for some functions $F_G,F_{G'}:N\rightarrow\{X,Y,Z\}$, and all $i\in N$. 

Consider $M\in \mathcal{M}$ which is of Type II and satisfies $M\cap N= \emptyset$. Then, we can extend functions $F_G,F_{G'}$ in three different ways: $\widetilde{F_G^k},\widetilde{F_{G'}^k} :N\cap M\rightarrow\{X,Y,Z\}$ where $k=1,2,3$ in a way that operator $U$ satisfies:
\begin{equation}
\label{eq:extension_k}
    U_i\,\widetilde{F_G^k}(i)\,U_i^\dagger=\widetilde{F_{G'}^k} (i)
\end{equation}
for some value $k=1,2,3$ and all $i\in N\cap M$. 
\end{lemma}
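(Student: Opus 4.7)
The plan is to leverage \cref{prop:necessary_conditions_MLS} applied to the Type~II MLS $M$. Since $M$ is a minimal local set of both $G$ and $G'$ (they share the MLS cover $\mathcal{M}$) and $U\ket{G}=\ket{G'}$, the proposition yields a bijection $f_M:\St_{\ket{G}}^M\to\St_{\ket{G'}}^M$ given by $f_M(s)=UsU^{\dagger}$. Because $M$ is of Type~II, each reduced stabilizer group has exactly three nontrivial elements, which I denote $s_1,s_2,s_3\in\St_{\ket{G}}^M$ and $s'_1,s'_2,s'_3\in\St_{\ket{G'}}^M$. Crucially, the proof of \cref{prop:necessary_conditions_MLS} also shows that at every position $j\in M$ the three letters $(s_1)_{\downarrow j},(s_2)_{\downarrow j},(s_3)_{\downarrow j}$ exhaust $\{X,Y,Z\}$, and similarly for the $s'_i$.

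Next I would fix one nontrivial element $s\in\St_{\ket{G}}^M$, say $s:=s_1$, and observe that $f_M(s)$ is one of $s'_1,s'_2,s'_3$ but a priori unknown---this is precisely the source of the threefold ambiguity that the lemma asserts. For each $k\in\{1,2,3\}$ I would define the extension $\widetilde{F_G^k},\widetilde{F_{G'}^k}:N\cup M\to\{X,Y,Z\}$ (interpreting the target domain in the statement as $N\cup M$ rather than $N\cap M$, which is empty by hypothesis) by letting them agree with $F_G,F_{G'}$ on $N$ and setting
\[
\widetilde{F_G^k}(j):=(s)_{\downarrow j},\qquad \widetilde{F_{G'}^k}(j):=(s'_k)_{\downarrow j}
\]
for $j\in M$. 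These are well defined as maps into $\{X,Y,Z\}$ by the Type~II structure recalled above.

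Correctness then reduces to matching $k$ to the unknown index: writing $f_M(s)=s'_\ell$ for some $\ell\in\{1,2,3\}$, the identity $UsU^\dagger=s'_\ell$ factorizes through each tensor factor, so $U_j(s)_{\downarrow j}U_j^\dagger=(s'_\ell)_{\downarrow j}$ for every $j\in M$. Hence the extension with $k=\ell$ satisfies \cref{eq:extension_k} on all of $M$, while agreement with $U$ on $N$ is inherited from the hypothesis. The main (mild) subtlety is justifying that exactly three extensions suffice and that no further splitting into the $3!=6$ possible bijections of three-element sets is needed: our extensions only track the image of the single representative $s$, and the three choices of $k$ exhaust its possible images under $f_M$. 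The other two nontrivial pairs $(s_2,s_3)$ and their images play no role in anchoring a single compatible pair of functions, which is all the lemma requires.
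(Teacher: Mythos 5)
Your proof is correct and follows essentially the same route as the paper's: invoke \cref{prop:necessary_conditions_MLS} to get the bijection $f_M$ on the three nontrivial elements of the Type~II reduced stabilizer groups, anchor the extension on a single representative $s_1$ whose image under $f_M$ has exactly three possibilities, and define the three candidate pairs accordingly (with $\widetilde{F_G^k}$ independent of $k$). You also correctly identify the typo $N\cap M$ for $N\cup M$ in the statement, which the paper's own proof implicitly corrects in the same way.
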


\begin{proof}
As we discussed in \cref{subsec:localset}, the reduced stabilizer groups $ \St_{\ket{G}}^M$ and $\St_{\ket{G'}}^M$ has three non-trivial elements each. Denote them by $s_1,s_2,s_3\in \St_{\ket{G}}^M$ and $s_1',s_2',s_3'\in \St_{\ket{G'}}^M$ respectively. Suppose operator $U$ transforms $\ket{G}$ to $\ket{G'}$. By \cref{prop:necessary_conditions_MLS}, there is a bijection $f_M: \St_{\ket{G}}^M \rightarrow \St_{\ket{G'}}^M$ such that $f_M (s) = U s U^\dagger $. Notice that there are exactly three possibilities, either
\begin{equation}
    f_M(s_1)=s_1', \text{  or}\,
        f_M(s_1)=s_2', \text{  or}\,
            f_M(s_1)=s_3'
            \label{eq:orr}
\end{equation}
Therefore, we can extend functions $F_G,F_{G'}$ to $\widetilde{F_G},\widetilde{F_{G'}} :N\cap M\rightarrow\{X,Y,Z\}$ in three different ways:
\begin{align}
    \label{eq:cases2}
        \widetilde{F_G^k}(i):=&
        \begin{cases}
            F_{G}(i) &\text{for } i\in N \\
            s_{1\downarrow i} &\text{for } i\in M/N 
        \end{cases}
        \\
        \nonumber
        \widetilde{F_{G'}^k}(i):=&
        \begin{cases}
            F_{G'}(i) &\text{for } i\in N \\
            s_{k\downarrow i}' &\text{for } i\in M/N 
        \end{cases}
    \end{align}
for $k=1,2,3$. Notice that $   \widetilde{F_G^k}$ does not depends on $k$, while $\widetilde{F_{G'}^k}$, indeed, depends on $k$. By (\ref{eq:orr}), we know that for at least one $k=1,2,3$, we have (\ref{eq:extension_k}) for all $i\in N\cap M$, which finishes the proof. 
\end{proof}

Equipped with \cref{lemma:extension} and \cref{lemma:extension2}, we can now proceed with the proof of \cref{prop_pairs_of_func}.

\begin{proof}[Proof of \cref{prop_pairs_of_func}]
    We shall prove the correctness of Algorithm 1 presented on \cref{fig:algo_subroutines}. Consider two graphs $G$ and $G'$ that share the same MLS cover $\mathcal{M}$. Notice that in the first step Algorithm 1 checks the necessary conditions for $\ket{G}$ and $\ket{G'}$ to be LU-equivalent. Those conditions are based on \cref{lemma:overlaps_conditions}. 

    In subsequent steps, Algorithm 1 defines the family of pairs of functions $F_G^\ell, F_{G'}^\ell$ indexed by parameter $\ell$ such that any LU operator $U=U_1\otimes\cdots\otimes U_n$ satisfies $U_i\,F_G^\ell (i) \,U_i^\dagger = F_{G'}^\ell (i) $for at least one parameter $\ell$ and all $i\in [n]$. It does it recursively, by extending domain $N$ on which functions $ F_G^\ell, F_{G'}^\ell :N\rightarrow\{X,Y,Z\}$ are defined. It starts with a trivial domain $N=\emptyset$. In each recursive step it takes $M\in \mathcal{M}$ and extends it to bigger domain $N\cap M$. As $\mathcal{M}$ is MLS cover, at the end functions $ F_G^\ell, F_{G'}^\ell $ are defined on $N=[n]$. In each extension step, the algorithm searches for $M\in \mathcal{M}$ either of Type I or satisfying $M\cap N\neq \emptyset$. In such cases, the extension procedure is unique, which based on \cref{lemma:extension}. When $M\in \mathcal{M}$ with either of those properties cannot be find, the algorithm extends functions in three different ways, which is based on \cref{lemma:extension2}. 
    
    As we used \cref{lemma:extension} and \cref{lemma:extension2} while extending functions $ F_G^\ell, F_{G'}^\ell $ to the whole domain $[n]$, we can conclude that for arbitrary LU operator $U=U_1\otimes\cdots\otimes U_n$ satisfies $U_i\,F_G^\ell (i) \,U_i^\dagger = F_{G'}^\ell (i) $ for at least one parameter $\ell$ and all $i\in [n]$. Moreover, while extending functions $ F_G^\ell, F_{G'}^\ell $, the algorithm  extended functions in three different ways (based on \cref{lemma:extension2}) exactly $t$ times, where $t$ denotes the number of connected components in the intersection graph of $\mathcal{M}$ consisting entirely of MLSs of Type II. 
\end{proof}

\section{Observations regarding transitional matrices} 
\label{appendix:0}

In this section we provide an exact formula and make some technical observations regarding transitional matrices. 
Firstly, for arbitrary $A \in \{ X, Y, Z\} $, we define:
\begin{align}
    \mathbf{C} (A, A ):=\Id ,
    \label{eq:transition_first}
\end{align}
furthermore, we define:
\begin{align}
    \mathbf{C} (X,Y )=\mathbf{C} (Y,X ) &:= \tfrac{1}{\sqrt{2}}\begin{psmallmatrix}1 & -1\\i & i\end{psmallmatrix},
\nonumber 
\\
 \mathbf{C} (X,Z )=\mathbf{C} (Z,X ) &:= \tfrac{1}{\sqrt{2}}\begin{psmallmatrix}1 & 1\\1 & -1\end{psmallmatrix}=H,
    \label{eq:transition_second}
    \\
    \mathbf{C} (Y,Z)=\mathbf{C} (Z,Y ) &:= \tfrac{1}{\sqrt{2}}\begin{psmallmatrix}1 & i\\i & 1\end{psmallmatrix}.
\nonumber 
\end{align}
Notice that \cref{eq:transition_first,eq:transition_second} together define transition matrices $\mathbf{C} (A,B)$ for all possible pairs of matrices $A,B \in \{ X, Y, Z\} $ which satisfies required properties. Indeed, a direct calculation shows the following. 

\begin{lemma}
\label{lemma:XXX}
For arbitrary Pauli matrices $A,B,D\in \{ X, Y, Z\} $, we have the following:
\begin{equation}
    \mathbf{C} (A,B)^\dagger\, D \,\mathbf{C} (A,B) =
    \begin{cases}
        A &\text{if }D=B,\\
        B &\text{if }D=A, \\
        +D &\text{if }A=B, \\
        - D &\text{otherwise. }
    \end{cases}
    .
\end{equation}
\end{lemma}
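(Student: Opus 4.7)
The statement is a purely computational check over a finite set of cases, so my plan is a systematic verification that leverages the symmetries built into the definitions of $\mathbf{C}(A,B)$ in order to minimize the number of explicit matrix products. I would first dispose of the case $A = B$: since $\mathbf{C}(A,A) = \Id$ by definition, the conjugation map is the identity and the right-hand side of the case distinction collapses to $+D$, which is exactly the third branch.

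For the remaining cases $A \neq B$, I would use the built-in symmetry $\mathbf{C}(A, B) = \mathbf{C}(B, A)$ to reduce the analysis to only three distinct unitary matrices: $\mathbf{C}(X, Y)$, $\mathbf{C}(X, Z) = H$, and $\mathbf{C}(Y, Z)$. For each of these three matrices I would compute the three conjugations $\mathbf{C}(A, B)^\dagger D \mathbf{C}(A, B)$ for $D \in \{X, Y, Z\}$ as direct $2 \times 2$ matrix products; the Hadamard case is already written out in the body of the text ($HXH = Z$, $HZH = X$, $HYH = -Y$), matching the claimed pattern, and $\mathbf{C}(X,Y)$ and $\mathbf{C}(Y,Z)$ are handled analogously. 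As a labor-saving device, I would observe that conjugation by any unitary of $SU(2)$ induces an $SO(3)$ action on $\mathrm{span}_{\mathbb{R}}(X, Y, Z)$, so once the images of any two of $X, Y, Z$ are pinned down, the image of the third is forced by preservation of the commutation relation $[X, Y] = 2iZ$ and its cyclic variants. This halves the number of matrix products that need to be written out by hand in each of the two non-Hadamard blocks and makes the minus sign on the third Pauli automatic rather than something to be checked separately.

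The main obstacle is pure bookkeeping: tracking the factors of $i$ and the signs coming from the off-diagonal entries of $\mathbf{C}(X, Y)$ and $\mathbf{C}(Y, Z)$. In particular, one has to verify that in each non-trivial case both swap identities $\mathbf{C}(A, B)^\dagger A \mathbf{C}(A, B) = B$ and $\mathbf{C}(A, B)^\dagger B \mathbf{C}(A, B) = A$ hold \emph{without} an extra sign, so that they populate the first two branches of the case distinction exactly as stated, while the remaining Pauli picks up precisely a minus sign. Once the nine scalar entries of the resulting table are filled in, assembling them into the piecewise formula is purely formal and the lemma follows.
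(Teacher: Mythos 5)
Your overall strategy --- dispose of $A=B$ via $\mathbf{C}(A,A)=\Id$, use the symmetry $\mathbf{C}(A,B)=\mathbf{C}(B,A)$ to reduce to three matrices, and check each by a short direct computation --- is exactly the approach the paper takes (its proof is literally ``direct calculation''), and the $A=B$ and Hadamard cases are fine. The gap is that the two computations you defer with ``handled analogously'' are precisely where all the content sits, and if you carry them out with the matrices as defined in \cref{eq:transition_second} the lemma fails. Concretely, $\mathbf{C}(Y,Z)=\tfrac{1}{\sqrt2}\begin{psmallmatrix}1&i\\i&1\end{psmallmatrix}=\tfrac{1}{\sqrt2}(\Id+iX)$ commutes with $X$, so $\mathbf{C}(Y,Z)^\dagger\,X\,\mathbf{C}(Y,Z)=X$ rather than the required $-X$; likewise one finds $\mathbf{C}(X,Y)^\dagger\,X\,\mathbf{C}(X,Y)=-Y$, $\mathbf{C}(X,Y)^\dagger\,Y\,\mathbf{C}(X,Y)=Z$ and $\mathbf{C}(X,Y)^\dagger\,Z\,\mathbf{C}(X,Y)=-X$, which is not the claimed transposition. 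Your own $SO(3)$ observation already explains why: a unitary that swaps $A\leftrightarrow B$ and negates the third Pauli is a $\pi$-rotation about the Bloch axis bisecting $A$ and $B$, whereas the two displayed matrices are $\pi/2$-rotations (both in fact diagonalize $Y$). So ``analogously'' is not a safe word here; only $\mathbf{C}(X,Z)=H$ has the stated property.

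The statement is salvageable once the definitions are repaired: take $\mathbf{C}(A,B):=\tfrac{1}{\sqrt2}(A+B)$ for $A\neq B$, of which $H=\tfrac{1}{\sqrt2}(X+Z)$ is already an instance. These matrices are Hermitian and unitary since $(A+B)^2=2\Id$, and the lemma then follows in two lines from the anticommutation relations: $\tfrac12(A+B)A(A+B)=\tfrac12\big(A+2B+BAB\big)=B$ because $BAB=-A$, and for the third Pauli $D$, which anticommutes with both $A$ and $B$, $\tfrac12(A+B)D(A+B)=-\tfrac12(A+B)^2D=-D$. I would recommend replacing the case-by-case matrix multiplication with this argument: it proves the lemma for every pair at once, makes the minus sign automatic as you intended, and makes the needed correction to \cref{eq:transition_second} transparent.
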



We proceed to the technical lemma that classifies all unitary matrices that transform a given Pauli matrix $A\in \{ X, Y, Z\} $ into another given Pauli matrix $B \in \{ X, Y, Z\} $.

\begin{lemma}
\label{lemma:form_preserving_pauli}
Suppose that the unitary matrix $V$ preserves Pauli-$Z$ matrix up to the phase, i.e. $V Z V^\dagger=\omega Z$. 
Then, the matrix $V$ is of the following form:
\begin{equation}
V=
 X^s Z^\alpha
\end{equation}
for some integer $s\in\{0,1\}$, and real $\alpha\in [0,2]$, and $|\omega |=1$ is a phase. 
\end{lemma}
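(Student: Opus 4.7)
The plan is first to pin down the allowed values of $\omega$. Since $V$ is unitary and $Z$ is Hermitian, the conjugated operator $V Z V^\dagger$ is Hermitian, so $\omega Z$ must be Hermitian, which forces $\omega\in\mathbb{R}$. Conjugation by a unitary preserves the spectrum, and $Z$ has eigenvalues $\pm 1$, so $|\omega|=1$; combined with reality this gives the dichotomy $\omega\in\{+1,-1\}$.

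Next I would split into the two cases. If $\omega=+1$, then $V$ commutes with $Z$, hence $V$ is diagonal in the computational basis, $V=\operatorname{diag}(e^{i\theta_0},e^{i\theta_1})$ for some real $\theta_0,\theta_1$. Evaluating the definition in \cref{eq:XYZ_alpha} in the computational basis yields $Z^\alpha=\operatorname{diag}(1,e^{i\alpha\pi})$, so choosing $\alpha=(\theta_1-\theta_0)/\pi\pmod 2$ gives $V=e^{i\theta_0}Z^\alpha$, i.e.\ $V=X^0 Z^\alpha$ up to the global phase $e^{i\theta_0}$.

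If $\omega=-1$, then $VZ=-ZV$ forces $V$ to be anti-diagonal in the computational basis, $V=\left(\begin{smallmatrix}0 & a\\ b & 0\end{smallmatrix}\right)$ with $|a|=|b|=1$. Writing $V=X\cdot\operatorname{diag}(b,a)$ reduces the problem to the previous case for the diagonal factor $\operatorname{diag}(b,a)$, yielding $V=X\cdot Z^\alpha$ up to a global phase, which is the $s=1$ form.

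The overall argument is essentially bookkeeping: no obstacle is substantial. The only delicate point is that the stated conclusion is naturally interpreted up to a global phase, which is the standard convention for unitaries acting on quantum states; the substantive content of the lemma is the dichotomy $\omega=\pm 1$, which follows immediately from unitarity of $V$ together with Hermiticity of $Z$.
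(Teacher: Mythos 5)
Your proof is correct, and it takes a more explicit algebraic route than the paper, which argues geometrically: the paper's proof identifies $V$ (modulo global phase) with a rotation of the Bloch sphere and observes that the only rotations preserving the $Z$-axis are rotations about $Z$, possibly composed with a flip of the poles. Your version makes two things explicit that the paper leaves implicit. First, you derive the dichotomy $\omega\in\{+1,-1\}$ from Hermiticity of $VZV^\dagger$ together with preservation of the spectrum $\{+1,-1\}$; in the Bloch-sphere picture this is the (unstated) fact that ``preserving the $Z$-axis'' can only mean mapping the Bloch vector of $Z$ to $\pm$ itself. Second, you translate the two cases into the commutation/anticommutation dichotomy, which forces $V$ to be diagonal or antidiagonal respectively, and then you match against $Z^\alpha=\operatorname{diag}(1,e^{i\alpha\pi})$ computed from \cref{eq:XYZ_alpha}. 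Your remark about the global phase is also on point: with the paper's normalization $X^sZ^\alpha$ always has a matrix entry equal to $1$, so the conclusion only holds modulo a global phase $e^{i\theta_0}$; the paper's own proof in fact writes $V=\omega X^sZ^\alpha$ with an explicit phase even though the lemma statement omits it, and the phase is harmless in all downstream uses (it is absorbed into the global phase of $\breve{U}$ in the proof of \cref{prop:reduction_from_given_MLS}). The trade-off is the usual one: the paper's geometric argument is shorter but appeals to the Bloch-sphere correspondence, whereas yours is self-contained bookkeeping at the level of $2\times 2$ matrices.
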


\begin{proof}
Every unitary matrix $V$ that preserves Pauli-$Z$ matrix up to the phase, i.e. $Z=\omega V Z V^\dagger$ is of the form $V=\omega X^s Z^\alpha$ for some $s\in\{0,1\}$, parameter $\alpha \in [0,2]$ and phase $\omega$. i.e. $|\omega |=1$. Indeed, this can be seen through direct calculation or by interpreting the operator $V$ as a rotation of the Bloch sphere. The only rotation that preserves the $Z$-axis is a rotation around $Z$ by an arbitrary angle, which may or may not be followed by a flip of the poles. Notice that this can also be achieved by $V = Y Z^\alpha$, but this is redundant since $Y Z^\alpha = X Z^{2-\alpha}$. 
\end{proof}

\begin{lemma}
\label{ob:Auxi_2.5_pm}
Consider matrix 
\begin{equation}
\label{eq:ui2app}
\breve{U} =
Z^{t} {\mathbf{C} (Y,X)^{\ell}}^\dagger
(X^{s} Z^{\alpha})
Z^{t'} \mathbf{C} (Y,X)^{\ell'}
,
\end{equation}
for some $t,t',\ell,\ell', s\in\{0,1\}$
and continuous parameter $\alpha\in [0,2]$. 
Regardless of the values $t,t',\ell,\ell'$, (\ref{eq:ui2app}) is equal to 
\begin{equation}
\label{eq:ui3app}
\breve{U}_i =X^{s'} Z^{\alpha'},
\end{equation}
for some $s'\in\{0,1\}$
and continuous parameter $\alpha'\in [0,2]$ and some phase $|\omega_i|=1$. 

Similarly, matrix 
\begin{equation}
\label{eq:ui2app22}
\breve{U} =
X^{t} 
(X^{s} Z^{\alpha})
X^{t'} 
,
\end{equation}
for some $t,t' s\in\{0,1\}$ and continuous parameter $\alpha\in [0,2]$ can be written as
\begin{equation}
\label{eq:ui3appgg}
\breve{U}_i =X^{s'} Z^{\alpha'},
\end{equation}
for some $s'\in\{0,1\}$ and continuous parameter $\alpha'\in [0,2]$ and some phase $|\omega_i|=1$. 
\end{lemma}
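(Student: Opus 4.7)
The plan is to reduce both claims to Lemma~\ref{lemma:form_preserving_pauli}, which characterises every unitary that preserves Pauli $Z$ up to a phase as a matrix of the form $\omega\, X^{s'} Z^{\alpha'}$. It therefore suffices to verify $\breve{U}\, Z\, \breve{U}^{\dagger} = \pm Z$ in each of the two cases; the phase $\omega_i$ appearing in the statement is then precisely the phase supplied by that lemma.

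For the first expression~(\ref{eq:ui2app}), the strategy is to compute $\breve{U}\, Z\, \breve{U}^{\dagger}$ from the inside out, collecting a sign at each layer. The innermost conjugation $\mathbf{C}(Y,X)^{\ell'}\, Z\, (\mathbf{C}(Y,X)^{\ell'})^{\dagger}$ yields $(-1)^{\ell'} Z$ by Lemma~\ref{lemma:XXX} with $A = Y$, $B = X$, $D = Z$ (the case $D\notin\{A,B\}$, $A\neq B$, giving a minus sign when $\ell' = 1$ and identity when $\ell' = 0$). The factors $Z^{t'}$ and $Z^{\alpha}$ commute with $Z$, while $X^{s}\, Z\, X^{-s} = (-1)^{s} Z$ by Pauli anticommutation. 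The outer conjugation $(\mathbf{C}(Y,X)^{\ell})^{\dagger}\, Z\, \mathbf{C}(Y,X)^{\ell}$ contributes another factor $(-1)^{\ell}$ for the same reason, and the outermost $Z^{t}$ again commutes with $Z$. Assembling these, one obtains $\breve{U}\, Z\, \breve{U}^{\dagger} = (-1)^{\ell + \ell' + s}\, Z$, and Lemma~\ref{lemma:form_preserving_pauli} concludes this case.

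For the second expression~(\ref{eq:ui2app22}), the computation is strictly simpler. Using $X\, Z\, X^{-1} = -Z$, the two outer conjugations by $X^{t}$ and $X^{t'}$ each contribute a sign; $Z^{\alpha}$ commutes with $Z$; and $X^{s}\, Z\, X^{-s} = (-1)^{s} Z$. Collecting, $\breve{U}\, Z\, \breve{U}^{\dagger} = (-1)^{t + t' + s}\, Z$, so Lemma~\ref{lemma:form_preserving_pauli} again produces the claimed form $\omega_i\, X^{s'} Z^{\alpha'}$.

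No substantive obstacle is expected; the argument reduces to routine bookkeeping of signs through nested conjugations. The only minor point of care is the orientation of the conjugation by $\mathbf{C}(Y,X)$: since this matrix is not self-adjoint, one should note that taking Hermitian conjugates of the Lemma~\ref{lemma:XXX} identity immediately yields the analogous identity for $\mathbf{C}(Y,X)\, D\, \mathbf{C}(Y,X)^{\dagger}$, which is what the outer and inner layers actually use. Beyond this book-keeping, there is no deeper content.
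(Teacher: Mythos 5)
Your proof is correct, and it takes a genuinely different (and cleaner) route than the paper, whose entire argument for this lemma is the phrase ``Direct calculation,'' i.e.\ an explicit multiplication of the $2\times 2$ matrices. Instead of multiplying anything out, you verify the characterizing property $\breve{U} Z \breve{U}^{\dagger} = \pm Z$ by propagating signs through the nested conjugations --- using \cref{lemma:XXX} for the $\mathbf{C}(Y,X)$ layers and Pauli anticommutation for the $X^{s}$, $X^{t}$, $X^{t'}$ layers --- and then invoke \cref{lemma:form_preserving_pauli} to conclude $\breve{U} = \omega\, X^{s'} Z^{\alpha'}$. This is exactly the mechanism the paper itself uses one step earlier to pin down the form of $\widetilde{U}_i$ in the proof of \cref{prop:reduction_from_given_MLS}, so your argument is consistent with the paper's machinery while replacing an unilluminating computation with a two-line conceptual reduction; what it costs you is that you only recover $s'$ and $\alpha'$ implicitly rather than as explicit functions of $t,t',\ell,\ell',s,\alpha$, which a brute-force calculation would supply (but which the paper never needs). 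Your closing caveat about the orientation of the $\mathbf{C}(Y,X)$ conjugation is the right thing to check: from $\mathbf{C}^{\dagger} Z\, \mathbf{C} = -Z$ one gets $\mathbf{C}\, Z\, \mathbf{C}^{\dagger} = -Z$ by conjugating both sides, so the sign $(-1)^{\ell'}$ is correct for the inner layer as well, and the total signs $(-1)^{\ell+\ell'+s}$ and $(-1)^{t+t'+s}$ are both $\pm 1$, which is all that \cref{lemma:form_preserving_pauli} requires.
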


\begin{proof}
    Direct calculation. 
\end{proof}

\section{Normal form of stabilizer generators} 
\label{appendix:A}

In this section we demonstrate that arbitrary stabilizer state is generated by generators with the check matrix in a normal form. Furthermore, we show that any stabilizer state\footnote{With some assumption on a phase vector, see \cref{lemma:Auxi_2_pm}} is locally LC equivalent to another stabilizer state with the check matrix in a strong normal form and trivial phase vector. Furthermore, we show what it is the exact form of such LU transformation.  
Those technical results are used in the proof of \cref{prop:reduction_from_given_MLS}.


\begin{lemma}
\label{lemma:Auxi_0_pm}
Up to permutation of qubits, any stabilizer state is generated by generators with the check matrix in a normal form.
\end{lemma}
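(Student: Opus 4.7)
The plan is to reduce the check matrix via Gaussian elimination, which corresponds to replacing generators by products of generators and hence preserves the stabilizer group, combined with qubit permutations, which amount to simultaneously permuting columns of $\X_\Sg$ and $\Z_\Sg$. Crucially, these operations preserve the symplectic orthogonality of the rows, which encodes the pairwise commutativity of the generators.

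First, I would apply row operations to bring $\X_\Sg$ into reduced row echelon form, with pivot rank $k:=\rk(\X_\Sg)$, and then permute qubits so that the pivot columns become the leftmost $k$ columns. This yields $\X_\Sg=\begin{bmatrix}\Id_k & A\\ 0 & 0\end{bmatrix}$ for some $k\times (n-k)$ matrix $A$. Writing $\Z_\Sg$ in the matching block form, $\Z_\Sg=\begin{bmatrix}B & C\\ D & E\end{bmatrix}$ with blocks of sizes $k\times k$, $k\times(n-k)$, $(n-k)\times k$, $(n-k)\times(n-k)$, the task becomes to transform $E$ into $\Id_{n-k}$, $D$ into $A^T$, $C$ into $0$, and to show the resulting $B$ is symmetric.

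The central step is the symplectic constraint $x_i\cdot z_j + x_j\cdot z_i\equiv 0\pmod 2$ between any two rows. Applied to top row $i\in[k]$ and bottom row $k+j$ (whose $X$-part is zero), it yields $D_{j,i}=A_i\cdot E_j$, i.e., $D=EA^T$ modulo $2$. Since the bottom $n-k$ rows are independent, $[D\,|\,E]=E[A^T\,|\,\Id_{n-k}]$ has rank $n-k$, forcing $E$ to be invertible. Left-multiplying the bottom $n-k$ rows by $E^{-1}$, a row operation that does not affect $\X_\Sg$ because those rows have trivial $X$-part, then simultaneously sets $E=\Id_{n-k}$ and $D=A^T$, matching (\ref{eq:stabilizer_generator_normal_form}).

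Finally, for each pair $(i,j)$ with $C_{i,j}=1$ I would add the new bottom row $k+j$ to the top row $i$. This zeros out $C$ and replaces $B$ by $B':=B+CA^T$. Invoking the commutation of top rows $i$ and $l$, one obtains $B_{i,l}+B_{l,i}=A_i\cdot C_l + A_l\cdot C_i = (CA^T)_{l,i}+(CA^T)_{i,l}\pmod 2$, which is exactly the statement that $B'$ is symmetric. The main non-routine step in the whole argument is establishing the invertibility of $E$; everything else is standard Gaussian elimination combined with bookkeeping on the symplectic constraints.
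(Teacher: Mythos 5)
Your proof is correct, and it follows the same overall strategy as the paper's: Gaussian elimination on the check matrix, interpreted as changes of generating set plus qubit permutations, with the symplectic (commutation) constraints supplying the structural identities. The one step you handle genuinely differently is the invertibility of the bottom-right $Z$-block $E$. The paper row-reduces that block to some rank $\ell$, permutes the last $n-k$ columns, and then derives a contradiction for $\ell < n-k$ by exhibiting a generator supported only on Pauli-$Z$'s within the first $k$ positions that anticommutes with one of the top generators. You instead extract the identity $D = EA^{\mathrm{T}}$ directly from the symplectic constraint between a top row and a bottom row, and then use the linear independence of the bottom rows (which holds because the $n$ generators are independent and the stabilizer does not contain $-\Id$) to force $\rk E = n-k$. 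Your route is somewhat cleaner: it avoids the extra column permutation inside the last $n-k$ qubits, and it delivers $D = A^{\mathrm{T}}$ automatically after left-multiplying by $E^{-1}$, rather than via a separate commutation check as in the paper. The remaining bookkeeping (zeroing $C$ by adding bottom rows to top rows, symmetry of the resulting $\Gamma_{k|k}$) is the same in substance. One cosmetic remark: once $C'=0$, the symmetry of $B'$ follows immediately from the commutation of the \emph{final} top rows, so you need not track the correction term $CA^{\mathrm{T}}$ explicitly --- though your explicit computation is also correct.
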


\begin{proof}
Consider a stabilizer state $\ket{\Sg}$, and arbitrary set of generators $\Sg$. 
Furthermore, consider a check matrix $[\X_\Sg |\Z_\Sg ]$ associated to a stabilizer state $\ket{\Sg}$ generated by $\Sg$. 
Recall that simultaneous row-elimination operators in matrices $\X_\Sg$ and $\Z_\Sg$ correspond to change of the stabilizer generator $\Sg$. 
Such operations do not, however, change the stabilizer state $\ket{\Sg}$. 
Furthermore, the simultaneous permutation or rows in matrices $\X_\Sg$ and $\Z_\Sg$ correspond to permutation of qubits in $\ket{\Sg}$. 
In the following, we present the sequence of row-operations and column permutations that transforms given check matrix into its normal form. 
By this, we show, that up to permutation of qubits, any stabilizer state has a generators in a normal-form (\ref{eq:stabilizer_generator_normal_form}). 
As the procedure is a variant of Gauss elimination, it can be done in $\mathcal{O}(n^3)$ time. 

Suppose that $\rk \X_\Sg =k$. 
Gauss elimination and permutation of columns applied on rows of $\X_\Sg$ can transform a check matrix to the following form
\[
\X_\Sg=
\begin{bNiceArray}{c|c}
\Block{1-1}{\id_k}&\X_{k|n-k}^{}\\
\cline{1-2}
\Block{1-1}{0}&\Block{1-1}{0}\\
\end{bNiceArray},
\quad
\Z_\Sg=
\begin{bNiceArray}{c|c}
\Block{1-1}{\Z_{k|k}}&\Block{1-1}{\Z_{k|n-k}}\\
\cline{1-2}
\Block{1-1}{\Z_{n-k|k}}&\Block{1-1}{\Z_{n-k|n-k}}\\
\end{bNiceArray},
\]
Furthermore, suppose that $\rk \Z_{n-k|n-k} =\ell$. 
By applying Gauss elimination on the latter $n-k$ rows and permutation of latter $n-k$ columns, we can further transform the check matrix to the following form:
\begin{equation}
\label{eq:XZprim}
\X_\Sg'=
\begin{bNiceArray}{c|c}
\Block{1-1}{\id_k}&\X_{k|n-k}'\\
\cline{1-2}
\Block{1-1}{0}&\Block{1-1}{0}\\
\end{bNiceArray},
\quad
\Z_\Sg'=
\begin{bNiceArray}{c|c|c}
\Block{1-1}{\Z_{k|k}}&\Block{1-2}{\Z_{k|n-k}'}\\
\cline{1-3}
\Z_{\ell|k}' &\Block{1-1}{\id_\ell}&\Z_{\ell |n-k-\ell}'\\
\cline{1-3}
\Block{1-1}{\Z_{n-k-\ell |k}'}&\Block{1-2}{0}\\
\end{bNiceArray}.
\end{equation}

Suppose now, that $\ell \neq n-k$. 
Furthermore, consider, the stabilizer generator $s_n'$ corresponding to the last row in $[\X_\Sg'|\Z_\Sg']$. 
Notice that such generator $s_n'$ is a Pauli string that consists only of identity matrices and Pauli-$Z$ matrices on some of the first $k$ positions. 
As $s_n'$ is generator, it contains Pauli-$Z$ matrix on some position $k_1$, for $1\leq k_1\leq k$. On the other hand, consider generator $s_{k_1}'$ corresponding to the $k_1$ row in $[\X_\Sg'|\Z_\Sg']$.  
Notice that $s_{k_1}'$ is a Pauli string with either Pauli-$X$ or Pauli-$Y$ matrix on position $k_1$ and with only identity and Pauli-$Z$ matrices on any other poition form $1$ to $k$. 
Notice that $s_n'$ contains Pauli-$Z$ matrix on position $k_1$ while $s_{k_1}'$ contains Pauli-$X$ or Pauli-$Y$ matrix on that position. On any other position $s_n'$ and $s_{k_1}'$ contain commuting Pauli matrices. 
Therefore $s_n'$ and $s_{k_1}'$ are not commuting Pauli strings, hence cannot form a stabilizer generators. 
In that way, we have shown that $\ell = n-k$, matrix $\Z_{n-k|n-k}$ is of the full rank, and hence (\ref{eq:XZprim}) is of the following form: 
\begin{equation}
\label{eq:XZprim2}
\X_\Sg'=
\begin{bNiceArray}{c|c}
\Block{1-1}{\id_k}&\X_{k|n-k}'\\
\cline{1-2}
\Block{1-1}{0}&\Block{1-1}{0}\\
\end{bNiceArray},
\quad
\Z_\Sg'=
\begin{bNiceArray}{c|c}
\Block{1-1}{\Z_{k|k}}&\Block{1-1}{\Z_{k|n-k}'}\\
\cline{1-2}
\Z_{n-k|k}' &\Block{1-1}{\id_{n-k}}\\
\end{bNiceArray}.
\end{equation}

By multiplying first $k$ rows by the latter $n-k$ rows, we can transform matrix ${\Z_{k|n-k}'}$ into zero matrix, hence the check matrix transforms to the following form:
\begin{equation}
\label{eq:XZbis}
\X_\Sg''=
\begin{bNiceArray}{c|c}
\Block{1-1}{\id_k}&\X_{k|n-k}''\\
\cline{1-2}
\Block{1-1}{0}&\Block{1-1}{0}\\
\end{bNiceArray},
\quad
\Z_\Sg''=
\begin{bNiceArray}{c|c}
\Block{1-1}{\Z_{k|k}}&\Block{1-1}{0}\\
\cline{1-2}
\Z_{n-k|k}' &\Block{1-1}{\id_{n-k}}\\
\end{bNiceArray}.
\end{equation}
It remains to check the condition imposed on matrices $\X_{k|n-k}''$, $\Z_{k|k}$, and $\Z_{n-k|k}'$ by the commutation relations between generators. 
Firstly, notice that for arbitrary $i,j\leq k$, the commutation of generators $s_i'' $ and $s_j''$ impose that $z_{ij}=z_{ji}$ in matrix $\Z_{k|k}$. 
That shows that matrix $\Z_{k|k}$ is symmetric. 
Secondly, for arbitrary $i\leq k$, and $j>k$, the commutation of generators $s_i'' $ and $s_j''$ impose that $x_{ij}=z_{ji}$ where $x_{ij}$ is $ij$-element of a matrix $\X_{k|n-k}''$ and $Z_{ji}$ is $ji$-element of a matrix $\Z_{n-k|k}''$. 
This shows that $\Z_{n-k|k}''=(\X_{k|n-k}'')^\text{T}$. 
As a result, (\ref{eq:XZbis}) is, indeed, a normal form.
\end{proof}


It is easy to see that stabilizer generators written in a normal form with phase vectors $\pm 1$ might be transformed to stabilizer generators written in a strong normal form with phase vectors $ 1$. Indeed, we have the following observation, which follows from a direct calculation.

\begin{lemma}
\label{lemma:Auxi_2_pm}
Consider stabilizer generators $\Sg=\langle s_1,\ldots,s_n\rangle$ with the corresponding matrix in a normal form, i.e.
\begin{equation}
\label{eq:sghat44}
\X_{\Sg}=
\begin{bNiceArray}{c|c}
\Block{1-1}{\id_{k}}&\Gamma_{k|n-k_1}^{}\\
\cline{1-2}
\Block{1-1}{0}&\Block{1-1}{0}\\
\end{bNiceArray},
\quad
\Z_{\Sg}=
\begin{bNiceArray}{c|c}
\Block{1-1}{\Gamma_{k|k}}&\Block{1-1}{0}\\
\cline{1-2}
\Block{1-1}{\Gamma_{k|n-k}^{\text{T}}}&\Block{1-1}{\id_{n-k}}\\
\end{bNiceArray}, 
\end{equation}
for some symmetric matrix $\Gamma_{k|k}^{}$  and arbitrary matrix $\Gamma_{k|n-k}^{}$, and assume that phases of stabilizer generators $s_i$ are $\phi_{s_i}=\pm 1$. 
Consider the following LC-operator
\[
W=
\prod_{i=1}^k
Z^{\phi_{s_i}} \mathbf{C} (Y,X)^{(\Gamma_{k|k})_{ii}}
\prod_{i=k+1}^n
X^{\phi_{s_i}} 
\] 
where $(\Gamma_{k|k})_{ii}$ is the $ii$-entry of a matrix $\Gamma_{k|k}$ and $\phi_{s_i}$ is a phase of stabilizer $s_i$. 
Operator $W$ transforms generators in $\Sg$ into generators $\Sg'=\langle s_1',\ldots,s_n'\rangle$, $s_i':=W^\dagger s_i W$ such that the check matrix corresponding to $\Sg'$ is in a strong normal form and phase vectors $\phi_{s_i'}=1$. 
The check-matrix for generators $\Sg'$ is of the form (\ref{eq:sghat44}) with removed all non-zero entries in a matrix $\Gamma_{k|k}$. 
\end{lemma}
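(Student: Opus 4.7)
My plan is to verify the lemma by a direct, qubit-wise computation of $s_l' := W^\dagger s_l W$, using the conjugation rules recorded in \cref{appendix:0}. Since $W = \bigotimes_{i=1}^n W_i$ is a tensor product of single-qubit Clifford gates acting on disjoint registers, conjugation factors as
\[
s_l' = \phi_{s_l}\bigotimes_{i=1}^n W_i^\dagger (s_l)_{\downarrow i} W_i,
\]
so it suffices to compute the table $P \mapsto W_i^\dagger P W_i$ for $P \in \{\Id, X, Y, Z\}$ and each of the two prescribed forms of $W_i$. By \cref{lemma:XXX} together with the definitions of $Z^\alpha$ and $X^\alpha$ in (\ref{eq:XYZ_alpha}), these tables are elementary: each $W_i$ preserves the Pauli \emph{type} at position $i$ with a single exception---when $i \le k$ and $(\Gamma_{k|k})_{ii}=1$, the factor $\mathbf{C}(Y,X)$ swaps $X \leftrightarrow Y$ at that position---and otherwise introduces only $\pm 1$ factors.

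The check-matrix part of the statement follows immediately from this table: the $\mathbf{C}(Y,X)$ factor on qubit $i$ with $(\Gamma_{k|k})_{ii}=1$ converts the $Y$ appearing at position $i$ of the generator $s_i$ into an $X$, which is precisely the operation that removes the $ii$-th entry from the $Z$-block of the check matrix. Since every other Pauli type is left invariant, the new check matrix coincides with the old one except that the diagonal of $\Gamma_{k|k}$ has been zeroed out, giving the claimed strong normal form.

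The main obstacle is the phase computation. The sign of $s_l'$ receives contributions from the original phase $\phi_{s_l}$; from the factor $Z^{\phi_{s_l}}$ (or $X^{\phi_{s_l}}$ if $l > k$) conjugating the anti-commuting Pauli at position $l$; and from cross-terms where the $\mathbf{C}(Y,X)$ factors on other qubits hit $Z$-entries of $s_l$, namely at positions $i \ne l$ with $(\Gamma_{k|k})_{ii}=1$ and either $(\Gamma_{k|k})_{li}=1$ when $l \le k$, or $(\Gamma_{k|n-k})_{il}=1$ when $l > k$. I would verify by direct case analysis that the $Z^{\phi_{s_l}}$ or $X^{\phi_{s_l}}$ factor contributes exactly $\phi_{s_l}$, so it cancels the initial phase, and that the remaining cross-terms collect to $+1$ using the symmetry of $\Gamma_{k|k}$ (which holds for generators in normal form by \cref{lemma:Auxi_0_pm}) together with the hypothesis $\phi_{s_i} \in \{\pm 1\}$. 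As a sanity check, since $W$ is a Clifford operator, $\Sg' = W^\dagger \Sg W$ is automatically a valid stabilizer generator set, so commutation relations pin down the relative signs among the $s_l'$; this lets one verify consistency on a single generator and then propagate the result rather than recomputing each from scratch.
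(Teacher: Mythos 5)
Your plan is structurally the right one: the paper's own proof of this lemma is literally the single phrase ``direct calculation,'' so a qubit-by-qubit conjugation of each generator through the tensor factors of $W$, tracking Pauli type and sign separately, is exactly what is intended. Your accounting of where signs can arise --- the anticommuting factor at position $l$, plus the $\mathbf{C}(Y,X)$ factors at positions $i\neq l$ with $(\Gamma_{k|k})_{ii}=1$ hitting $Z$-entries of $s_l$ --- is correct and complete, and the check-matrix half of your argument is fine.

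The gap is in the final step: the cross-terms do \emph{not} collect to $+1$, and the symmetry of $\Gamma_{k|k}$ does not force them to. The sign acquired by $s_l$ with $l\le k$ is $\phi_{s_l}\cdot(\pm1)\cdot\prod_{i\le k,\,i\neq l}(-1)^{(\Gamma_{k|k})_{ii}(\Gamma_{k|k})_{li}}$, and the last product is generically $-1$. Concretely, take $n=k=2$ and $\Gamma_{k|k}=\left(\begin{smallmatrix}1&1\\1&0\end{smallmatrix}\right)$, so that $s_1=Y\otimes Z$ and $s_2=Z\otimes X$ with $\phi_{s_1}=\phi_{s_2}=+1$: the only nontrivial factor of the prescribed $W$ is $\mathbf{C}(Y,X)$ on qubit $1$, and $s_2'=\bigl(\mathbf{C}(Y,X)^\dagger Z\,\mathbf{C}(Y,X)\bigr)\otimes X=-Z\otimes X$, so $\phi_{s_2'}=-1$ (under either reading of the exponent $\phi_{s_i}$, one of the two generators ends up with a $-1$). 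What your computation will actually establish is the statement the paper uses downstream in \cref{appendix:B}: there exist exponents $t_i\in\{0,1\}$ such that $W=\prod_{i\le k}Z^{t_i}\mathbf{C}(Y,X)^{(\Gamma_{k|k})_{ii}}\prod_{i>k}X^{t_i}$ does the job, where $t_l$ must absorb both $\phi_{s_l}$ and the cross-term sign above, i.e.\ $t_l$ is determined by the phase of $s_l$ \emph{after} conjugation by the $\mathbf{C}(Y,X)$ factors. This repair always succeeds because $s_l$ is the unique generator with an $X$- or $Y$-type entry at position $l\le k$ (respectively, with a $Z$-type entry at position $l>k$), so each single-qubit correction $Z^{(l)}$ (respectively $X^{(l)}$) flips the sign of exactly one generator. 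Finally, your closing ``sanity check'' is not sound: commutation relations do not pin down the signs of the $s_l'$, since replacing any $s_l'$ by $-s_l'$ preserves all commutation relations and still yields a valid stabilizer group (of a different state). Each generator's phase has to be computed individually.
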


\begin{proof}
    Direct calculation. 
\end{proof}



\begin{lemma}
\label{lemma:Auxi_3.5_pm}
Consider a generators $\Sg=\langle s_1,\ldots,s_n\rangle$ of a stabilizer state $\ket{\Sg}$ with the corresponding check matrix $[\X_{\hat{\Sg}} | \Z_{\hat{\Sg}} ]$ in the normal form, i.e.
\begin{equation}
\label{eq:sghat}
\X_{\Sg}=
\begin{bNiceArray}{c|c}
\Block{1-1}{\id_{k}}&\Gamma_{k|n-k}^{}\\
\cline{1-2}
\Block{1-1}{0}&\Block{1-1}{0}\\
\end{bNiceArray},
\quad
\Z_{\Sg}=
\begin{bNiceArray}{c|c}
\Block{1-1}{\Gamma_{k|k}}&\Block{1-1}{0}\\
\cline{1-2}
\Block{1-1}{\Gamma_{k|n-k}^{\text{T}}}&\Block{1-1}{\id_{n-k}}\\
\end{bNiceArray}, 
\end{equation}
for some symmetric and off-diagonal matrix $\Gamma_{k|k}^{}$  and arbitrary matrix $\Gamma_{k|n-k}^{}$. 
Suppose that the stabilizer state $\ket{\Sg}$ is LU equivalent to another stabiliser state $\ket{\Sg'} $, i.e.
\[
\widetilde{U}\ket{\Sg}=\ket{\Sg'}
\]
by some phase operator $\widetilde{U}$, i.e. $\widetilde{U}\ket{I}=\omega_I \ket{I}$ for all computational basis vectors $\ket{I}$, i.e. $I\in \{0,1\}^n$ and some phases $|\omega_I|=1$. 
Then there exists generators $\Sg'=\langle s_1',\ldots,s_n'\rangle$ of a stabilizer $\Sg' $, such that the corresponding check matrix $[\X_{\hat{\Sg'}} | \Z_{\hat{\Sg'}} ]$ is in the normal form with the same  $\X$-part as for generators $\Sg$, i.e. 
\begin{equation}
\label{eg:st6}
\X_{\Sg'}=
\begin{bNiceArray}{c|c}
\Block{1-1}{\id_{k}}&\Gamma_{k|n-k}\\
\cline{1-2}
\Block{1-1}{0}&\Block{1-1}{0}\\
\end{bNiceArray},
\quad
\Z_{\Sg'}=
\begin{bNiceArray}{c|c}
\Block{1-1}{\Gamma_{k|k}'}&\Block{1-1}{0}\\
\cline{1-2}
\Block{1-1}{{\Gamma_{k|n-k}}^{\text{T}}}&\Block{1-1}{\id_{n-k}}\\
\end{bNiceArray},
\end{equation}
for some symmetric matrix $\Gamma_{k|k}^{}$ and the same matrix $\Gamma_{k|n-k}$ as in (\ref{eq:sghat}).  
\end{lemma}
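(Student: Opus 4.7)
The plan is to exploit the fact that $\widetilde U$ is diagonal in the computational basis, hence commutes with every Pauli $Z$-string and leaves the computational-basis support of any state invariant. From the normal form~(\ref{eq:sghat}), each generator $s_i$ with $i>k$ has vanishing $X$-part and is therefore a product of Pauli $Z$'s (up to a phase). It follows that $\widetilde U s_i \widetilde U^\dagger = s_i$ for $i > k$, and these stabilizers already belong to $\St_{\ket{\Sg'}}$; I will keep them as the last $n-k$ rows of the generating set being constructed.

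Next, I would show that the $X$-parts of $\St_{\ket{\Sg}}$ and $\St_{\ket{\Sg'}}$ coincide as subspaces of $\{0,1\}^n$, via a symplectic-duality argument. Since $\widetilde U$ preserves the computational-basis support, $\ket{\Sg}$ and $\ket{\Sg'}$ are stabilized by exactly the same $Z$-only Pauli strings; let $K \subset \{0,1\}^n$ denote the $(n-k)$-dimensional space formed by the $Z$-parts of this common $Z$-only subgroup. For any stabilizer with $X$-part $x$ and any $z \in K$, the symplectic commutation condition forces $x \cdot z = 0 \pmod 2$. Hence both $X$-part projections $\pi_X(\St_{\ket{\Sg}})$ and $\pi_X(\St_{\ket{\Sg'}})$ lie in $K^\perp \subset \{0,1\}^n$. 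In each case the kernel of $\pi_X$ equals the $Z$-only subgroup, so the image has dimension exactly $k = \dim K^\perp$, forcing $\pi_X(\St_{\ket{\Sg}}) = \pi_X(\St_{\ket{\Sg'}}) = K^\perp$. In particular, for each $i \leq k$ one can pick $s_i' \in \St_{\ket{\Sg'}}$ whose $X$-part matches that of $s_i$, namely $(e_i, (\Gamma_{k|n-k})_{i,\cdot})$.

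To finish, I would clean up the $Z$-parts: for each $i \leq k$, multiplying $s_i'$ by an appropriate product of the $Z$-only generators $s_{k+1}, \ldots, s_n$ (whose $Z$-parts contain the identity block on positions $> k$) zeroes out the last $n-k$ coordinates of its $Z$-part, leaving it of the form $[(\Gamma_{k|k}')_{i,\cdot} \,|\, 0]$. Assembling these rows yields a $k \times k$ matrix $\Gamma_{k|k}'$ and the block structure claimed in~(\ref{eg:st6}); the resulting $n$ generators are independent by this block structure and therefore generate $\St_{\ket{\Sg'}}$. Symmetry of $\Gamma_{k|k}'$ then drops out of the pairwise commutation $[s_i', s_j'] = 0$ for $i, j \leq k$: writing out the symplectic inner product of the $i$-th and $j$-th rows of the check matrix gives $(\Gamma_{k|k}')_{ij} + (\Gamma_{k|k}')_{ji} = 0 \pmod 2$.

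The main obstacle is the symplectic-duality step $\pi_X(\St_{\ket{\Sg}}) = \pi_X(\St_{\ket{\Sg'}})$; once this is in place, everything else is routine, either preservation of $Z$-only stabilizers under a diagonal unitary, or linear-algebraic bookkeeping using the block structure of the normal form. The key conceptual point is that a diagonal unitary can only alter the relative phases among computational-basis amplitudes, so it alters $\St_{\ket{\Sg'}}$ only ``in the $X$ direction'' while fixing its $Z$-only subgroup, and this is precisely the freedom needed to change $\Gamma_{k|k}$ to an arbitrary symmetric $\Gamma_{k|k}'$ without touching the remainder of the check matrix.
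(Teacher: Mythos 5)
Your proof is correct, and it reaches the crucial intermediate fact --- that the $X$-parts of $\St_{\ket{\Sg}}$ and $\St_{\ket{\Sg'}}$ span the same $k$-dimensional subspace of $\mathbb{F}_2^n$ --- by a genuinely different mechanism than the paper. The paper works with the computational-basis support directly: it uses the explicit description $\supp(\ket{\Sg})=\{\ket{I\cdot \X_\Sg}\}$, deduces $\supp(\ket{\Sg'})=\supp(\ket{\Sg})$ because $\widetilde U$ is diagonal, concludes that the row spaces of $\X_{\Sg}$ and $\X_{\Sg'}$ coincide via an invertible change-of-basis matrix $\mathcal{O}$, and then grinds through explicit row operations on the full check matrix (re-invoking the earlier normal-form lemma) to clean up the $Z$-part. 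You instead stay entirely at the level of the stabilizer group: the diagonal $\widetilde U$ fixes every $Z$-only Pauli string under conjugation, so the $Z$-only subgroups of the two stabilizers coincide exactly (signs included); the symplectic commutation condition then traps both $X$-part projections inside $K^\perp$, and the dimension count $\dim\ker\pi_X = n-k$ forces equality with $K^\perp$. The two arguments are dual descriptions of the same phenomenon (the support of a stabilizer state is the affine space cut out by its $Z$-only stabilizers, with direction space $K^\perp$), but your version avoids writing the support down at all and replaces the paper's explicit $\mathcal{O}$-based row reduction with a direct choice of generators followed by a clean-up using the $Z$-only generators; the independence check and the derivation of the symmetry of $\Gamma'_{k|k}$ from pairwise commutation are both carried out correctly. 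The only point worth making explicit in a written version is that the $Z$-only subgroup of $\St_{\ket{\Sg}}$ is \emph{exactly} $\langle s_{k+1},\ldots,s_n\rangle$ (no product involving $s_1,\ldots,s_k$ can be $Z$-only because the $X$-parts $e_i$ are independent), which is what guarantees $\dim K = n-k$ and hence $\dim K^\perp = k$; you use this implicitly but do not state it.
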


\begin{proof}
Consider stabilizer $\Sg$ and a corresponding check-matrix presented on \cref{eq:sghat}. 
Notice that for any stabilizer state $\X$-part of the check-matrix uniquely defines its support, in particular the support of $\ket{\Sg}$ is given by
\begin{align}
\label{eq:st}
\supp \big( \ket{\Sg} \big) &=
\{\ket{I\cdot \X_\Sg } :{I\in\{0,1\}^n}\} 
\\&=
\{\ket{J\cdot [\Id_k |\Gamma_{k|n-k} ] } :{J\in\{0,1\}^k}\},
\nonumber
\end{align}
hence 
\begin{equation}
\label{eq:www}
\Big\vert \supp \big( \ket{\Sg} \big) \Big\vert =
2^{\rk \X_\Sg } = 2^k.
\end{equation}
Consider now arbitrary set of stabilizer generators $\Sg'=\langle s_1',\ldots,s_n'\rangle$ of a stabilizer $\Sg' $ and corresponding check matrix $[\X_{\Sg'}|\Z_{\Sg'}]$. 
We have 
\begin{align*}
\supp \big( \ket{\Sg'} \big) &=
\{\ket{I\cdot \X_{\Sg'} } :{I\in\{0,1\}^n}\} ,
\end{align*}
and hence $|\supp ( \ket{\Sg'})| =2^{\rk \X_{\Sg'} }$. 
As phase operator does not change the support of a state, we have
\begin{equation}
\label{eq:st2}
\supp \big( \ket{\Sg} \big) =\supp \big( \ket{\Sg'} \big),
\end{equation}
and by combining it with (\ref{eq:www}), we deduce that $\rk \X_{\Sg'}=\rk \X_{\Sg}=k$. 
Without loss of generality assume that the first $k$ rows of $\X_{\Sg'}$ are linearly independent (this can be achieved by permuting stabilizer generators $s_i'$). 
Notice that
\begin{align*}
\supp \big( \ket{\Sg'} \big) &=
\{\ket{J\cdot \X_{\Sg'}^{k|n} } :{J\in\{0,1\}^k}\} ,
\end{align*}
where $\X_{\Sg'}^{k|n}$ is part of $\X_{\Sg'}$ containing first $k$ rows. 
In accordance to (\ref{eq:st}) and (\ref{eq:st2}), we have 
\begin{align}
\{\ket{J\cdot [\Id_k |\Gamma_{k|n-k} ] } :{J\in\{0,1\}^k}\}=
\\
\nonumber
=
\{\ket{J\cdot \X_{\Sg'}^{k|n} } :{J\in\{0,1\}^k}\} 
\end{align}
and hence there exists an invertible $k\times k$ operator $\mathcal{O}$ such that
\begin{equation}
\label{eq:st3}
[\Id_k |\Gamma_{k|n-k} ]=\mathcal{O} \cdot \X_{\Sg'}^{k|n} .
\end{equation}
Consider set of stabilizers $s_i''$ defined as 
\begin{equation}
s_i'':=
\begin{cases}
  \prod_{j=1}^k o_{ij}s_j'  & \text{ for }i\leq k  \\
  s_i'  & \text{ for }i> k
\end{cases}
\end{equation}
where $o_{ij}$ are entries of matrix $\mathcal{O}^{-1}$. 
Notice that, as the matrix $\mathcal{O}$ is invertible, the set $\Sg''=\langle s_1'',\ldots,s_n''\rangle$  generates stabilizer $\Sg'$. 
Furthermore, in accordance to (\ref{eq:st3}) the $\X$ part of the check matrix corresponding to the set $\Sg''$ is of the form
\begin{equation}
\label{eq:st4}
\X_{\Sg''}=
\begin{bNiceArray}{c|c}
\Block{1-1}{\id_{k}}&\Gamma_{k|n-k}\\
\cline{1-2}
\Block{1-2}{A}\\
\end{bNiceArray},
\end{equation}
for some $n-k\times n$ matrix $A$. 
As $\Sg''$ and $\Sg'$ generate the same stabilizer, the rank $\rk \X_{\Sg''}=k$ and rows in matrix $A$ are linear combinations of rows in $[{\id_{k}}|\Gamma_{k|n-k}]$. 
Consider, the following set of stabilizers $s_i'''$ defined as 
\begin{equation}
s_i''':=
\begin{cases}
  s_i'' & \text{ for }i\leq k  \\
  s_i''\cdot \prod_{j=1}^k a_{ij}s_j''   & \text{ for }i> k
\end{cases}
\end{equation}
where $a_{ij}$ are entries of matrix $A$. 
Notice that the set $\Sg'''=\langle s_1''',\ldots,s_n'''\rangle$ generates stabilizer $\Sg''$ (and hence $\Sg'$). Furthermore, the $\X$ part of the check matrix corresponding to the set $\Sg'''$ is of the form
\begin{equation}
\label{eq:st4P}
\X_{\Sg'''}=
\begin{bNiceArray}{c|c}
\Block{1-1}{\id_{k}}&\Gamma_{k|n-k}\\
\cline{1-2}
\Block{1-2}{0}\\
\end{bNiceArray},
\end{equation}
and hence it is the same as $\X$-part of the check matrix corresponding to stabilizer generators $\Sg$, see (\ref{eq:sghat}). 

Following exactly the same steps as presented in the proof of \cref{lemma:Auxi_0_pm}, one can find another stabilizer generators of stabilizer $\Sg'''$, with the check matrix being of the form (\ref{eg:st6}), which finishes the proof.
\end{proof}

\begin{lemma}
\label{lemma:Auxi_4_pm}
Consider stabilizer generators $\hat{\Sg}=\langle \hat{s}_1,\ldots,\hat{s}_n\rangle$ with the corresponding check matrix $[\X_\Sg | \Z_\Sg ]$ in the strong normal form:
\begin{equation}
\label{auxii}
\X_\Sg=
\begin{bNiceArray}{c|c}
\Block{1-1}{\id_k}&\Gamma_{k|n-k}^{}\\
\cline{1-2}
\Block{1-1}{0}&\Block{1-1}{0}\\
\end{bNiceArray},
\quad
\Z_\Sg=
\begin{bNiceArray}{c|c}
\Block{1-1}{\Gamma_{k|k}}&\Block{1-1}{0}\\
\cline{1-2}
\Block{1-1}{\Gamma_{k|n-k}^{\text{T}}}&\Block{1-1}{\id_{n-k}}\\
\end{bNiceArray},
\end{equation}
for some symmetric and off-diagonal matrix $\Gamma_{k|k}^{}$ and arbitrary $k\times n-k$ matrix $\Gamma_{k|n-k}^{}$, and with trivial corresponding phase vectors $\phi_{\hat{s}_i}=1$. 
Then by applying LU-operator:
\[
\hat{V}= \id^{\otimes k} \otimes H^{\otimes n-k},
\]
the stabilizer state $\ket{\hat{\Sg}}$ is transformed into a graph state $\ket{\hat{G}}:=\hat{V}\ket{\hat{\Sg}}$ corresponding to an adjacency matrix 
\begin{equation}
\label{auxiii}
\Gamma_{\hat{G}}=
\begin{bNiceArray}{c|c}
\Block{1-1}{\Gamma_{k|k}}&\Gamma_{k|n-k}^{}\\
\cline{1-2}
\Block{1-1}{\Gamma_{k|n-k}^{\text{T}}}&\Block{1-1}{0}\\
\end{bNiceArray}.
\end{equation}
In particular, graph $\hat{G}$ does not have any edges between the last $n-k$ vertices. 
\end{lemma}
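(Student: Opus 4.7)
The plan is a direct check-matrix computation: read off the generators $\hat{s}_i$ from the strong normal form (\ref{auxii}), conjugate them by $\hat{V}=\id^{\otimes k}\otimes H^{\otimes n-k}$, and identify the result with the canonical stabilizer generators of the graph state associated with (\ref{auxiii}), which by \cref{eq:stabilizer_of_graph_state} have check matrix $[\id_n\,|\,\Gamma_{\hat{G}}]$ and trivial phase vector.

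First I unpack the generators explicitly. For $i\le k$,
\[
\hat{s}_i = X^{(i)}\prod_{\substack{j\in[k]\\ (\Gamma_{k|k})_{ij}=1}} Z^{(j)} \prod_{\substack{j\in[k+1,n]\\ (\Gamma_{k|n-k})_{i,j-k}=1}} X^{(j)},
\]
while for $i>k$,
\[
\hat{s}_i = Z^{(i)}\prod_{\substack{j\in[k]\\ (\Gamma_{k|n-k})_{j,i-k}=1}} Z^{(j)}.
\]
The key observation is that no position of any generator carries a $Y$: on positions $j\le k$ this uses the off-diagonality of $\Gamma_{k|k}$ together with the $\id_k$ block in $\X_\Sg$, and on positions $j>k$ it uses the zero blocks in the upper-right of $\Z_\Sg$ and the lower-right of $\X_\Sg$.

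Next I apply the conjugation by $\hat{V}$. Since $HXH^\dagger=Z$ and $HZH^\dagger=X$ with no phase, and since no generator carries a $Y$ on any of the last $n-k$ positions, conjugation by $\hat{V}$ acts as $X\leftrightarrow Z$ on positions $j>k$ and as the identity on positions $j\le k$, and the trivial phase vector is preserved. Carrying this out block by block turns (\ref{auxii}) into
\[
\X_{\Sg'} = \id_n,\qquad \Z_{\Sg'} = \begin{bNiceArray}{c|c}\Gamma_{k|k} & \Gamma_{k|n-k} \\\cline{1-2} \Gamma_{k|n-k}^{\text{T}} & 0\end{bNiceArray},
\]
with all phase entries still $+1$.

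The right-hand side matches exactly the check matrix and phase vector of the standard generators of the graph state with adjacency matrix (\ref{auxiii}), so $\hat{V}\ket{\hat{\Sg}} = \ket{\hat{G}}$. The off-diagonality of $\Gamma_{k|k}$ is precisely what guarantees that $\Gamma_{\hat{G}}$ has zero diagonal, so it is a legitimate adjacency matrix of a simple graph, and the lower-right zero block records the absence of edges among the last $n-k$ vertices as claimed. The one point requiring genuine attention, though routine, is the no-$Y$ verification on positions $>k$: that is what keeps Hadamard conjugation from injecting minus signs which would spoil the trivial phase vector.
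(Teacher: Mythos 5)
Your proof is correct and follows essentially the same route as the paper: observe that no generator carries a $Y$ on the last $n-k$ positions, so conjugation by $H^{\otimes n-k}$ simply swaps the corresponding columns of $\X_\Sg$ and $\Z_\Sg$ without introducing signs, yielding the check matrix $[\id_n\,|\,\Gamma_{\hat{G}}]$ with trivial phase vector, which is the standard generator set of $\ket{\hat{G}}$. Your version is just slightly more explicit in writing out the generators and in noting where the off-diagonality of $\Gamma_{k|k}$ is used.
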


\begin{proof}
Notice that generators $\hat{s}_i$ on the last $n-k$ places have either $Z$ or $X$ Pauli matrices only. 
As Hadamard matrix exchange those matrices, i.e. $H= \mathbf{C} (X,Z)=\mathbf{C} (Z,X)$, the operator $\hat{V}$ exchanges the last $n-k$ columns (\ref{auxii}) of $\X_\Sg$ with those in $\Z_\Sg$.  Notice that the phase vector remains unchanged. 
As a result, operator $\hat{V}$ transforms $\hat{\Sg}$ into a stabilizer state with the check-matrix: $[\id^n | \Gamma_{\hat{G}} ]$, where $\Gamma_{\hat{G}}$ is presented on (\ref{auxiii}). Notice that this is, indeed a graph state, with corresponding adjacency matrix presented on (\ref{auxiii}).
\end{proof}

\section{Proof of \cref{corollary:all_are_that_form} and \cref{Th:reduction}}
\label{appendix:B}

In this section, we prove \cref{corollary:all_are_that_form} and \cref{Th:reduction}. 

We begin with technical lemma. Recall that in \cref{sec:StepII}, we introduced weighted hypergraph states. In particular, we discussed that the action of $\prod_{i=k+1}^n X_i^{\alpha_{i}}$ on disconnected vertices for graph states, see \cref{eq:X_mluti_action}, and $\prod_{i=1}^k
Z_i^{\alpha_i}$ on weighted hypergraph states, see \cref{eq:Z_action}. We shall use those two facts to prove the following technical result.

\begin{lemma}
\label{prop:solution_1}
Consider two graphs $\hat{G}$ and $\hat{G'}$ correspond to the adjacency matrices (\ref{auxiii_main}), and assume that the corresponding graph states $|\hat{G}\rangle$ and $|\hat{G'}\rangle$ are LU equivalent by the operator 
\begin{equation}
\label{eqww}
\hat{U}'= \prod_{i=1}^k
X_i^{s_i'}Z_i^{\alpha_i'}\otimes
\prod_{i=k=1}^n
Z_i^{s_{i}'}X_i^{\alpha_{i}'},
\end{equation}  
for some values $\alpha_i' \in [0,2]$ and $s_i'\in\{0,1\}$. 
Then states $|\hat{G}\rangle$ and $|\hat{G'}\rangle$ are LU equivalent by the operator 
\begin{equation}
\label{eq:ZX_2}
\hat{U}=\prod_{i=1}^k
Z_i^{\alpha_i}\otimes
\prod_{i=k+1}^n
X_i^{\alpha_{i}},
\end{equation}
for some values $\alpha_i \in [0,2]$. 
\end{lemma}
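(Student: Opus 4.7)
My plan is to absorb the discrete Pauli prefactors $X_i^{s_i'}$ (for $i\le k$) and $Z_i^{s_i'}$ (for $i>k$) appearing in $\hat{U}'$ into a stabilizer element $Q\in\St_{\ket{\hat{G}}}$, so that the product $\hat{U}'Q$, after collecting a global scalar, takes the required form~(\ref{eq:ZX_2}). Since $Q\ket{\hat{G}}=\ket{\hat{G}}$, defining $\hat{U}:=\hat{U}'Q$ (up to that scalar) will automatically satisfy $\hat{U}\ket{\hat{G}}=\ket{\hat{G'}}$.

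\textbf{Step 1: parity constraint from shared generators.} The adjacency matrices $\Gamma_{\hat{G}}$ and $\Gamma_{\hat{G'}}$ share the same $\Gamma_{k|n-k}$ block and both have a vanishing lower-right block. Hence for every $i>k$ the standard stabilizer generator
\[
s_i \;=\; X^{(i)}\prod_{j\in N_i} Z^{(j)},\qquad N_i\subseteq[k],
\]
belongs simultaneously to $\St_{\ket{\hat{G}}}$ and to $\St_{\ket{\hat{G'}}}$. Using that $U_j'=X_j^{s_j'}Z_j^{\alpha_j'}$ conjugates $Z$ to $(-1)^{s_j'}Z$ for $j\le k$, and $U_i'=Z_i^{s_i'}X_i^{\alpha_i'}$ conjugates $X$ to $(-1)^{s_i'}X$ for $i>k$, a short calculation will yield
\[
\hat{U}'\, s_i\, \hat{U}'^{\dagger} \;=\; (-1)^{s_i' + \sum_{j\in N_i} s_j'}\, s_i \qquad (i>k).
\]
Since the left-hand side must stabilize $\ket{\hat{G'}}$ with eigenvalue $+1$ (as $s_i$ already does), the sign is forced to be $+1$, producing the parity constraint
\[
(\star)\qquad s_i' \;\equiv\; \sum_{j\in N_i} s_j' \pmod 2 \qquad\text{for every } i>k.
\]

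\textbf{Step 2: construction of $Q$ and local simplification.} I will set $T:=\{j\in[k]:s_j'=1\}$ and $Q:=\prod_{j\in T}s_j\in\St_{\ket{\hat{G}}}$. Viewed as a Pauli tensor product (up to an overall phase), its local factors read $X_i^{\,s_i'}\,Z_i^{\,|T\cap N_i|\bmod 2}$ for $i\le k$ and $Z_i^{\,|T\cap N_i|\bmod 2}$ for $i>k$; by $(\star)$ the latter simplifies to $Z_i^{s_i'}$. A per-qubit case analysis using the elementary identities $X Z^\alpha X = e^{i\alpha\pi} Z^{-\alpha}$ and $Z X^\alpha Z = e^{i\alpha\pi} X^{-\alpha}$ will then collapse each product $U_i' Q_i$ into a single operator of the form $Z_i^{\alpha_i}$ (for $i\le k$) or $X_i^{\alpha_i}$ (for $i>k$), up to a scalar. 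The tensor product of these local operators is the desired $\hat{U}$ of the form~(\ref{eq:ZX_2}); the residual global phase is absorbed into any one of the $\alpha_i$'s, and $Q\ket{\hat{G}}=\ket{\hat{G}}$ then gives $\hat{U}\ket{\hat{G}}=\ket{\hat{G'}}$.

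\textbf{Main obstacle.} The one delicate point is the per-qubit case analysis when $Q_i$ reduces to $Y$, which happens precisely when $i\in T$ and $|T\cap N_i|$ is odd. In that case one must carefully verify that the continuous-parameter phases generated by commuting $X$ past $Z^{\alpha_i'}$ collapse into a single scalar, and that no additional discrete Pauli factor survives outside the restricted one-parameter family~(\ref{eq:ZX_2}). Everything else reduces to routine stabilizer-formalism bookkeeping.
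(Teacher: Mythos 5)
Your proof is correct, and it takes a genuinely different route from the paper. The paper works in the weighted-hypergraph picture: it applies the update rules for $X^\alpha$ and $Z^\alpha$ (\cref{eq:Z_action,eq:X_action,eq:X_mluti_action}) to track hyperedge weights, shows that the residual $X^{s_i'}$ factors on the first $k$ qubits amount to adding $Z$'s on a symmetric difference $\Delta$ of neighborhoods, and forces the single-vertex weights on the last $n-k$ qubits to vanish because both $\hat{H}_w$ and $\hat{G}'$ carry no such weights there. You instead stay entirely within the stabilizer formalism: you multiply $\hat{U}'$ on the right by the stabilizer element $Q=\prod_{j\in T}s_j\in\St_{\ket{\hat{G}}}$, and you obtain the needed parity constraint $(\star)$ from the generators $s_i$ ($i>k$) that the two graphs share because they have identical $\Gamma_{k|n-k}$ blocks and vanishing lower-right blocks. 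Both arguments hinge on the same parity identity ($s_i'\equiv|T\cap N_i|\bmod 2$ for $i>k$; in the paper this is the statement $s_i'+s_i=0$), but yours derives it from eigenvalue consistency of shared stabilizers while the paper derives it from the absence of single-vertex hyperedges. Your route is more elementary and self-contained, avoiding the hypergraph calculus altogether; the paper's route has the advantage of setting up exactly the machinery reused in \cref{prop:solution_2}. The "main obstacle" you flag (the $Y$ case) is in fact handled uniformly by writing the local factor of $Q$ as $X^{s_i'}Z^{t_i}$ times a scalar and using $(X^{s}Z^{\alpha})(X^{s}Z^{t})=e^{i\alpha\pi s}Z^{(-1)^s\alpha+t}$. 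One small imprecision: a residual global phase cannot literally be absorbed into an $\alpha_i$, since $Z^\beta=\mathrm{diag}(1,e^{i\beta\pi})$ always has first entry $1$; but this is a harmless global phase on the state, which the paper's own proof likewise discards.
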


\begin{proof}
Notice that in graph $\hat{G}$ there are no connections between vertices $k+1,\ldots,n$. 
Therefore, using \cref{eq:X_mluti_action}, we can deduce that 
\begin{align}
\ket{\hat{H_w}}:=\id^k\otimes \prod_{i=k+1}^n
X_i^{\alpha_{i}'} \ket{\hat{G}}=
\ket{\prod_{i=k+1}^n C_{\delta_i,\alpha_i}\,\hat{G}}
\label{eq:kkk}
\end{align}
where $\delta_i$ is a neighborhood of $i$, and $\hat{H_w}$ is a weighted hypergraph obtained from hypergraph $\hat{H}$ by adding the weights $(-2)^{|e|-1}\alpha$ to all hyperedges $e\subset \delta_i$ for all $i=k+1,\ldots,n$.

Furthermore, form \cref{eq:Z_action}, we have that 
\begin{align}
\ket{\hat{H_w}'}:= \prod_{i=1}^k Z_i^{\alpha_1'}\otimes \prod_{i=k+1}^n
Z_i^{s_i'} \ket{\hat{H_w}}
\label{eq:kkk2}
\end{align}
where $\hat{H_w}'$ is a weighted hypergraph obtained from weighted hypergraph $\hat{H_w}$ by adding the weights $\alpha_i$ and $s_i$ to respective vertices.

Notice, that for $s\in\{0,1\}$, we have $(X^s)^{-1}=X^s$. Therefore, from (\ref{eqww}) the state $\ket{\hat{H_w}'}$ can be obtained from a graph state $\ket{\hat{G}'}$ as follows:
\begin{align}
\nonumber
\ket{\hat{H_w}'}&= \prod_{i=1}^k X^{s_i'} \otimes \id^{n-k}
\ket{\hat{G}'}
\end{align}
In particular, using \cref{eq:X_action} recursively to all $i\in [k]: s_i'=1$ we transform hypergraph states as follows:
\begin{align}
\nonumber
\ket{\hat{H_w}'}&= \prod_{i=1}^k X^{s_i'} \otimes \id^{n-k}
\ket{\hat{G}'} =
\prod_{i=1, s_i'=1}^k \prod_{j\in \delta_i} C_{\{j\},1} \ket{\hat{G}'} =
\\&
\prod_{i\in \Delta} Z_i\ket{\hat{G}'}
\label{eq:kkk3}
\end{align}
where $\Delta:=\triangle_{i=1, s_i'=1} \delta_i$ is a symmetric difference of the sets in $\{
\delta_i: s_i=1\}$, and hence $\hat{H_w}'$ is in fact a hypergraph state obtained from graph $\hat{G}'$ by adding $+1$ weights to all the single-vertex hyperedges in $\Delta$. Indeed, in (\ref{eq:kkk3}), we can sequentially apply operators $X^{s_i'}$ for $i=1,\ldots,k$, which in each step results in a hypergraph state, as all added weights are equal to $+1$.

As the graph $\hat{G}'$ is obtained from hypergraph $\hat{H_w}'$ by adding $+1$ weights to all the single-vertex hyperedges in $\Delta$, we have
\begin{align}
\ket{\hat{G}'}= 
\prod_{i\in \Delta} Z_i\ket{\hat{H_w}'}.
\label{eq:kkk4}
\end{align}
Combining (\ref{eq:kkk4}) with (\ref{eq:kkk2}), we have that 
\begin{align}
\ket{\hat{G}'}:= \prod_{i=1}^k Z_i^{\alpha_1'+s_i}\otimes \prod_{i=k+1}^n
Z_i^{s_i'+s_i} \ket{\hat{H_w}}
\label{eq:kkk5}
\end{align}
where we define $s_i=1$ iff $i\in \Delta$ and $s_i=0$ otherwise. 
Notice that as vertices $k+1,\ldots,n$ are pairwise not connected, operators $C_{\delta_i,\alpha_i}$ in (\ref{eq:kkk}) does not produces any single-vertex hyperedges in $\hat{H_w}$ on vertices $k+1,\ldots,n$.
Therefore $s_i'+s_i=0$ for all $i=k+1\ldots,n$. 
By combining (\ref{eq:kkk5}) with (\ref{eq:kkk5}), we have
\begin{align}
\ket{\hat{G}'}:= \prod_{i=1}^k Z_i^{\alpha_1}\otimes \prod_{i=k+1}^n
X_i^{\alpha_i'} \ket{\hat{H_w}}
\label{eq:kkk6}
\end{align}
where $\alpha_i:=\alpha_i+s_i$ which proves the statement. 
\end{proof}

Consequently, we establish the following result.

\begin{proposition}
\label{prop:reduction_from_given_MLS}
Consider graph states $\ket{G}$ and $\ket{G'}$ and functions $F_G,F_{G'}:[n]\rightarrow \{X,Y,Z\}$. If $\ket{G}$ and $\ket{G'}$ are LU-equivalent by the operator operator $U=U_1\otimes\cdots\otimes U_n$ such that 
\begin{align}
\label{eq:aux1}
F_{G'} (i) = U _i \,F_G(i) \,U_i^\dagger 
\end{align}
for arbitrary $i\in [n]$. Then, we can construct graphs $\ket{\hat{G}}$ and $\ket{\hat{G'}}$, such that 
\begin{equation}
\ket{\hat{G}} \stackrel{LC}{\cong} \ket{G},
\quad
\ket{\hat{G'}} \stackrel{LC}{\cong} \ket{G'},
\label{eq:Ghat_construction}
\end{equation}
and, up to permutation of qubits, $\hat{G}$ and $\hat{G'}$ correspond to the following adjacency matrices
\begin{equation}
\label{auxiii_main}
\Gamma_{\hat{G}}=
\begin{bNiceArray}{c|c}
\Block{1-1}{\Gamma_{k|k}}&\Gamma_{k|n-k}^{}\\
\cline{1-2}
\Block{1-1}{\Gamma_{k|n-k}^{\text{T}}}&\Block{1-1}{0}\\
\end{bNiceArray},\quad
\Gamma_{\hat{G'}}=
\begin{bNiceArray}{c|c}
\Block{1-1}{\Gamma_{k|k}'}&\Gamma_{k|n-k}^{}\\
\cline{1-2}
\Block{1-1}{\Gamma_{k|n-k}^{\text{T}}}&\Block{1-1}{0}\\
\end{bNiceArray}.
\end{equation} 
for some symmetric and off-diagonal matrices $\Gamma_{k|k}^{}$ and $\Gamma_{k|k}'$ and arbitrary $k\times n-k$ matrix $\Gamma_{k|n-k}^{}$,  
and states $\ket{\hat{G}}$ and $\ket{\hat{G'}}$ are LU equivalent by the operator 
\begin{equation}
\label{eq:ZX_new}
\hat{U}=
\prod_{i=1}^k 
Z^{\alpha_i}_i
\prod_{i=k+1}^n 
X^{\alpha_i}_i
\end{equation}
for some values $\alpha_i \in [0,2]$. 
Furthermore, the construction of graphs $\hat{G}$ and $\hat{G'}$ 
is algorithmic and presented in the proof of the statement, and runs in $\mathcal{O}(n^3)$ time in number of vertices in $G$ and $G'$.
\end{proposition}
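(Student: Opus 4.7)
The plan is to apply a cascade of local Clifford reductions to both $\ket{G}$ and $\ket{G'}$ that progressively simplify the intertwiner $U$, until we reach a pair of graph states $\ket{\hat{G}},\ket{\hat{G'}}$ with adjacency matrices of the prescribed block form and an intertwiner whose final form (\ref{eq:ZX_new}) is delivered by a closing invocation of \cref{prop:solution_1}. First I would normalize the Pauli strings by applying the transition matrices $V_i := \mathbf{C}(F_G(i), Z)$ to the $i$-th qubit of $\ket{G}$ and $V_i' := \mathbf{C}(F_{G'}(i), Z)$ to the $i$-th qubit of $\ket{G'}$, obtaining LC-equivalent stabilizer states $\ket{\widetilde{G}}, \ket{\widetilde{G'}}$. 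By \cref{lemma:XXX} the new intertwiner $\widetilde{U}_i = V_i' U_i V_i^\dagger$ will satisfy $\widetilde{U}_i Z \widetilde{U}_i^\dagger = Z$, so \cref{lemma:form_preserving_pauli} will force $\widetilde{U}_i = X^{s_i} Z^{\alpha_i}$ up to a global phase. Absorbing these $X$-factors into $\ket{\widetilde{G'}}$ by applying the local Clifford $X^{s_i}$ on qubit $i$ yields an LC-equivalent state $\ket{\widetilde{G'}}'$ whose intertwiner with $\ket{\widetilde{G}}$ is the diagonal phase operator $\widetilde{U}' = \prod_i Z^{\alpha_i}$.

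Next I would use \cref{lemma:Auxi_0_pm} followed by \cref{lemma:Auxi_2_pm} to supply an LC operator $W$ that brings $\ket{\widetilde{G}}$ to strong normal form with trivial phase vector, after (if necessary) a permutation of qubits applied identically to both states. The crucial observation is that each factor of $W$---namely $Z^\phi$, $X^\phi$, or $\mathbf{C}(Y,X)$---conjugates $Z^\alpha$ into some $Z^{\alpha'}$ up to a global phase, so conjugation of $\widetilde{U}'$ by $W$ preserves its diagonal character: $W \widetilde{U}' W^\dagger = \prod_i Z^{\beta_i}$ is still a phase operator. Because a phase operator then intertwines $W\ket{\widetilde{G}}$ (now in strong normal form) with $W\ket{\widetilde{G'}}'$, \cref{lemma:Auxi_3.5_pm} will furnish generators of $W\ket{\widetilde{G'}}'$ sharing the same $X$-part---and in particular the same $\Gamma_{k|n-k}$---as those of $W\ket{\widetilde{G}}$, in normal form with trivial phases. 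A second application of \cref{lemma:Auxi_2_pm}, via an LC operator $W'$ applied only to $W\ket{\widetilde{G'}}'$, will remove the diagonal entries of its block $\Gamma_{k|k}'$ without altering $\Gamma_{k|n-k}$.

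Finally, applying $\hat{V} = \Id^k \otimes H^{\otimes (n-k)}$ to both sides, \cref{lemma:Auxi_4_pm} turns the two stabilizer states into genuine graph states $\ket{\hat{G}},\ket{\hat{G'}}$ with adjacency matrices of the prescribed block form (\ref{auxiii_main}). I would then trace the intertwiner through the asymmetric $W'$ and the Hadamard layer: on the first $k$ qubits it takes the shape $Z^{\phi_i'}\mathbf{C}(Y,X)^{c_i'}Z^{\beta_i}$, which \cref{ob:Auxi_2.5_pm} collapses to some $X^{s_i}Z^{\alpha_i}$; on the last $n-k$ qubits the Hadamard conjugation sends $X^{\phi_i'}Z^{\beta_i}$ into $Z^{\phi_i'}X^{\beta_i}$. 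The resulting operator is precisely of the shape hypothesized by \cref{prop:solution_1}, whose conclusion yields the target intertwiner $\hat{U} = \prod_{i \le k}Z^{\alpha_i}\prod_{i > k}X^{\alpha_i}$. Each reduction amounts to Gaussian elimination or local Clifford bookkeeping on the $n$-qubit check matrix, so the full construction runs in $\mathcal{O}(n^3)$.

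The hardest part will be the careful tracking of the intertwiner through the two applications of \cref{lemma:Auxi_2_pm}: one must confirm that conjugation by $W$ preserves the phase-operator character (so that \cref{lemma:Auxi_3.5_pm} can fire), and then check that after the asymmetric $W'$ and the final Hadamard layer the residual on the first $k$ qubits genuinely collapses via \cref{ob:Auxi_2.5_pm} into the precise $X^{s_i}Z^{\alpha_i}$ form needed to invoke \cref{prop:solution_1}. This final collapse is the algebraic heart of the argument and is the reason \cref{prop:solution_1} is unavoidable rather than reachable by direct manipulation.
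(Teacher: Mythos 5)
Your proposal is correct and follows essentially the same route as the paper's own proof: conjugating by the transition matrices $\mathbf{C}(F_G(i),Z)$ and $\mathbf{C}(F_{G'}(i),Z)$ so that \cref{lemma:form_preserving_pauli} forces the intertwiner into the form $X^{s_i}Z^{\alpha_i}$, passing through the normal-form machinery of \cref{lemma:Auxi_0_pm,lemma:Auxi_3.5_pm,lemma:Auxi_2_pm}, collapsing the conjugated intertwiner via \cref{ob:Auxi_2.5_pm}, applying the Hadamard layer of \cref{lemma:Auxi_4_pm}, and closing with \cref{prop:solution_1}. The only deviations are cosmetic reorderings --- you absorb the Pauli-$X$ string into the second state rather than introducing the auxiliary state $\ket{\St''}$, and you invoke \cref{lemma:Auxi_3.5_pm} after rather than before the strong-normal-form reduction --- which do not change the substance of the argument.
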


\begin{proof}
\label{remark:Ghat}
Consider stabilizer state $\ket{S}:=V\ket{G}$ obtained from a graph state $\ket{G}$ by applying the LC-operation $V=V_1\otimes\cdots\otimes V_n$ where
\begin{equation}
\label{eq:a1}
V_i=\mathbf{C} (F_G(i),Z),
\end{equation}
and another stabilizer state $\ket{S'}:=V'\ket{G'}$ obtained from a graph state $\ket{G'}$ by applying the LC-operation $V'=V_1'\otimes\cdots\otimes V_n'$ where
\begin{equation}
\label{eq:a2}
V_i'=\mathbf{C} (F_{G'}(i),Z).
\end{equation}

Furthermore, consider an arbitrary LU operator $U=U_1\otimes\cdots\otimes U_n$ that transforms $\ket{G}$ into $\ket{G'}$ and satisfies (\ref{eq:aux1}). Notice that 
\[
\widetilde{U}:=V'UV^\dagger
\]
stabilizer state $\ket{S}$ into stabilizer state $\ket{S'}$. Furthermore, we have
\begin{align*}
\widetilde{U_i}\;Z\;\widetilde{U_i}^\dagger &=
(V'_iU_iV^\dagger_i) \; Z\; (V_iU_i^\dagger {V_i'}^\dagger) 
\\
=&
\pm V'_i(U_i(V^\dagger_i \; Z\; V_i)U_i^\dagger ) {V_i'}^\dagger
\\
=&
\pm V'_i(U_i\; F_G(i)\; U_i^\dagger ) {V_i'}^\dagger
\\
=&
\pm V'_i\; F_{G'}(i)\; {V_i'}^\dagger
\\
=&
\pm Z  
\end{align*}
as $V_i, V_i'$ are transition matrices, see \cref{eq:a1,eq:a2}. As a consequence, $\widetilde{U_i}$ preserves Pauli-$Z$ matrix up to the phase. As we have shown in \cref{lemma:form_preserving_pauli} such matrix have a specific form, namely we have
\begin{equation}
\label{eq:ui}
\widetilde{U}_i =X^{s_i} Z^{\alpha_i},
\end{equation}
for some $s_i\in\{0,1\}$, continuous parameter $\alpha_i\in [0,2]$ and some phase $|\omega_i|=1$.

Notice that operator $\widetilde{U}$ is a product of two LU operators: $\widetilde{U}=\widetilde{U}^1\cdot\widetilde{U}^2$, where 
\begin{equation}
\label{eq:ui12}
\widetilde{U}_i^1 =\omega \prod_{i=1}^n Z^{\alpha_i}_i,
\quad
\widetilde{U}_i^2 =\prod_{i=1}^n X^{s_i}_i, 
\end{equation}
where $\omega:=\prod_{i=1}^n\omega_i$ is a global phase. 
We define the following stabilizer state:
\begin{equation}
\label{eq:stabBis}
\ket{\St''}:=\widetilde{U}_i^2 \ket{\St'}
\end{equation}
see \cref{fig3} for schematic diagram. 
Notice, that $\ket{\St}:=\widetilde{U}_i^1 \ket{\St''}$, hence two stabilizer states $\ket{\St}$ and $\ket{\St''}$ are connected by a local phase operator $\widetilde{U}_i^1$ \footnote{Notice, that in that case, we constructed a pair of stabilizer states that are connected by a local phase operator (\ref{eq:ui12}). This was already observed in Ref.~\cite[Theorem 1]{10.5555/2011763.2011766}, where the authors showed that any counterexample to the ``LU=LC conjecture'' is LC equivalent to the pair of stabilizer states connected by a local diagonal matrices.}. 
In accordance to \cref{lemma:Auxi_0_pm}, stabilizer $\St$ related to a stabilizer state $\ket{\St}$ is generated by some generators $\Sg=\langle s_1,\ldots,s_n\rangle$  with the corresponding check-matrix $[\X_{\Sg}| \Z_{\Sg}]$ in a normal form, i.e.: 
\begin{equation}
\label{eq:sghat_main}
\X_{\Sg}=
\begin{bNiceArray}{c|c}
\Block{1-1}{\id_k}&\Gamma_{k|n-k}^{}\\
\cline{1-2}
\Block{1-1}{0}&\Block{1-1}{0}\\
\end{bNiceArray},
\quad
\Z_{\Sg}=
\begin{bNiceArray}{c|c}
\Block{1-1}{\Gamma_{k|k}}&\Block{1-1}{0}\\
\cline{1-2}
\Block{1-1}{\Gamma_{k|n-k}^{\text{T}}}&\Block{1-1}{\id_{n-k}}\\
\end{bNiceArray},
\end{equation}
for some symmetric matrix $\Gamma_{k|k}^{}$ and arbitrary $k\times n-k$ matrix $\Gamma_{k|n-k}^{}$. 
Furthermore, as states $\ket{\St''}$ and $\ket{\St}$ are related by the phase operator (\ref{eq:ui12}), by \cref{lemma:Auxi_3.5_pm}, $\Sg''$ is generated by $\Sg''=\langle s_1'',\ldots,s_n''\rangle$ with the corresponding check-matrix $[\X_{\Sg}| \Z_{\Sg}]$ in the following form:
\begin{equation}
\label{eq:sghat_main2}
\X_{\Sg''}=
\begin{bNiceArray}{c|c}
\Block{1-1}{\id_k}&\Gamma_{k|n-k}^{}\\
\cline{1-2}
\Block{1-1}{0}&\Block{1-1}{0}\\
\end{bNiceArray},
\quad
\Z_{\Sg''}=
\begin{bNiceArray}{c|c}
\Block{1-1}{\Gamma_{k|k}'}&\Block{1-1}{0}\\
\cline{1-2}
\Block{1-1}{\Gamma_{k|n-k}^{\text{T}}}&\Block{1-1}{\id_{n-k}}\\
\end{bNiceArray},
\end{equation}
for some symmetric matrix $\Gamma_{k|k}'$. 

Notice that states $\ket{\St''}$ and $\ket{\St'}$ are related by operator (\ref{eq:stabBis}), hence the following set of stabilizers $\Sg'=\langle s_1',\ldots,s_n'\rangle$, $s_i':= \widetilde{U}_i^2 s_i''( \widetilde{U}_i^2)^\dagger$ generates stabilizer state $\ket{\St'}$. Notice that $\widetilde{U}_i^2$, see \cref{eq:ui12}, is a Pauli string consisting of $\id$ and $X$ operators only, hence $s_i'=\pm s_i''$ for all $i\in [n]$. Therefore, the generators $\Sg'=\langle s_1',\ldots,s_n'\rangle$ have the same check matrix as generators $\Sg''$, i.e. 
\begin{equation}
\label{eq:333}
[\X_{\Sg'}| \Z_{\Sg'}]=[\X_{\Sg''}| \Z_{\Sg''}]'
\end{equation}

Notice that stabilizers generators $\Sg=\langle s_1,\ldots,s_n\rangle$ are Pauli strings with real phases, i.e. $\phi (s_i)=\pm 1$. 
Indeed, stabilizer of any graph state is generated from standard generators $ g_1,\ldots,g_n$ with trivial phases, and any product of such generators can have the $\pm 1$ phase only. Therefore, the stabilizer of a graph state $\ket{G}$ contains Pauli strings with phases $\pm 1$ only. 
Similarly, stabilizer $\St$ of the stabilizer state $\ket{\St}=V \ket{G}$ is generated by $ V^\dagger g_1 V ,\ldots,V^\dagger g_n V$. By \cref{lemma:XXX}, those are Pauli strings with phases $\pm 1$ only, hence any element of stabilizer $\St$ has a phase $\pm 1$. In particular, generators $\Sg=\langle s_1,\ldots,s_n\rangle$ are Pauli strings with phases $\pm 1$, and similarly, generators $\Sg'=\langle s_1',\ldots,s_n'\rangle$. 

So far, we constructed stabilizer states $\ket{\St}$, and $\ket{\St'}$ connected by operator (\ref{eq:ui12}), and generated by generators $\Sg=\langle s_1,\ldots,s_n\rangle$ and $\Sg'=\langle s_1',\ldots,s_n'\rangle$ respectively, with check matrices in a normal form: \cref{eq:sghat_main,eq:sghat_main2,eq:333}. Furthermore, we observed that Pauli strings $s_i,s_i'$ have always phases $\pm 1$ for $i\in [n]$. 
Therefore, by \cref{lemma:Auxi_2_pm}, there exists a local LC-operator $W=W_1\otimes\cdots\otimes W_n$ that transforms set $\Sg$ into $\hat{\Sg}=\langle \hat{s}_1,\ldots,\hat{s}_n\rangle$, $\hat{s}_i=W^\dagger s_i W$, for which $\hat{s}_i$ are Pauli strings with trivial phases, i.e. $\phi_{\hat{s}_i}=1$, and the associated check matrix is in a strong normal form. Furthermore, the check matrix for generators $\hat{\Sg}=\langle \hat{s}_1,\ldots,\hat{s}_n\rangle$ is identical to the check matrix of $\Sg=\langle s_1,\ldots,s_n\rangle$, see (\ref{eq:sghat_main}), with all non-zero diagonal entries in $\Gamma_{k|k}$ removed. Moreover, matrices $W_i$ are of the form 
\begin{equation}
\label{eq:aaa}
W_i= Z^{t_i} \mathbf{C} (Y,X)^{\ell_i},
\end{equation}
for some $t_i,\ell_i\in \{0,1\}$, and by the definition, we have $|\hat{\St}\rangle:=W\ket{\St}$. \cref{lemma:Auxi_2_pm} contains all the details how to compute the form of $W$ operator. 
Similarly, we can find an local LC-operator $W'$ that transforms $\ket{\St'}$ into $|\hat{\St'}\rangle:= W' |\St'\rangle$, where $\hat{\St'}$ is generated by $\hat{\Sg'}=\langle \hat{s}_1',\ldots,\hat{s}_n'\rangle$ is a generator set with all $\hat{s}_i'$ being Pauli strings with trivial phases, i.e. $\phi_{\hat{s}_i'}=1$, and the associated check matrix is in a strong normal form. 
Similarly to (\ref{eq:aaa}), matrix $W'=W_1'\otimes\cdots\otimes W_n'$ is of the form 
\begin{equation}
\label{eq:aaa2}
W_i'= Z^{t_i'} \mathbf{C} (Y,X)^{\ell_i'},
\end{equation}
for some $t_i',\ell_i'\in \{0,1\}$. 

As a consequence, two stabilizer states $|\hat{S}\rangle$ and  $|\hat{S'}\rangle$ have stabilizers generated by generators $\hat{\Sg}=\langle \hat{s}_1,\ldots,\hat{s}_n\rangle$ and $\hat{\Sg'}=\langle \hat{s}_1',\ldots,\hat{s}_n'\rangle$ in a strong normal form. Moreover operator
\[
\breve{U}:=W'\circ \widetilde{U} \circ W^{-1}
\]
transforms stabilizer state $|\hat{S}\rangle$ into stabilizer state $|\hat{S'}\rangle$. 
Furthermore, operator $\breve{U}$ is an LU operator, and combining \cref{eq:ui,eq:aaa,eq:aaa2}, we have $\breve{U}=\omega \breve{U}_1 \otimes\cdots\otimes \breve{U}_n$, where
\begin{equation}
\label{eq:ui2}
\breve{U}_i =
\big(Z^{t_i} {\mathbf{C} (Y,X)^{\ell_i}} \big)^\dagger
\,
X^{s_i} Z^{\alpha_i}
\,
(Z^{t_i'} \mathbf{C} (Y,X)^{\ell_i'})
,
\end{equation}
for some parameters $t_i,t_i',\ell_i,\ell_i', s_i\in\{0,1\}$
and continuous parameter $\alpha_i\in [0,2]$ and some phase $|\omega_i|=1$. 
It is a technical observation, see \cref{ob:Auxi_2.5_pm}, that regardless of the values $t_i,t_i',\ell_i,\ell_i'$, (\ref{eq:ui2}) is equal to 
$
\breve{U}_i =\pm X^{s_i'} Z^{\alpha_i'}
$, 
for some $s_i'\in\{0,1\}$
and continuous parameter $\alpha_i'\in [0,2]$. 
Notice that the sight $\pm 1$ might be incorporated in a global phase of $\breve{U}$, hence we have that $\breve{U}=\omega' \breve{U}_1 \otimes\cdots\otimes \breve{U}_n$, where
\begin{equation}
\label{eq:ui3}
\breve{U}_i = X^{s_i'} Z^{\alpha_i'},
\end{equation}
for some $s_i'\in\{0,1\}$, and continuous parameter $\alpha_i'\in [0,2]$, and phase $|\omega'|=1$. 

Notice that (\ref{eq:ui3}) provides an LU transformation between two stabilizer states $|\hat{\St}\rangle$ and $|\hat{\St'}\rangle$ with the corresponding stabilizer generators $\hat{\Sg}$ and $\hat{\Sg'}$ has trivial phase vectors and check matrices in a strong normal form. 
All the transformations are schematically depicted on \cref{fig3}. 
Generators $\hat{\Sg'}$ and $\hat{\Sg}$ correspond to the check-matrices (\ref{eq:sghat_main}) and (\ref{eq:sghat_main2}) with removed off-diagonal entries in matrices $\Gamma_{k|k}^{},\Gamma_{k|k}'$. 
In accordance to \cref{lemma:Auxi_4_pm}, by applying LC-operator
\[
\hat{V}= \id^{\otimes k} \otimes H^{\otimes n-k},
\]
to both stabilizer states $|\hat{\St}\rangle$ and $|\hat{\St'}\rangle$, we obtain graph states $|\hat{G}\rangle:= \hat{V}|\hat{\St}\rangle$ and $\\\hat{G'}\rangle:= \hat{V}|\hat{\St'}\rangle$ corresponding to graphs with the adjacency matrices presented in (\ref{auxiii_main}). 
Simple computation shows that for arbitrary $s=\{0,1\}$ and $\alpha\in [0,2]$, we have $H^\dagger X^s Z^\alpha H=Z^s X^\alpha $, 
notice that coefficients $s,\alpha$ are the same on the left-hand, and right-hand side. 
This shows that graph states $|\hat{G}\rangle$ and $|\hat{G'}\rangle$ are LU equivalent by the operator
$\hat{U}'=\omega' \hat{U}_1' \otimes\cdots\otimes \hat{U}_n'$, where 
\begin{equation}
\label{eq:ui4}
\hat{U}_i' = 
\begin{cases}
  X^{s_i'} Z^{\alpha_i'}  & \text{ for }i\leq k  \\
  Z^{s_i'} X^{\alpha_i'}  & \text{ for }i> k
\end{cases}
\end{equation}
for the same coefficients $s_i'\in\{0,1\}$, continuous parameters $\alpha_i'\in [0,2]$, and phase $|\omega'|=1$ as in (\ref{eq:ui3}). 

Using\cref{prop:solution_1}, we conclude that graph states $|\hat{G}\rangle$ and $|\hat{G'}\rangle$ are LU equivalent by the operator
$\hat{U}=\omega\, \hat{U}_1 \otimes\cdots\otimes \hat{U}_n$, where 
\begin{equation}
\hat{U}_i = 
\begin{cases}
  Z^{\alpha_i}  & \text{ for }i\leq k  \\
   X^{\alpha_i}  & \text{ for }i> k
\end{cases}
\end{equation}
for the same parameters $\alpha_i\in [0,2]$, and phase $|\omega|=1$. 
Notice that $\omega$ is a global phase of operator $\hat{U}$ and does not affect quantum states, and hance can be omitted. 
This concludes the statement of \cref{prop:reduction_from_given_MLS}. 
\end{proof}

\begin{figure}
\centering
\includegraphics[width=0.45\textwidth]{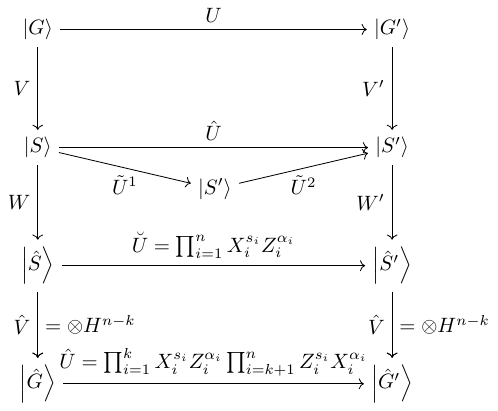}
\caption{Schematic description of all steps described in the proof of \cref{prop:reduction_from_given_MLS}. Suppose that graph states $\ket{G},\ket{G'}$ are LU equivalent, i.e. $U\ket{G}=\ket{G'}$ by operator $U$ that satisfies (\ref{eq:aux1}). Firstly, we construct stabilizer states $\ket{\St},\ket{\St'}$ that are LU equivalent by the matrix $\widetilde{U}$ of the form (\ref{eq:ui}). Secondly, we construct stabilizer state $\ket{\St''}$, which is LU equivalent to $ket{\St'}$ by a phase matrix $\widetilde{U}^2$ a pair of stabilizer states $\ket{\St''}$ and $\ket{\St'}$ that are LU equivalent by a phase matrix $\widetilde{U}^2$. Thirdly, we observe that stabilizers related to $\ket{\St''}$ and $\ket{\St'}$, and by extension stabilizer related to $\ket{\St}$, can be generated by generators $\Sg,\Sg'$, and $\Sg'' $ respectively, all with the same $\X$-part of check matrix. Fourthly, we transform stabilizer states $\ket{\St},\ket{\St'}$ to $|\hat{\St}\rangle,|\hat{\St'}\rangle$ with related stabilizers generated by generators $\hat{\Sg}$ and $\hat{\Sg'}$ both in a strong normal form. We observe that $|\hat{S}\rangle,|\hat{S'}\rangle$ are related by the matrix $\breve{U} $ which has the same structure as $\widetilde{U}$. Finally, we transform states $|\hat{S}\rangle,|\hat{S'}\rangle$ to graph states $|\hat{G}\rangle,|\hat{G'}\rangle$ which are equivalent by operator $\hat{U}'$ of the form (\ref{eq:ui4}). Lastly, using \cref{prop:solution_1}, we observe that graph states $|\hat{G}\rangle,|\hat{G'}\rangle$ are LU equivalent by operator $\hat{U}$ of the form (\ref{eq:ZX_new}), which finishes the proof of \cref{prop:reduction_from_given_MLS}.
}
\label{fig3}
\end{figure}

As we shall see, \cref{corollary:all_are_that_form} follows easily form \cref{prop_pairs_of_func} and \cref{prop:reduction_from_given_MLS}

\begin{proof}[Proof of \cref{corollary:all_are_that_form}]
Consider two graphs $G,G'$ such that the corresponding graph states are LU-equivalent by some operator $U=U_1\otimes\cdots\otimes U_n$. Consider arbitrary MLS cover $M$ for $G$. Notice that $M$ is also MLS cover for $G'$, by applying \cref{prop_pairs_of_func}, we identify pairs of functions $F_G^{\ell},F_{G'}^{\ell}:[n]\rightarrow \{X,Y,Z\}$ indexed by parameter $\ell\in \mathcal{L}$, such 
    \begin{align}
\label{eq:fG_fG'_functions_only_if_bis}
F_{G'}^\ell (i) = U _i \,F_{G}^\ell (i) \,U_i^\dagger  
\end{align}
for at least one $\ell\in \mathcal{L}$. 
We apply \cref{prop:reduction_from_given_MLS} for $G,G'$ and $F_{G}^\ell, F_{G'}^\ell$, which proves the statement.
\end{proof}

Lastly, we show that \cref{Th:reduction} follows from \cref{prop:reduction_from_given_MLS}. 

\begin{proof}[Proof of \cref{Th:reduction}] 
Consider two graph states $\ket{G}$ and $\ket{G'}$, which are LU-equivalent by operator $U=U_1\otimes\cdots\otimes U_n$ that satisfies 
\[
F_{G'} (i) = U _i \,F(i) \,U_i^\dagger
\]
for given functions $F_G,F_{G'}:[n]\rightarrow \{X,Y,Z\}$, and for all $i\in [n]$. 
\cref{prop:reduction_from_given_MLS} presents a step by step construction of graph $\hat{G}$ and $\hat{G'}$ that satisfies the statement of \cref{Th:reduction}. We presented those steps in a more accessible way in Algorithm 2. Notice that all steps in Algorithm 2 are simple row and column operations on adjecency matrices, together with Gaussian elimination. Therefore, the computational cost of performing Algorithm 2 is $\mathcal{O}(n^3)$.  
\end{proof}

\section{Transformation of (\ref{eq:11}) into a linear system over $\mathbb{F}_2$}
\label{Appendix_X}
In this section, we will discuss how to reformulate the linear system of equations (\ref{eq:11}) as a linear system over $\mathbb{F}_2$. Consider system (\ref{eq:11}) with solutions in $\alpha_i\in V_r$. We multiply equations in (\ref{eq:11}) by factor $2^{r-1}$ and introduce new variables $\hat{\alpha}_i := 2^{r - 1}\alpha_i$. In that way, we obtain a system of equations with $n-k$ integer variables $ \hat{\alpha}_i \in [2^{\Delta_G}]$ in modular arithmetic (mod $2^{r}$):
\begin{equation}
\label{eq:33}
    A\cdot\Vec{\hat{\alpha}} =\Vec{\hat{b}} \quad (\text{mod}\, 2^{r} ),
\end{equation}
where $A$ is the same integer matrix as in (\ref{eq:11}), and $\Vec{\hat{b}}: =2^{r-1}  \Vec{b}$ is an integer vector. By \cref{lemma:reduction_LC_r}, there are at most $(n-k)\sum_{t=2}^r\binom{\Delta_G}{t}$ non-trivial equations and $n - k$ integer variables in modular arithmetic modulo $2^{r}$. This system can then be transformed into a linear system over $\mathbb{F}_2$ with $r(n-k)\sum_{t=2}^r\binom{\Delta_G}{t}$ equations with $r(n-k)$ variables\footnote{This transformation is achieved by expressing all variables and coefficients in (\ref{eq:33}) in binary form. For instance, a variable $\alpha_i$ can be uniquely represented as $\alpha_i = \alpha{i0} 2^0 + \cdots + \alpha_{i, r-1} 2^{r-1}$, where $\alpha_{ij}$ are binary coefficients. Subsequently, each equation modulo $2^{r}$ can be rewritten as $r$ equations in $\mathbb{F}_2$ arithmetic}. 
This results in the following.

\begin{corollary}
\label{corollary:reduction_Further3}
Fix parameter $r\in[n]$ and consider the linear system of equations (\ref{eq:11}) with solutions restricted to $\alpha_i\in V_{r}$. Such system can be reformulated as a linear system over $\mathbb{F}_2$, with $r(n-k)$ binary variables and $r(n-k)\sum_{t=2}^r\binom{\Delta_G}{t}$ equations. 

Moreover, the number of variables can be upper bounded by $n^2$, while the number of equations by $n^{r+2}$.
\end{corollary}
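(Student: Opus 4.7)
The plan follows the two-step transformation hinted at in the discussion preceding the corollary: first rescale the continuous variables to integers modulo a power of two, then perform a bit-by-bit expansion over $\mathbb{F}_2$.

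For the first step, since the restricted solution space $V_r$ consists of multiples of $1/2^{r-1}$, I would set $\hat{\alpha}_i := 2^{r-1}\alpha_i \in \{0,1,\ldots,2^r-1\}$ and multiply system (\ref{eq:11}) through by $2^{r-1}$, obtaining
\[
A\cdot\vec{\hat{\alpha}} \equiv \vec{\hat{b}} \pmod{2^r},
\]
with $\vec{\hat{b}} := 2^{r-1}\vec{b}$. By \cref{lemma:reduction_LC_r}, this integer system has $n-k$ variables and at most $(n-k)\sum_{t=2}^r \binom{\Delta_G}{t}$ non-trivial equations. For the second step, I would write each integer variable in binary, $\hat{\alpha}_i = \sum_{j=0}^{r-1}\alpha_{i,j}2^j$ with $\alpha_{i,j}\in\{0,1\}$, producing $r(n-k)$ binary variables in total. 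A crucial simplification is that every nonzero coefficient $A_{e,i} = 2^{|e|-1}$ is itself a power of two, so each term $A_{e,i}\hat{\alpha}_i$ is just a shifted copy of the bits of $\hat{\alpha}_i$ without any intra-variable carries. Expanding each mod-$2^r$ equation bitwise and propagating inter-variable carries from the least significant bit upward then yields $r$ linear relations over $\mathbb{F}_2$ per original equation, giving the claimed $r(n-k)\sum_{t=2}^r \binom{\Delta_G}{t}$ $\mathbb{F}_2$-equations.

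The asserted upper bounds follow by elementary arithmetic. First, $r(n-k) \leq n\cdot n = n^2$ because both $r$ and $n-k$ are at most $n$. Second, starting from the estimate $r(n-k)\binom{\Delta_G}{r}\leq n^{r+1}$ already established in \cref{lemma:reduction_LC_r} and multiplying by the extra factor $r\leq n$ from the binary expansion yields at most $n^{r+2}$ $\mathbb{F}_2$-equations. The main obstacle is the carry bookkeeping in the bit-by-bit unfolding, since generically the higher-order bits of an integer sum are \emph{nonlinear} Boolean functions of the bits of the summands; however, because the coefficients here are powers of two and because the carry at bit position $k$ is fully determined by the already-fixed bits at positions $<k$, the system can be processed layer-by-layer while remaining $\mathbb{F}_2$-linear at every stage, which is exactly what the corollary needs.
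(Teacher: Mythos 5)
Your proposal follows the paper's proof essentially verbatim: the paper likewise substitutes $\hat{\alpha}_i := 2^{r-1}\alpha_i$, multiplies \cref{eq:11} through by $2^{r-1}$ to obtain the integer system $A\cdot\vec{\hat{\alpha}}=\vec{\hat{b}} \pmod{2^r}$ whose equation count is inherited from \cref{lemma:reduction_LC_r}, and then expands each variable in binary to get $r(n-k)$ bits and $r$ equations over $\mathbb{F}_2$ per congruence, with the bounds $r(n-k)\le n^2$ and $n^{r+2}$ obtained by the same elementary estimates. The only place you go beyond the paper is the discussion of carries, which the paper's footnote silently elides; note, however, that your ``layer-by-layer'' argument does not quite establish that one obtains a \emph{single static} $\mathbb{F}_2$-linear system, since the carry into bit position $j$ is an AND/majority-type (hence nonlinear) function of the lower-order bits of \emph{distinct} variables, so the bitwise rewriting strictly requires either branching over lower-bit solutions or nonlinear auxiliary constraints --- a gap you at least flag, and which is present but unacknowledged in the paper's own one-line justification.
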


We conclude this section with an additional consequence of \cref{proposition:reduction_Further2}, which, while not directly related to verifying LU-equivalence between graph states, reveals that certain graph states are stabilized by a continuous family of LU operators.

\begin{corollary}
\label{corollary:automorphism}
Consider a graph $\hat{G}$ with adjacency matrices of the form (\ref{auxiii_main_main}). If two vertices $i, j \in [k,n]$ have the same neighborhood, i.e., $\delta_i = \delta_j$, then the operators $X_i^\alpha X_j^{-\alpha}$ stabilize the corresponding graph state, i.e., $X_i^\alpha X_j^{-\alpha} \ket{\hat{G}} = \ket{\hat{G}}$ for any parameter $\alpha$.
\end{corollary}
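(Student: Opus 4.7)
The plan is to directly apply the multi-qubit action formula (\ref{eq:X_mluti_action}) to the operator $X_i^\alpha X_j^{-\alpha}$. First, I would observe that since $i, j \in [k, n]$ and the bottom-right block of the adjacency matrix $\Gamma_{\hat{G}}$ in (\ref{auxiii_main_main}) is zero, the vertices $i$ and $j$ are disconnected in $\hat{G}$. This, together with the discussion preceding (\ref{eq:X_mluti_action}), ensures that $X_i^\alpha$ and $X_j^{-\alpha}$ commute and that the formula applies, giving
\[
X_i^\alpha X_j^{-\alpha} \ket{\hat{G}} = \ket{C_{\delta_i, \alpha}\, C_{\delta_j, -\alpha}\, \hat{G}}.
\]

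The second step is to inspect the action of $C_{\delta_i, \alpha}\, C_{\delta_j, -\alpha}$ on $\hat{G}$ hyperedge by hyperedge. By definition of the $C$ operator used in (\ref{eq:X_action}), $C_{\delta_i, \alpha}$ adds the weight $(-2)^{|e|-1} \alpha$ to every hyperedge $e \subset \delta_i$, and $C_{\delta_j, -\alpha}$ adds the weight $-(-2)^{|e|-1} \alpha$ to every $e \subset \delta_j$. The hypothesis $\delta_i = \delta_j$ makes these two families of modified hyperedges coincide, and the two contributions cancel term by term. Consequently the weighted hypergraph $C_{\delta_i, \alpha}\, C_{\delta_j, -\alpha}\, \hat{G}$ is identical to $\hat{G}$, and the desired identity $X_i^\alpha X_j^{-\alpha} \ket{\hat{G}} = \ket{\hat{G}}$ follows.

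There is no real obstacle to this proof; the argument is essentially a one-line application of (\ref{eq:X_mluti_action}) followed by term-by-term cancellation. The only subtlety to flag is that (\ref{eq:X_action}) in isolation applies to an operator acting on a graph state (where the hyperedges touching the vertex are assumed to have size two), whereas here we are composing two such operators. This is precisely the setting handled by (\ref{eq:X_mluti_action}), which is valid because every pair of vertices in $[k,n]$ is disconnected in $\hat{G}$, so one may equivalently view $X_i^\alpha X_j^{-\alpha}$ as the $i,j$ restriction of the joint operator $\prod_{\ell = k+1}^n X_\ell^{\alpha_\ell}$ with $\alpha_i = \alpha$, $\alpha_j = -\alpha$, and all remaining $\alpha_\ell = 0$.
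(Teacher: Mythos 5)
Your proof is correct. The paper gives no explicit argument for this corollary, merely framing it as a consequence of \cref{proposition:reduction_Further2} (where the hypothesis $\delta_i=\delta_j$ makes the $i$-th and $j$-th columns of $A$ in (\ref{eq:11}) identical, so $\alpha_i=\alpha$, $\alpha_j=-\alpha$ lies in the kernel); your direct application of (\ref{eq:X_mluti_action}) with the term-by-term cancellation of the weights $(-2)^{|e|-1}\alpha$ and $-(-2)^{|e|-1}\alpha$ is exactly that same computation carried out at the level of the hypergraph update rules, and your remark that the joint formula (rather than (\ref{eq:X_action}) applied twice in isolation) is what licenses the composition is the right subtlety to flag.
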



\bibliography{Physics.bib}

\begin{thebibliography}{43}%
\makeatletter
\providecommand \@ifxundefined [1]{%
 \@ifx{#1\undefined}
}%
\providecommand \@ifnum [1]{%
 \ifnum #1\expandafter \@firstoftwo
 \else \expandafter \@secondoftwo
 \fi
}%
\providecommand \@ifx [1]{%
 \ifx #1\expandafter \@firstoftwo
 \else \expandafter \@secondoftwo
 \fi
}%
\providecommand \natexlab [1]{#1}%
\providecommand \enquote  [1]{``#1''}%
\providecommand \bibnamefont  [1]{#1}%
\providecommand \bibfnamefont [1]{#1}%
\providecommand \citenamefont [1]{#1}%
\providecommand \href@noop [0]{\@secondoftwo}%
\providecommand \href [0]{\begingroup \@sanitize@url \@href}%
\providecommand \@href[1]{\@@startlink{#1}\@@href}%
\providecommand \@@href[1]{\endgroup#1\@@endlink}%
\providecommand \@sanitize@url [0]{\catcode `\\12\catcode `\$12\catcode
  `\&12\catcode `\#12\catcode `\^12\catcode `\_12\catcode `\%12\relax}%
\providecommand \@@startlink[1]{}%
\providecommand \@@endlink[0]{}%
\providecommand \url  [0]{\begingroup\@sanitize@url \@url }%
\providecommand \@url [1]{\endgroup\@href {#1}{\urlprefix }}%
\providecommand \urlprefix  [0]{URL }%
\providecommand \Eprint [0]{\href }%
\providecommand \doibase [0]{http://dx.doi.org/}%
\providecommand \selectlanguage [0]{\@gobble}%
\providecommand \bibinfo  [0]{\@secondoftwo}%
\providecommand \bibfield  [0]{\@secondoftwo}%
\providecommand \translation [1]{[#1]}%
\providecommand \BibitemOpen [0]{}%
\providecommand \bibitemStop [0]{}%
\providecommand \bibitemNoStop [0]{.\EOS\space}%
\providecommand \EOS [0]{\spacefactor3000\relax}%
\providecommand \BibitemShut  [1]{\csname bibitem#1\endcsname}%
\let\auto@bib@innerbib\@empty
\bibitem [{\citenamefont {Nielsen}\ and\ \citenamefont
  {Chuang}(2010)}]{nielsen_chuang_2010}%
  \BibitemOpen
  \bibfield  {author} {\bibinfo {author} {\bibfnamefont {M.~A.}\ \bibnamefont
  {Nielsen}}\ and\ \bibinfo {author} {\bibfnamefont {I.~L.}\ \bibnamefont
  {Chuang}},\ }\href {\doibase 10.1017/CBO9780511976667} {\emph {\bibinfo
  {title} {Quantum Computation and Quantum Information}}}\ (\bibinfo
  {publisher} {Cambridge University Press},\ \bibinfo {year}
  {2010})\BibitemShut {NoStop}%
\bibitem [{\citenamefont {Gottesman}(1997)}]{gottesman1997stabilizer}%
  \BibitemOpen
  \bibfield  {author} {\bibinfo {author} {\bibfnamefont {D.}~\bibnamefont
  {Gottesman}},\ }\href@noop {} {\enquote {\bibinfo {title} {Stabilizer codes
  and quantum error correction},}\ } (\bibinfo {year} {1997}),\ \Eprint
  {http://arxiv.org/abs/9705052} {arXiv:9705052} \BibitemShut {NoStop}%
\bibitem [{\citenamefont {D\"ur}\ \emph {et~al.}(2003)\citenamefont {D\"ur},
  \citenamefont {Aschauer},\ and\ \citenamefont
  {Briegel}}]{PhysRevLett.91.107903}%
  \BibitemOpen
  \bibfield  {author} {\bibinfo {author} {\bibfnamefont {W.}~\bibnamefont
  {D\"ur}}, \bibinfo {author} {\bibfnamefont {H.}~\bibnamefont {Aschauer}}, \
  and\ \bibinfo {author} {\bibfnamefont {H.-J.}\ \bibnamefont {Briegel}},\
  }\href {\doibase 10.1103/PhysRevLett.91.107903} {\bibfield  {journal}
  {\bibinfo  {journal} {Phys. Rev. Lett.}\ }\textbf {\bibinfo {volume} {91}},\
  \bibinfo {pages} {107903} (\bibinfo {year} {2003})}\BibitemShut {NoStop}%
\bibitem [{\citenamefont {Raussendorf}\ \emph {et~al.}(2003)\citenamefont
  {Raussendorf}, \citenamefont {Browne},\ and\ \citenamefont
  {Briegel}}]{PhysRevA.68.022312}%
  \BibitemOpen
  \bibfield  {author} {\bibinfo {author} {\bibfnamefont {R.}~\bibnamefont
  {Raussendorf}}, \bibinfo {author} {\bibfnamefont {D.~E.}\ \bibnamefont
  {Browne}}, \ and\ \bibinfo {author} {\bibfnamefont {H.~J.}\ \bibnamefont
  {Briegel}},\ }\href {\doibase 10.1103/PhysRevA.68.022312} {\bibfield
  {journal} {\bibinfo  {journal} {Phys. Rev. A}\ }\textbf {\bibinfo {volume}
  {68}},\ \bibinfo {pages} {022312} (\bibinfo {year} {2003})}\BibitemShut
  {NoStop}%
\bibitem [{\citenamefont {Dehaene}\ \emph {et~al.}(2003)\citenamefont
  {Dehaene}, \citenamefont {Van~den Nest}, \citenamefont {De~Moor},\ and\
  \citenamefont {Verstraete}}]{PhysRevA.67.022310}%
  \BibitemOpen
  \bibfield  {author} {\bibinfo {author} {\bibfnamefont {J.}~\bibnamefont
  {Dehaene}}, \bibinfo {author} {\bibfnamefont {M.}~\bibnamefont {Van~den
  Nest}}, \bibinfo {author} {\bibfnamefont {B.}~\bibnamefont {De~Moor}}, \ and\
  \bibinfo {author} {\bibfnamefont {F.}~\bibnamefont {Verstraete}},\ }\href
  {\doibase 10.1103/PhysRevA.67.022310} {\bibfield  {journal} {\bibinfo
  {journal} {Phys. Rev. A}\ }\textbf {\bibinfo {volume} {67}},\ \bibinfo
  {pages} {022310} (\bibinfo {year} {2003})}\BibitemShut {NoStop}%
\bibitem [{\citenamefont {Hein}\ \emph {et~al.}(2004)\citenamefont {Hein},
  \citenamefont {Eisert},\ and\ \citenamefont {Briegel}}]{PhysRevA.69.062311}%
  \BibitemOpen
  \bibfield  {author} {\bibinfo {author} {\bibfnamefont {M.}~\bibnamefont
  {Hein}}, \bibinfo {author} {\bibfnamefont {J.}~\bibnamefont {Eisert}}, \ and\
  \bibinfo {author} {\bibfnamefont {H.~J.}\ \bibnamefont {Briegel}},\ }\href
  {\doibase 10.1103/PhysRevA.69.062311} {\bibfield  {journal} {\bibinfo
  {journal} {Phys. Rev. A}\ }\textbf {\bibinfo {volume} {69}},\ \bibinfo
  {pages} {062311} (\bibinfo {year} {2004})}\BibitemShut {NoStop}%
\bibitem [{\citenamefont {de~Jong}\ \emph {et~al.}(2022)\citenamefont
  {de~Jong}, \citenamefont {Hahn}, \citenamefont {Tcholtchev}, \citenamefont
  {Hauswirth},\ and\ \citenamefont {Pappa}}]{graphstatesIBMQ}%
  \BibitemOpen
  \bibfield  {author} {\bibinfo {author} {\bibfnamefont {J.}~\bibnamefont
  {de~Jong}}, \bibinfo {author} {\bibfnamefont {F.}~\bibnamefont {Hahn}},
  \bibinfo {author} {\bibfnamefont {N.}~\bibnamefont {Tcholtchev}}, \bibinfo
  {author} {\bibfnamefont {M.}~\bibnamefont {Hauswirth}}, \ and\ \bibinfo
  {author} {\bibfnamefont {A.}~\bibnamefont {Pappa}},\ }\href@noop {} {\enquote
  {\bibinfo {title} {Extracting maximal entanglement from linear cluster
  states},}\ } (\bibinfo {year} {2022}),\ \Eprint
  {http://arxiv.org/abs/2211.16758} {arXiv:2211.16758} \BibitemShut {NoStop}%
\bibitem [{\citenamefont {Yang}\ \emph {et~al.}(2022)\citenamefont {Yang},
  \citenamefont {Raymond}, \citenamefont {Imai}, \citenamefont {Chang},\ and\
  \citenamefont {Hiraishi}}]{9866745}%
  \BibitemOpen
  \bibfield  {author} {\bibinfo {author} {\bibfnamefont {B.}~\bibnamefont
  {Yang}}, \bibinfo {author} {\bibfnamefont {R.}~\bibnamefont {Raymond}},
  \bibinfo {author} {\bibfnamefont {H.}~\bibnamefont {Imai}}, \bibinfo {author}
  {\bibfnamefont {H.}~\bibnamefont {Chang}}, \ and\ \bibinfo {author}
  {\bibfnamefont {H.}~\bibnamefont {Hiraishi}},\ }\href {\doibase
  10.1109/JETCAS.2022.3201730} {\bibfield  {journal} {\bibinfo  {journal} {IEEE
  Journal on Emerging and Selected Topics in Circuits and Systems}\ }\textbf
  {\bibinfo {volume} {12}},\ \bibinfo {pages} {638} (\bibinfo {year}
  {2022})}\BibitemShut {NoStop}%
\bibitem [{\citenamefont {Briegel}\ \emph {et~al.}(2009)\citenamefont
  {Briegel}, \citenamefont {Browne}, \citenamefont {D{\"u}r}, \citenamefont
  {Raussendorf},\ and\ \citenamefont {Van~den Nest}}]{briegel2009measurement}%
  \BibitemOpen
  \bibfield  {author} {\bibinfo {author} {\bibfnamefont {H.~J.}\ \bibnamefont
  {Briegel}}, \bibinfo {author} {\bibfnamefont {D.~E.}\ \bibnamefont {Browne}},
  \bibinfo {author} {\bibfnamefont {W.}~\bibnamefont {D{\"u}r}}, \bibinfo
  {author} {\bibfnamefont {R.}~\bibnamefont {Raussendorf}}, \ and\ \bibinfo
  {author} {\bibfnamefont {M.}~\bibnamefont {Van~den Nest}},\ }\href@noop {}
  {\bibfield  {journal} {\bibinfo  {journal} {Nature Physics}\ }\textbf
  {\bibinfo {volume} {5}},\ \bibinfo {pages} {19} (\bibinfo {year}
  {2009})}\BibitemShut {NoStop}%
\bibitem [{\citenamefont {Van~den Nest}\ \emph
  {et~al.}(2004{\natexlab{a}})\citenamefont {Van~den Nest}, \citenamefont
  {Dehaene},\ and\ \citenamefont
  {De~Moor}}]{vandennestGraphicalDescriptionAction2004}%
  \BibitemOpen
  \bibfield  {author} {\bibinfo {author} {\bibfnamefont {M.}~\bibnamefont
  {Van~den Nest}}, \bibinfo {author} {\bibfnamefont {J.}~\bibnamefont
  {Dehaene}}, \ and\ \bibinfo {author} {\bibfnamefont {B.}~\bibnamefont
  {De~Moor}},\ }\href {\doibase 10.1103/PhysRevA.69.022316} {\bibfield
  {journal} {\bibinfo  {journal} {Phys. Rev. A}\ }\textbf {\bibinfo {volume}
  {69}},\ \bibinfo {pages} {022316} (\bibinfo {year}
  {2004}{\natexlab{a}})}\BibitemShut {NoStop}%
\bibitem [{\citenamefont {Bahramgiri}\ and\ \citenamefont
  {Beigi}(2016)}]{Clifford}%
  \BibitemOpen
  \bibfield  {author} {\bibinfo {author} {\bibfnamefont {M.}~\bibnamefont
  {Bahramgiri}}\ and\ \bibinfo {author} {\bibfnamefont {S.}~\bibnamefont
  {Beigi}},\ }\href@noop {} {\enquote {\bibinfo {title} {Graph states under the
  action of local clifford group in non-binary case},}\ } (\bibinfo {year}
  {2016}),\ \Eprint {http://arxiv.org/abs/10610267} {arXiv:10610267}
  \BibitemShut {NoStop}%
\bibitem [{\citenamefont {Ketkar}\ \emph {et~al.}(2006)\citenamefont {Ketkar},
  \citenamefont {Klappenecker}, \citenamefont {Kumar},\ and\ \citenamefont
  {Sarvepalli}}]{ketkarNonbinaryStabilizerCodes2006}%
  \BibitemOpen
  \bibfield  {author} {\bibinfo {author} {\bibfnamefont {A.}~\bibnamefont
  {Ketkar}}, \bibinfo {author} {\bibfnamefont {A.}~\bibnamefont
  {Klappenecker}}, \bibinfo {author} {\bibfnamefont {S.}~\bibnamefont {Kumar}},
  \ and\ \bibinfo {author} {\bibfnamefont {P.~K.}\ \bibnamefont {Sarvepalli}},\
  }\href {\doibase 10.1109/TIT.2006.883612} {\bibfield  {journal} {\bibinfo
  {journal} {IEEE Transactions on Information Theory}\ }\textbf {\bibinfo
  {volume} {52}},\ \bibinfo {pages} {4892} (\bibinfo {year}
  {2006})}\BibitemShut {NoStop}%
\bibitem [{\citenamefont {Burchardt}\ and\ \citenamefont
  {Raissi}(2020)}]{BurchardtRaissi20}%
  \BibitemOpen
  \bibfield  {author} {\bibinfo {author} {\bibfnamefont {A.}~\bibnamefont
  {Burchardt}}\ and\ \bibinfo {author} {\bibfnamefont {Z.}~\bibnamefont
  {Raissi}},\ }\href {\doibase 10.1103/PhysRevA.102.022413} {\bibfield
  {journal} {\bibinfo  {journal} {Phys. Rev. A}\ }\textbf {\bibinfo {volume}
  {102}},\ \bibinfo {pages} {022413} (\bibinfo {year} {2020})}\BibitemShut
  {NoStop}%
\bibitem [{\citenamefont {Raissi}\ \emph {et~al.}(2022)\citenamefont {Raissi},
  \citenamefont {Burchardt},\ and\ \citenamefont {Barnes}}]{RaissiBurchardt22}%
  \BibitemOpen
  \bibfield  {author} {\bibinfo {author} {\bibfnamefont {Z.}~\bibnamefont
  {Raissi}}, \bibinfo {author} {\bibfnamefont {A.}~\bibnamefont {Burchardt}}, \
  and\ \bibinfo {author} {\bibfnamefont {E.}~\bibnamefont {Barnes}},\ }\href
  {\doibase 10.1103/PhysRevA.106.062424} {\bibfield  {journal} {\bibinfo
  {journal} {Phys. Rev. A}\ }\textbf {\bibinfo {volume} {106}},\ \bibinfo
  {pages} {062424} (\bibinfo {year} {2022})}\BibitemShut {NoStop}%
\bibitem [{\citenamefont {Bouchet}(1993)}]{BOUCHET199375}%
  \BibitemOpen
  \bibfield  {author} {\bibinfo {author} {\bibfnamefont {A.}~\bibnamefont
  {Bouchet}},\ }\href {\doibase https://doi.org/10.1016/0012-365X(93)90357-Y}
  {\bibfield  {journal} {\bibinfo  {journal} {Discrete Mathematics}\ }\textbf
  {\bibinfo {volume} {114}},\ \bibinfo {pages} {75} (\bibinfo {year}
  {1993})}\BibitemShut {NoStop}%
\bibitem [{\citenamefont {Van~den Nest}\ \emph
  {et~al.}(2004{\natexlab{b}})\citenamefont {Van~den Nest}, \citenamefont
  {Dehaene},\ and\ \citenamefont {De~Moor}}]{AlgorithmLULC}%
  \BibitemOpen
  \bibfield  {author} {\bibinfo {author} {\bibfnamefont {M.}~\bibnamefont
  {Van~den Nest}}, \bibinfo {author} {\bibfnamefont {J.}~\bibnamefont
  {Dehaene}}, \ and\ \bibinfo {author} {\bibfnamefont {B.}~\bibnamefont
  {De~Moor}},\ }\href {\doibase 10.1103/PhysRevA.70.034302} {\bibfield
  {journal} {\bibinfo  {journal} {Phys. Rev. A}\ }\textbf {\bibinfo {volume}
  {70}},\ \bibinfo {pages} {034302} (\bibinfo {year}
  {2004}{\natexlab{b}})}\BibitemShut {NoStop}%
\bibitem [{\citenamefont {Bouchet}(1991)}]{Bouchet1991AnEA}%
  \BibitemOpen
  \bibfield  {author} {\bibinfo {author} {\bibfnamefont {A.}~\bibnamefont
  {Bouchet}},\ }\href@noop {} {\bibfield  {journal} {\bibinfo  {journal}
  {Combinatorica}\ }\textbf {\bibinfo {volume} {11}},\ \bibinfo {pages} {315}
  (\bibinfo {year} {1991})}\BibitemShut {NoStop}%
\bibitem [{\citenamefont {Van~den Nest}\ \emph
  {et~al.}(2005{\natexlab{a}})\citenamefont {Van~den Nest}, \citenamefont
  {Dehaene},\ and\ \citenamefont {De~Moor}}]{LULCsupport}%
  \BibitemOpen
  \bibfield  {author} {\bibinfo {author} {\bibfnamefont {M.}~\bibnamefont
  {Van~den Nest}}, \bibinfo {author} {\bibfnamefont {J.}~\bibnamefont
  {Dehaene}}, \ and\ \bibinfo {author} {\bibfnamefont {B.}~\bibnamefont
  {De~Moor}},\ }\href {\doibase 10.1103/PhysRevA.71.062323} {\bibfield
  {journal} {\bibinfo  {journal} {Phys. Rev. A}\ }\textbf {\bibinfo {volume}
  {71}},\ \bibinfo {pages} {062323} (\bibinfo {year}
  {2005}{\natexlab{a}})}\BibitemShut {NoStop}%
\bibitem [{\citenamefont {Zeng}\ \emph {et~al.}(2007)\citenamefont {Zeng},
  \citenamefont {Chung}, \citenamefont {Cross},\ and\ \citenamefont
  {Chuang}}]{PLUvsLC}%
  \BibitemOpen
  \bibfield  {author} {\bibinfo {author} {\bibfnamefont {B.}~\bibnamefont
  {Zeng}}, \bibinfo {author} {\bibfnamefont {H.}~\bibnamefont {Chung}},
  \bibinfo {author} {\bibfnamefont {A.~W.}\ \bibnamefont {Cross}}, \ and\
  \bibinfo {author} {\bibfnamefont {I.~L.}\ \bibnamefont {Chuang}},\ }\href
  {\doibase 10.1103/PhysRevA.75.032325} {\bibfield  {journal} {\bibinfo
  {journal} {Phys. Rev. A}\ }\textbf {\bibinfo {volume} {75}},\ \bibinfo
  {pages} {032325} (\bibinfo {year} {2007})}\BibitemShut {NoStop}%
\bibitem [{\citenamefont {Krueger}\ and\ \citenamefont
  {Werner}(2005)}]{krueger2005openproblemsquantuminformation}%
  \BibitemOpen
  \bibfield  {author} {\bibinfo {author} {\bibfnamefont {O.}~\bibnamefont
  {Krueger}}\ and\ \bibinfo {author} {\bibfnamefont {R.~F.}\ \bibnamefont
  {Werner}},\ }\href {\doibase 10.48550/arXiv.quant-ph/0504166} {\enquote
  {\bibinfo {title} {Some open problems in quantum information theory},}\ }
  (\bibinfo {year} {2005}),\ \Eprint {http://arxiv.org/abs/0504166}
  {arXiv:0504166} \BibitemShut {NoStop}%
\bibitem [{\citenamefont {Van~den Nest}\ \emph
  {et~al.}(2005{\natexlab{b}})\citenamefont {Van~den Nest}, \citenamefont
  {Dehaene},\ and\ \citenamefont {De~Moor}}]{PhysRevA.71.022310}%
  \BibitemOpen
  \bibfield  {author} {\bibinfo {author} {\bibfnamefont {M.}~\bibnamefont
  {Van~den Nest}}, \bibinfo {author} {\bibfnamefont {J.}~\bibnamefont
  {Dehaene}}, \ and\ \bibinfo {author} {\bibfnamefont {B.}~\bibnamefont
  {De~Moor}},\ }\href {\doibase 10.1103/PhysRevA.71.022310} {\bibfield
  {journal} {\bibinfo  {journal} {Phys. Rev. A}\ }\textbf {\bibinfo {volume}
  {71}},\ \bibinfo {pages} {022310} (\bibinfo {year}
  {2005}{\natexlab{b}})}\BibitemShut {NoStop}%
\bibitem [{\citenamefont {Cabello}\ \emph {et~al.}(2009)\citenamefont
  {Cabello}, \citenamefont {López-Tarrida}, \citenamefont {Moreno},\ and\
  \citenamefont {Portillo}}]{cabelloEntanglementEightqubitGraph2009}%
  \BibitemOpen
  \bibfield  {author} {\bibinfo {author} {\bibfnamefont {A.}~\bibnamefont
  {Cabello}}, \bibinfo {author} {\bibfnamefont {A.~J.}\ \bibnamefont
  {López-Tarrida}}, \bibinfo {author} {\bibfnamefont {P.}~\bibnamefont
  {Moreno}}, \ and\ \bibinfo {author} {\bibfnamefont {J.~R.}\ \bibnamefont
  {Portillo}},\ }\href {\doibase 10.1016/j.physleta.2009.04.055} {\bibfield
  {journal} {\bibinfo  {journal} {Physics Letters A}\ }\textbf {\bibinfo
  {volume} {373}},\ \bibinfo {pages} {2219–2225} (\bibinfo {year}
  {2009})}\BibitemShut {NoStop}%
\bibitem [{\citenamefont {Vandré}\ \emph {et~al.}(2024)\citenamefont
  {Vandré}, \citenamefont {de~Jong}, \citenamefont {Hahn}, \citenamefont
  {Burchardt}, \citenamefont {Gühne},\ and\ \citenamefont {Pappa}}]{Lina2024}%
  \BibitemOpen
  \bibfield  {author} {\bibinfo {author} {\bibfnamefont {L.}~\bibnamefont
  {Vandré}}, \bibinfo {author} {\bibfnamefont {J.}~\bibnamefont {de~Jong}},
  \bibinfo {author} {\bibfnamefont {F.}~\bibnamefont {Hahn}}, \bibinfo {author}
  {\bibfnamefont {A.}~\bibnamefont {Burchardt}}, \bibinfo {author}
  {\bibfnamefont {O.}~\bibnamefont {Gühne}}, \ and\ \bibinfo {author}
  {\bibfnamefont {A.}~\bibnamefont {Pappa}},\ }\href
  {https://arxiv.org/abs/2406.09956} {\enquote {\bibinfo {title}
  {Distinguishing graph states by the properties of their marginals},}\ }
  (\bibinfo {year} {2024}),\ \Eprint {http://arxiv.org/abs/2406.09956}
  {arXiv:2406.09956 [quant-ph]} \BibitemShut {NoStop}%
\bibitem [{\citenamefont {Ji}\ \emph {et~al.}(2010)\citenamefont {Ji},
  \citenamefont {Chen}, \citenamefont {Wei},\ and\ \citenamefont
  {Ying}}]{LULCfalse}%
  \BibitemOpen
  \bibfield  {author} {\bibinfo {author} {\bibfnamefont {Z.}~\bibnamefont
  {Ji}}, \bibinfo {author} {\bibfnamefont {J.}~\bibnamefont {Chen}}, \bibinfo
  {author} {\bibfnamefont {Z.}~\bibnamefont {Wei}}, \ and\ \bibinfo {author}
  {\bibfnamefont {M.}~\bibnamefont {Ying}},\ }\href
  {http://dl.acm.org/citation.cfm?id=2011438.2011446} {\bibfield  {journal}
  {\bibinfo  {journal} {Quantum Info. Comput.}\ }\textbf {\bibinfo {volume}
  {10}},\ \bibinfo {pages} {97} (\bibinfo {year} {2010})}\BibitemShut {NoStop}%
\bibitem [{\citenamefont {Tsimakuridze}\ and\ \citenamefont
  {Gühne}(2016)}]{LUnoLC}%
  \BibitemOpen
  \bibfield  {author} {\bibinfo {author} {\bibfnamefont {N.}~\bibnamefont
  {Tsimakuridze}}\ and\ \bibinfo {author} {\bibfnamefont {O.}~\bibnamefont
  {Gühne}},\ }\href {\doibase 10.1088/1751-8121/aa67cd} {\bibfield  {journal}
  {\bibinfo  {journal} {Journal of Physics A: Mathematical and Theoretical}\
  }\textbf {\bibinfo {volume} {50}} (\bibinfo {year} {2016}),\
  10.1088/1751-8121/aa67cd}\BibitemShut {NoStop}%
\bibitem [{\citenamefont {Claudet}\ and\ \citenamefont
  {Perdrix}(2024{\natexlab{a}})}]{claudet2024localequivalencestabilizerstates}%
  \BibitemOpen
  \bibfield  {author} {\bibinfo {author} {\bibfnamefont {N.}~\bibnamefont
  {Claudet}}\ and\ \bibinfo {author} {\bibfnamefont {S.}~\bibnamefont
  {Perdrix}},\ }\href@noop {} {\bibfield  {journal} {\bibinfo  {journal}
  {ar{X}iv:2409.20183}\ } (\bibinfo {year} {2024}{\natexlab{a}})},\ \Eprint
  {http://arxiv.org/abs/2409.20183} {arXiv:2409.20183} \BibitemShut {NoStop}%
\bibitem [{\citenamefont {H{\o}yer}\ \emph {et~al.}(2006)\citenamefont
  {H{\o}yer}, \citenamefont {Mhalla},\ and\ \citenamefont
  {Perdrix}}]{LocalSets1}%
  \BibitemOpen
  \bibfield  {author} {\bibinfo {author} {\bibfnamefont {P.}~\bibnamefont
  {H{\o}yer}}, \bibinfo {author} {\bibfnamefont {M.}~\bibnamefont {Mhalla}}, \
  and\ \bibinfo {author} {\bibfnamefont {S.}~\bibnamefont {Perdrix}},\ }\href
  {https://doi.org/10.1007/978-3-642-25591-5} {\bibfield  {journal} {\bibinfo
  {journal} {Algorithms and Computation}\ ,\ \bibinfo {pages} {638}} (\bibinfo
  {year} {2006})}\BibitemShut {NoStop}%
\bibitem [{\citenamefont {Burchardt}\ and\ \citenamefont
  {Hahn}(2023)}]{foliage}%
  \BibitemOpen
  \bibfield  {author} {\bibinfo {author} {\bibfnamefont {A.}~\bibnamefont
  {Burchardt}}\ and\ \bibinfo {author} {\bibfnamefont {F.}~\bibnamefont
  {Hahn}},\ }\href {https://arxiv.org/abs/2305.07645} {\enquote {\bibinfo
  {title} {The foliage partition: An easy-to-compute {LC}-invariant for graph
  states},}\ } (\bibinfo {year} {2023}),\ \Eprint
  {http://arxiv.org/abs/2305.07645} {arXiv:2305.07645 [quant-ph]} \BibitemShut
  {NoStop}%
\bibitem [{\citenamefont {Claudet}\ and\ \citenamefont
  {Perdrix}(2024{\natexlab{b}})}]{claudet2024covering}%
  \BibitemOpen
  \bibfield  {author} {\bibinfo {author} {\bibfnamefont {N.}~\bibnamefont
  {Claudet}}\ and\ \bibinfo {author} {\bibfnamefont {S.}~\bibnamefont
  {Perdrix}},\ }\href@noop {} {\enquote {\bibinfo {title} {Covering a graph
  with minimal local sets},}\ } (\bibinfo {year} {2024}{\natexlab{b}}),\
  \Eprint {http://arxiv.org/abs/2402.10678} {arXiv:2402.10678} \BibitemShut
  {NoStop}%
\bibitem [{\citenamefont {Greenberger}\ \emph {et~al.}(2007)\citenamefont
  {Greenberger}, \citenamefont {Horne},\ and\ \citenamefont
  {Zeilinger}}]{greenberger2007going}%
  \BibitemOpen
  \bibfield  {author} {\bibinfo {author} {\bibfnamefont {D.~M.}\ \bibnamefont
  {Greenberger}}, \bibinfo {author} {\bibfnamefont {M.~A.}\ \bibnamefont
  {Horne}}, \ and\ \bibinfo {author} {\bibfnamefont {A.}~\bibnamefont
  {Zeilinger}},\ }\href@noop {} {\enquote {\bibinfo {title} {Going beyond
  {B}ell's theorem},}\ } (\bibinfo {year} {2007}),\ \Eprint
  {http://arxiv.org/abs/0712.0921} {arXiv:0712.0921} \BibitemShut {NoStop}%
\bibitem [{\citenamefont {Hajdu{\v{s}}ek}\ and\ \citenamefont
  {Murao}(2013)}]{Entanglement_GraphStates}%
  \BibitemOpen
  \bibfield  {author} {\bibinfo {author} {\bibfnamefont {M.}~\bibnamefont
  {Hajdu{\v{s}}ek}}\ and\ \bibinfo {author} {\bibfnamefont {M.}~\bibnamefont
  {Murao}},\ }\href {\doibase 10.1088/1367-2630/15/1/013039} {\bibfield
  {journal} {\bibinfo  {journal} {New Journal of Physics}\ }\textbf {\bibinfo
  {volume} {15}},\ \bibinfo {pages} {013039} (\bibinfo {year}
  {2013})}\BibitemShut {NoStop}%
\bibitem [{\citenamefont {Hajdu{\v{s}}ek}\ and\ \citenamefont
  {Murao}(2014)}]{HajdusekMurao_conference}%
  \BibitemOpen
  \bibfield  {author} {\bibinfo {author} {\bibfnamefont {M.}~\bibnamefont
  {Hajdu{\v{s}}ek}}\ and\ \bibinfo {author} {\bibfnamefont {M.}~\bibnamefont
  {Murao}},\ }\href@noop {} {\bibfield  {journal} {\bibinfo  {journal} {AIP
  Conference Proceedings}\ }\textbf {\bibinfo {volume} {1633}} (\bibinfo {year}
  {2014})}\BibitemShut {NoStop}%
\bibitem [{\citenamefont {Bouchet}(1987)}]{doi:10.1137/0608028}%
  \BibitemOpen
  \bibfield  {author} {\bibinfo {author} {\bibfnamefont {A.}~\bibnamefont
  {Bouchet}},\ }\href {\doibase 10.1137/0608028} {\bibfield  {journal}
  {\bibinfo  {journal} {SIAM Journal on Algebraic Discrete Methods}\ }\textbf
  {\bibinfo {volume} {8}},\ \bibinfo {pages} {323} (\bibinfo {year} {1987})},\
  \Eprint {http://arxiv.org/abs/https://doi.org/10.1137/0608028}
  {https://doi.org/10.1137/0608028} \BibitemShut {NoStop}%
\bibitem [{\citenamefont {Kruszynska}\ and\ \citenamefont
  {Kraus}(2009)}]{Kruszynska_2009}%
  \BibitemOpen
  \bibfield  {author} {\bibinfo {author} {\bibfnamefont {C.}~\bibnamefont
  {Kruszynska}}\ and\ \bibinfo {author} {\bibfnamefont {B.}~\bibnamefont
  {Kraus}},\ }\href {\doibase 10.1103/PhysRevA.79.052304} {\bibfield  {journal}
  {\bibinfo  {journal} {Phys.\ Rev.\ A}\ }\textbf {\bibinfo {volume} {79}},\
  \bibinfo {pages} {052304} (\bibinfo {year} {2009})}\BibitemShut {NoStop}%
\bibitem [{\citenamefont {Qu}\ \emph {et~al.}(2013)\citenamefont {Qu},
  \citenamefont {Wang}, \citenamefont {Li},\ and\ \citenamefont
  {Bao}}]{Qu_2013}%
  \BibitemOpen
  \bibfield  {author} {\bibinfo {author} {\bibfnamefont {R.}~\bibnamefont
  {Qu}}, \bibinfo {author} {\bibfnamefont {J.}~\bibnamefont {Wang}}, \bibinfo
  {author} {\bibfnamefont {Z.-s.}\ \bibnamefont {Li}}, \ and\ \bibinfo {author}
  {\bibfnamefont {Y.-r.}\ \bibnamefont {Bao}},\ }\href {\doibase
  10.1103/PhysRevA.87.022311} {\bibfield  {journal} {\bibinfo  {journal}
  {Phys.\ Rev.\ A}\ }\textbf {\bibinfo {volume} {87}},\ \bibinfo {pages}
  {022311} (\bibinfo {year} {2013})}\BibitemShut {NoStop}%
\bibitem [{\citenamefont {Rossi}\ \emph {et~al.}(2013)\citenamefont {Rossi},
  \citenamefont {Huber}, \citenamefont {Bru{\ss}},\ and\ \citenamefont
  {Macchiavello}}]{Hypergraphs_2013}%
  \BibitemOpen
  \bibfield  {author} {\bibinfo {author} {\bibfnamefont {M.}~\bibnamefont
  {Rossi}}, \bibinfo {author} {\bibfnamefont {M.}~\bibnamefont {Huber}},
  \bibinfo {author} {\bibfnamefont {D.}~\bibnamefont {Bru{\ss}}}, \ and\
  \bibinfo {author} {\bibfnamefont {C.}~\bibnamefont {Macchiavello}},\ }\href
  {\doibase 10.1088/1367-2630/15/11/113022} {\bibfield  {journal} {\bibinfo
  {journal} {New Journal of Physics}\ }\textbf {\bibinfo {volume} {15}},\
  \bibinfo {pages} {113022} (\bibinfo {year} {2013})}\BibitemShut {NoStop}%
\bibitem [{\citenamefont {G{\"u}hne}\ \emph {et~al.}(2014)\citenamefont
  {G{\"u}hne}, \citenamefont {Cuquet}, \citenamefont {Steinhoff}, \citenamefont
  {Moroder}, \citenamefont {Rossi}, \citenamefont {Bru{\ss}}, \citenamefont
  {Kraus},\ and\ \citenamefont {Macchiavello}}]{Guehne_2014}%
  \BibitemOpen
  \bibfield  {author} {\bibinfo {author} {\bibfnamefont {O.}~\bibnamefont
  {G{\"u}hne}}, \bibinfo {author} {\bibfnamefont {M.}~\bibnamefont {Cuquet}},
  \bibinfo {author} {\bibfnamefont {F.~E.~S.}\ \bibnamefont {Steinhoff}},
  \bibinfo {author} {\bibfnamefont {T.}~\bibnamefont {Moroder}}, \bibinfo
  {author} {\bibfnamefont {M.}~\bibnamefont {Rossi}}, \bibinfo {author}
  {\bibfnamefont {D.}~\bibnamefont {Bru{\ss}}}, \bibinfo {author}
  {\bibfnamefont {B.}~\bibnamefont {Kraus}}, \ and\ \bibinfo {author}
  {\bibfnamefont {C.}~\bibnamefont {Macchiavello}},\ }\href {\doibase
  10.1088/1751-8113/47/33/335303} {\bibfield  {journal} {\bibinfo  {journal}
  {J. Phys. A Math. Theor.}\ }\textbf {\bibinfo {volume} {47}},\ \bibinfo
  {pages} {335303} (\bibinfo {year} {2014})}\BibitemShut {NoStop}%
\bibitem [{\citenamefont {Gachechiladze}(2019)}]{Mari_2019_Dissertation}%
  \BibitemOpen
  \bibfield  {author} {\bibinfo {author} {\bibfnamefont {M.}~\bibnamefont
  {Gachechiladze}},\ }\href {\doibase 10.25819/ubsi/493} {\enquote {\bibinfo
  {title} {Quantum hypergraph states and the theory of multiparticle
  entanglement},}\ } (\bibinfo {year} {2019})\BibitemShut {NoStop}%
\bibitem [{\citenamefont {Zakaryan}\ \emph {et~al.}(2024)\citenamefont
  {Zakaryan}, \citenamefont {Revis},\ and\ \citenamefont
  {Raissi}}]{zakaryan2024nonsymmetricghzstatesweighted}%
  \BibitemOpen
  \bibfield  {author} {\bibinfo {author} {\bibfnamefont {H.}~\bibnamefont
  {Zakaryan}}, \bibinfo {author} {\bibfnamefont {K.-R.}\ \bibnamefont {Revis}},
  \ and\ \bibinfo {author} {\bibfnamefont {Z.}~\bibnamefont {Raissi}},\ }\href
  {\doibase 10.48550/arXiv.2408.02740} {\enquote {\bibinfo {title}
  {Non-symmetric {GHZ} states; weighted hypergraph and controlled-unitary graph
  representations},}\ } (\bibinfo {year} {2024}),\ \Eprint
  {http://arxiv.org/abs/2408.02740} {arXiv:2408.02740} \BibitemShut {NoStop}%
\bibitem [{\citenamefont {Gühne}\ \emph {et~al.}(2014)\citenamefont {Gühne},
  \citenamefont {Cuquet}, \citenamefont {Steinhoff}, \citenamefont {Moroder},
  \citenamefont {Rossi}, \citenamefont {Bruß}, \citenamefont {Kraus},\ and\
  \citenamefont {Macchiavello}}]{G_hne_2014}%
  \BibitemOpen
  \bibfield  {author} {\bibinfo {author} {\bibfnamefont {O.}~\bibnamefont
  {Gühne}}, \bibinfo {author} {\bibfnamefont {M.}~\bibnamefont {Cuquet}},
  \bibinfo {author} {\bibfnamefont {F.~E.~S.}\ \bibnamefont {Steinhoff}},
  \bibinfo {author} {\bibfnamefont {T.}~\bibnamefont {Moroder}}, \bibinfo
  {author} {\bibfnamefont {M.}~\bibnamefont {Rossi}}, \bibinfo {author}
  {\bibfnamefont {D.}~\bibnamefont {Bruß}}, \bibinfo {author} {\bibfnamefont
  {B.}~\bibnamefont {Kraus}}, \ and\ \bibinfo {author} {\bibfnamefont
  {C.}~\bibnamefont {Macchiavello}},\ }\href {\doibase
  10.1088/1751-8113/47/33/335303} {\bibfield  {journal} {\bibinfo  {journal}
  {Journal of Physics A: Mathematical and Theoretical}\ }\textbf {\bibinfo
  {volume} {47}},\ \bibinfo {pages} {335303} (\bibinfo {year}
  {2014})}\BibitemShut {NoStop}%
\bibitem [{\citenamefont {Danielsen}\ and\ \citenamefont
  {Parker}(2008)}]{Danielsen_2008}%
  \BibitemOpen
  \bibfield  {author} {\bibinfo {author} {\bibfnamefont {L.~E.}\ \bibnamefont
  {Danielsen}}\ and\ \bibinfo {author} {\bibfnamefont {M.~G.}\ \bibnamefont
  {Parker}},\ }\href {\doibase 10.1007/s10623-008-9190-x} {\bibfield  {journal}
  {\bibinfo  {journal} {Designs, Codes and Cryptography}\ }\textbf {\bibinfo
  {volume} {49}},\ \bibinfo {pages} {161} (\bibinfo {year} {2008})}\BibitemShut
  {NoStop}%
\bibitem [{dat()}]{data_graph_states}%
  \BibitemOpen
  \href@noop {} {\enquote {\bibinfo {title} {Data base of {LC}-orbits of graph
  states},}\ }\bibinfo {howpublished}
  {\url{https://www.ii.uib.no/~larsed/entanglement/}},\ \bibinfo {note}
  {accessed: August 2024}\BibitemShut {NoStop}%
\bibitem [{\citenamefont {Gross}\ and\ \citenamefont {Van~den
  Nest}(2008)}]{10.5555/2011763.2011766}%
  \BibitemOpen
  \bibfield  {author} {\bibinfo {author} {\bibfnamefont {D.}~\bibnamefont
  {Gross}}\ and\ \bibinfo {author} {\bibfnamefont {M.}~\bibnamefont {Van~den
  Nest}},\ }\href@noop {} {\bibfield  {journal} {\bibinfo  {journal} {Quantum
  Info. Comput.}\ }\textbf {\bibinfo {volume} {8}},\ \bibinfo {pages}
  {263–281} (\bibinfo {year} {2008})}\BibitemShut {NoStop}%
\end{thebibliography}%

\newpage
\begin{figure*}
    \centering
    \includegraphics[width=0.99\linewidth]{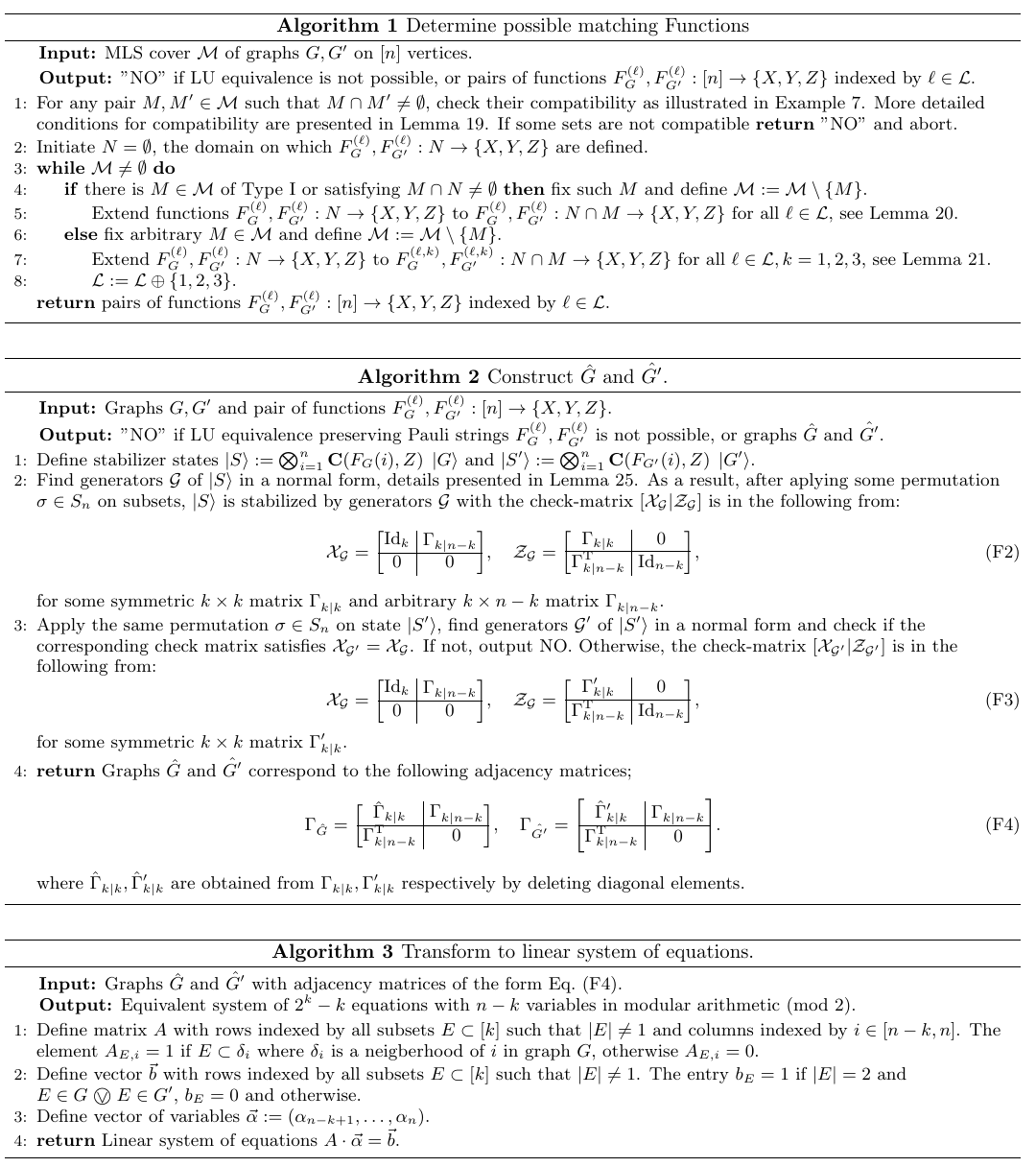}
    \caption{Pseudocode for three subroutines used in Algorithm 4 to verify LU equivalence between graph states $G$ and $G'$. They are related to three reduction steps graphica presented in \cref{fig:algo_outline}. 
   }
    \label{fig:algo_subroutines}
\end{figure*}

\begin{figure*}
    \centering
    \includegraphics[width=0.99\linewidth]{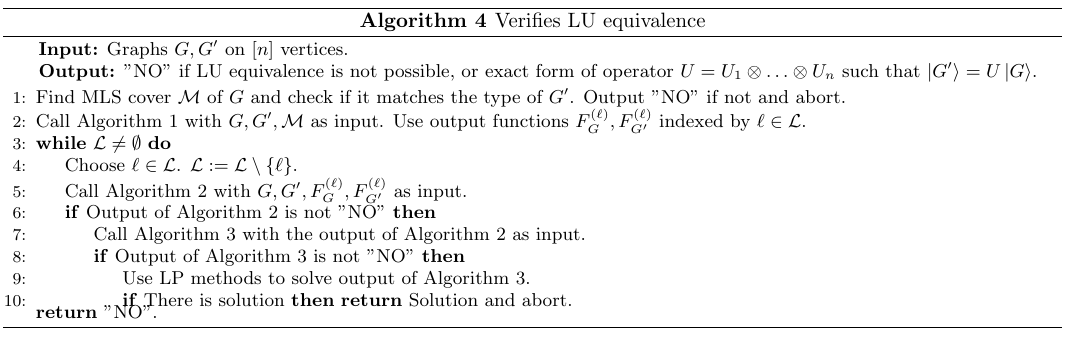}
    \caption{Pseudocode for an algorithm to verify LU equivalence between graph states $G$ and $G'$. Note that the algorithm for finding the MLS cover, along with the three algorithms presented in \cref{fig:algo_subroutines} are used as subroutines.}
    \label{fig:algo_main}
\end{figure*}

\end{document}